\newtheorem{theorem}{Theorem}
\newtheorem{lemma}[theorem]{Lemma}
\newtheorem{corollary}[theorem]{Corollary}
\newtheorem{conjecture}[theorem]{Conjecture}
\theoremstyle{definition}
\newtheorem{definition}[theorem]{Definition}
\newtheorem{remark}[theorem]{Remark}
\newtheorem{example}[theorem]{Example}
\tikzset{every state/.style={minimum size=1pt}}
\Crefname{assumption}{Assumption}{Assumptions}
\Crefname{algocfline}{Algorithm}{Algorithms}
\crefname{algocfline}{Algorithm}{Algorithms}
\crefname{algocf}{Algorithm}{Algorithms}
\Crefname{algocf}{Algorithm}{Algorithms}
\crefname{conjecture}{Conjecture}{Conjectures}
\Crefname{conjecture}{Conjecture}{Conjectures}
\newcommand{\klogo}{%
\begin{tikzpicture}[scale=0.2,line/.style={draw, line width=0.2pt, line cap=round, line join=round}]
\coordinate (A00) at (0,0);
\coordinate (A01) at (0,1);
\coordinate (A10) at (1,0);
\coordinate (B10) at (1,0.2);
\coordinate (B01) at (0.2,1);

\coordinate (C01) at (0.4,0.7);
\coordinate (C10) at (0.7,0.4);
\coordinate (C12) at (0.4,1.2);
\coordinate (C21) at (1.2, 0.4);
\coordinate (C22) at (1.2, 1.2);

\coordinate (D00) at (C10);
\coordinate (D01) at (0.8,0.5);
\coordinate (D10) at (0.8,0.3);

\coordinate (E01) at (0.3,0.7);
\coordinate (E10) at (0.5,0.7);

\draw[line] (B01) -- (A01) -- (A00) -- (A10) -- (B10);
\draw[line] (C01) -- (C12) -- (C22) -- (C21) -- (C10);

\draw[line] (D01) -- (D00) -- (D10);
\draw[line] (E01) -- (E10);

\end{tikzpicture}%
}
\newcommand\mathgr[1]{\tokcycle
  {\addcytoks{##1}}
  {\processtoks{##1}}
  {\ifcsname up\expandafter\@gobble\string##1\endcsname
   \addcytoks[1]{\csname up\expandafter\@gobble\string##1\endcsname}%
    \else\addcytoks{##1}\fi}
  {\addcytoks{##1}}{#1}%
  \expandafter\mathrm\expandafter{\the\cytoks}%
}
\NewDocumentEnvironment{proofof}{ m O{appendix} }{
    \ifcsname #1\endcsname
        \def\isInsideRestatedTheorem{1}
        \csname #1\endcsname*
    \fi
    \begin{proof}[Proof of {\cref{#1}} as stated on page {\pageref{#1}}]
        \phantomsection
        \label{#1:proof}
}{
        \ifthenelse{\equal{#2}{appendix}}{
        \marginpar{\vspace{-2em}\texttt{\small{\hyperref[#1]{$\triangleright$ Back to p.\pageref{#1}}}}}
        }{}
    \end{proof}
}
\NewDocumentCommand{\proofref}{ m }{
    \IfRefUndefinedExpandable{#1:proof}{}{
        \ifdefined\isInsideRestatedTheorem
        \else
            \marginpar{\vspace{0.6em}\texttt{\small{\hyperref[#1:proof]{$\triangleright$ Proven p.\pageref{#1:proof}}}}}
        \fi
    }
}
\newcommand{\Grbs}{\kl{Gröbner bases}}
\newcommand{\Grb}{\kl{Gröbner basis}}
\NewDocumentCommand{\set}{ m }{\{ #1 \}}
\NewDocumentCommand{\setof}{ m m }{\{ #1 \mid #2 \}}
\NewDocumentCommand{\seqof}{ m O{n \in \N} }{\left( #1 \right)_{#2}}
\NewDocumentCommand{\defined}{ }{\triangleq}
\newcommand{\subfin}{\subset_{\text{fin}}}
\newcommand{\subseteqfin}{\subseteq_{\text{fin}}}
\NewDocumentCommand{\EXPTIME}{}{\ensuremath{\mathsf{EXPTIME}}}
\NewDocumentCommand{\range}{ O{1} m }{[#1, #2]}
\newcommand{\tobij}{\stackrel{\simeq}{\longrightarrow}}
\NewDocumentCommand{\NewDocumentOrdering}{ m m m }{
    \expandafter\newcommand\csname #1leq\endcsname{
        \mathrel{\kl[#1]{#2}}
    }
    \expandafter\newcommand\csname #1lt\endcsname{
        \mathrel{\kl[#1]{#3}}
    }
    \knowledge{#1}{notion}
}
\NewDocumentCommand{\upset}{ O{} m }{{\uparrow_{#1} #2}}
\NewDocumentCommand{\dwset}{ O{} m }{{\downarrow_{#1} #2}}
\NewDocumentCommand{\factorial}{ O{} m }{
    \if\relax\detokenize{#1}\relax
        #2!
    \else
        (#2)!
    \fi
}
\newcommand{\D}{\mathcal{D}}
\newcommand{\G}{\mathcal{G}}
\newcommand{\Q}{\mathbb{Q}}
\newcommand{\Z}{\mathbb{Z}}
\newcommand{\N}{\mathbb{N}}
\newcommand{\K}{\mathbb{K}}
\newcommand{\X}{\mathcal{X}}
\newcommand{\Y}{\mathcal{Y}}
\newcommand{\T}{\mathcal{T}}
\newcommand{\ftwo}{\mathbb{F}_2}
\newcommand{\calZ}{\mathcal{Z}}
\newcommand{\calH}{\mathcal{H}}
\newcommand{\calM}{\mathcal{M}}
\newcommand{\poly}[2]{#1[#2]}
\newcommand{\mon}[2][]{\mathop{\kl[\mon]{\mathsf{Mon}_{#1}}}(#2)}
\knowledge{\mon}{notion}
\NewDocumentCommand{\monelt}{ O{m} }{\mathfrak{#1}}
\NewDocumentCommand{\divleq}{}{
  \mathrel{\kl[\divleq]{\sqsubseteq^{\mathrm{div}}}}
}
\knowledge{\divleq}{notion}
\NewDocumentCommand{\gdivleq}{ O{\group} }{
  \mathrel{\kl[\gdivleq]{\sqsubseteq^{\mathrm{div}}_{#1}}}
}
\knowledge{\gdivleq}{notion}
\newcommand{\lc}[1][]{\mathop{\kl[\lc]{\mathsf{LC}_{#1}}}}
\knowledge{\lc}{notion}
\newcommand{\lt}[1][]{\mathop{\kl[\lt]{\mathsf{LT}_{#1}}}}
\knowledge{\lt}{notion}
\newcommand{\lm}[1][]{\mathop{\kl[\lm]{\mathsf{LM}_{#1}}}}
\knowledge{\lm}{notion}
\newcommand{\cm}[1][]{\mathop{\kl[\cm]{\mathsf{CM}_{#1}}}}
\knowledge{\cm}{notion}
\newcommand{\dom}{\mathop{\kl[\dom]{\mathsf{dom}}}}
\knowledge{\dom}{notion}
\newcommand{\lmdec}[1][]{\mathop{\kl[\lmdec]{\mathsf{LM}_{#1}}}}
\knowledge{\lmdec}{notion}
\newcommand{\domdec}{\mathop{\kl[\domdec]{\mathsf{dom}}}}
\knowledge{\domdec}{notion}
\newcommand{\lcm}{\mathop{\kl[\lcm]{\mathsf{LCM}}}}
\knowledge{\lcm}{notion}
\NewDocumentCommand{\idl}{O{I}}{\mathcal{#1}}
\NewDocumentCommand{\IdlGen}{ m }{\withkl{\kl[\IdlGen]}{
  \mathopen{\cmdkl{\langle}}
  #1
\mathclose{\cmdkl{\rangle}}}}
\knowledge{\IdlGen}{notion}
\NewDocumentCommand{\EqIdlGen}{ O{\group} m }{\withkl{\kl[\EqIdlGen]}{
  \mathopen{\cmdkl{\langle}}
  #2 
  \mathclose{\cmdkl{\rangle}}_{#1}}}
\knowledge{\EqIdlGen}{notion}
\newcommand{\group}{\mathcal{G}}
\newcommand{\grp}[1][G]{\mathcal{#1}}
\newcommand{\actson}{\curvearrowright}
\NewDocumentCommand{\gelem}{ O{\pi} }{\mathgr{#1}}
\newcommand{\orbit}[2][]{\mathop{\kl[\orbit]{\mathsf{orbit}_{#1}}}(#2)}
\knowledge{\orbit}{notion}
\newcommand{\inc}[1]{\textsc{Inc}_{#1}}
\NewDocumentCommand{\Basis}{O{B}}{\mathcal{#1}}
\NewDocumentCommand{\LBasis}{O{B} m}{\mathcal{#1}_{#2}}
\NewDocumentCommand{\Indets}{}{\mathcal{X}}
\NewDocumentCommand{\IndetsCol}{}{\mathcal{Y}}
\NewDocumentCommand{\IndetsV}{ O{V} }{\mathcal{X}_{#1}}
\NewDocumentCommand{\idlZ}{}{\mathcal{Z}}
\newcommand{\ordfin}[1]{\kl[\ordfin]{\mathgr{#1}}}
\newcommand{\om}{\kl[\om]{\omega}}
\newcommand{\ordplus}{\mathbin{\kl[\ordplus]{+}}}
\newcommand{\ordtimes}{\mathbin{\kl[\ordtimes]{\times}}}
\knowledge{\ordplus}{notion}
\knowledge{\ordtimes}{notion}
\knowledge{\ordfin}{notion}
\knowledge{\om}{notion}
\NewDocumentCommand{\toeucl}{m}{
  \mathrel{\kl[\toeucl]{\to_{#1}}}
}
\knowledge{\toeucl}{notion}
\NewDocumentCommand{\CancelPoly}{ m m }{\kl[\CancelPoly]{\mathsf{C}_{#1,#2}}}
\knowledge{\CancelPoly}{notion}
\newcommand{\spoly}[2]{\mathop{\kl[\spoly]{\mathsf{S}}}(#1,#2)}
\knowledge{\spoly}{notion}
\newcommand{\spolyset}{\mathop{\kl[\spolyset]{\mathsf{SSet}}}}
\knowledge{\spolyset}{notion}
\newcommand{\rem}[3][]{\kl[\rem]{\mathsf{Rem}^{#1}_{#2}(#3)}}
\knowledge{\rem}{notion}
\NewDocumentCommand{\forgetCol}{}{\mathop{\kl[\forgetCol]{\mathsf{forget}}}}
\knowledge{\forgetCol}{notion}
\NewDocumentCommand{\colorWith}{ m }{\mathop{\kl[\colorWith]{\mathsf{col}_{#1}}}}
\knowledge{\colorWith}{notion}
\NewDocumentCommand{\freeColor}{ }{\mathop{\kl[\freeColor]{\mathsf{freecol}}}}
\knowledge{\freeColor}{notion}
\newcommand{\wenc}[1]{\withkl{\kl[\wenc]}{\mathopen{\cmdkl{\llbracket}} #1 \mathclose{\cmdkl{\rrbracket}}}}
\knowledge{\wenc}{notion}
\newcommand{\lin}{\textsc{Lin}}
\newcommand{\aut}[2][]{\mathsf{Aut}_{#1}{(#2)}}
\newcommand{\EqualityAtoms}{\kl[\EqualityAtoms]{\mathcal{A}}}
\knowledge{\EqualityAtoms}{notion}
\newcommand{\OrderAtoms}{\kl[\OrderAtoms]{\mathcal{Q}}}
\knowledge{\OrderAtoms}{notion}
\newcommand{\RadoAtoms}{\kl[\RadoAtoms]{\mathcal{G}}}
\knowledge{\RadoAtoms}{notion}
\newcommand{\BitVectorAtoms}{\kl[\BitVectorAtoms]{\mathcal{V}}}
\knowledge{\BitVectorAtoms}{notion}
\newcommand{\TreeAtoms}{\kl[\TreeAtoms]{\mathcal{T}}}
\knowledge{\TreeAtoms}{notion}
\newcommand{\lexGroupAction}{\mathbin{\kl[\lexGroupAction]{\otimes}}}
\knowledge{\lexGroupAction}{notion}
\times \grp [H]$
\title{Computability of Equivariant Gröbner bases}
\author{
Arka Ghosh\thanks{Université de Bordeaux}\thanks{University of Warsaw}
 \and
Aliaume Lopez\thanks{University of Warsaw}
}
\newcommand{\repositoryUrl}{\url{https://github.com/AliaumeL/AtomicHilbert}}
\begin{document}
\maketitle
\begin{abstract}
    Let \(\mathbb{K}\) be a field, \(\mathcal{X}\) be an infinite set (of indeterminates), and \(\mathcal{G}\) be a group acting on \(\mathcal{X}\). An ideal in the polynomial ring \(\mathbb{K}[\mathcal{X}]\) is called equivariant if it is invariant under the action of \(\mathcal{G}\). We show Gröbner bases for equivariant ideals are computable are hence the equivariant ideal membership is decidable when \(\mathcal{G}\) and \(\mathcal{X}\) satisfies the Hilbert's basis property, that is, when every equivariant ideal in \(\mathbb{K}[\mathcal{X}]\) is finitely generated. Moreover, we give a sufficient condition for the undecidability of the equivariant ideal membership problem. This condition is satisfied by the most common examples not satisfying the Hilbert's basis property.
    \paragraph{Keywords:}
    equivariant ideal, Hilbert basis, ideal membership problem, orbit finite, oligomorphic, well-quasi-ordering
\paragraph{Repository:} \repositoryUrl
\end{abstract}

\klogo\ This document uses \href{https://ctan.org/pkg/knowledge}{knowledge}:
\kl[kl-usage]{notion} points to its \intro[kl-usage]{definition}.

\section{Introduction}
\label{sec:intro}

\AP For a field $\K$ and a non-empty set $\Indets$ of indeterminates, we use
$\poly{\K}{\Indets}$ to denote the ring of polynomials with coefficients from $\K$
and indeterminates/variables from $\Indets$. A fundamental result in commutative
algebra is \intro{Hilbert's basis theorem}, stating that when $\Indets$ is finite,
every ideal in $\poly{\K}{\Indets}$ is finitely generated \cite{HILB1890}, where an
\kl{ideal} is a non-empty subset of $\poly{\K}{\Indets}$ that is closed under
addition and multiplication by elements of $\poly{\K}{\Indets}$. This property can
be rephrased as the fact that the set of polynomials $\poly{\K}{\Indets}$ is
\intro{Noetherian}. \kl{Hilbert's basis theorem} extends to the case where $\K$
is a ring that is itself \kl{Noetherian} \cite[Theorem 4.1]{Lang02}.

\AP A \Grb\ is a specific kind of generating set of a polynomial ideal
which allows easy checking of membership of a given polynomial in that ideal.
\kl{Gr\"{o}bner bases} were introduced by Buchberger who showed when $\Indets$ is
finite, every ideal in $\poly{\K}{\Indets}$ has a finite \kl{Gr\"{o}bner basis} and
that, for a given a set of polynomials in $\poly{\K}{\Indets}$, one can compute a
finite \kl{Gröbner basis} of the ideal generated by them via the so-called
\intro{Buchberger algorithm} \cite{BUCH76}. The
existence and computability of \Grbs\ implies the decidability of the
\kl{ideal membership problem}: given a polynomial $f$ and set of polynomial
$H$, decide whether $f$ is in the ideal generated by $H$. The theory of
\kl{Gr\"{o}bner bases} has applications in very diverse areas of computer
science, including integer programming \cite{Sturmfels96}, algebraic proof
systems \cite{algProof}, geometric reasoning \cite{Cox2015chGeom}, fixed
parameter tractability \cite{ACDM22}, program analysis \cite{SSM04} and
constraint satisfaction problems \cite{Mas21}.
In automata theory it has been used for deciding zeroness of polynomial
automata \cite{BEDUSHWO17}, reachability in symmetric Petri nets \cite{MAME82},
equivalence for string-to-string transducers \cite{HONKALA00} and equivalence
of polynomial differential equations \cite{CLEMENTE24}. 

\AP There has been a growing interest in the last few years for computational
models that are manipulating infinite data structures in a finite way, for
instance an automaton reading words on the infinite alphabet $\N$, while
maintaining a finite number of states. While this idea can be traced back to
the 90s with the notion of register automata \cite{KAFR94}, it has been revived
in with the development of the theory of \emph{orbit finite sets}. In this
setting, one would like to consider an infinite set of variables $\Indets$. As an
example, let us consider the set $\Indets$ of variables $x_i$ for $i \in \N$, and
the \kl{ideal} $\idlZ$ generated by the set $\setof{x_i}{i \in \N}$. It is
clear that $\idlZ$ is not finitely generated, and we conclude that the
\kl{Hilbert's basis theorem} (and a fortiori, the \kl{Gr\"{o}bner basis}
theory) does not extend to the case of infinite sets of indeterminates.

\AP However, in the applications mentionned above, the infinite set of
variables (data) comes with an extra structure: the behaviour of the considered
systems are invariant under the action of a group $\group$ on $\Indets$. The action
of this $\group$ on $\Indets$ naturally induces an action on $\poly{\K}{\Indets}$, by
renaming the variables. The typical example is the group of all permutations of
$\Indets$, which corresponds to seeing $\Indets$ as a set of \emph{indistinguishable}
names: one is not interested in the ideal $\idlZ$ generated by the set
$\setof{x_i}{i \in \N}$, but rather in the \kl{equivariant ideal} generated by
the set $\setof{x_i}{i \in \N}$, which is the smallest ideal that contains it
and is invariant under the action of $\group$. In this case, this ideal is
finitely generated by a single indeterminate, e.g. $x_1$. Please note that
equivariance does not imply finite generation in general: for instance, the
ideal $\idlZ$ is not finitely generated as an equivariant ideal with respect to
the trivial group.

\subsection{Related Research}
The above-mentioned results were rediscovered in \cite{AH07,AH08,HKL18}. In
\cite{HS12} these results were used to prove the Independent Set Conjecture in
algebraic statistics. In \cite{HS12}, the authors also showed that one can even
take a submonoid $\calM$ of $\inc{<}$ and prove existence and computability of
finite Gr\"{o}bner basis assuming that $\gdivleq[\calM]$ is a
well-partial-order. These results were significantly generalised in
\cite{GHOLAS24}, which gives a necessary and a sufficient condition on the
actions $\group\actson\Indets$ for the \kl{Equivariant Hilbert basis property}
to hold \cite[Theorems 11 and 12, Lemma 13]{GHOLAS24}. The necessary and
sufficient conditions are equivalent up to a well-known conjecture by Pouzet
\cite[Problems 12]{POUZ24}. But to obtain decision procedures, one still lacks
a generalisation of \kl{Buchberger's algorithm} to the equivariant case, except
under artificial extra assumptions \cite[Section 6]{GHOLAS24}. Overall, a
general understanding of the decidability of the \kl{equivariant ideal
membership problem} is still missing, and \emph{a fortiori}, a generalisation
of \kl{Buchberger's algorithm} to the equivariant case is still an open
problem.

Our results are part of a larger research direction that aims at establishing
an algorithmic theory of computation with orbit-finite sets. For instance,
\cite{BFKM24} studies equivariant subspaces of vector spaces generated by
orbit-finite sets, \cite{GHL22,GHL25} study solvability of orbit-finite systems
of linear equations and inequalities, and \cite{BFKM24,GHL22,Prz23} study duals
of vector spaces generated by orbit-finite sets.

\subsection{Contributions.}
\AP In this paper, we bridge the gap between the
theoretical understanding of the \intro{equivariant Hilbert basis property}
 \cite[Property 4]{GHOLAS24}, and the computational aspects of \kl{equivariant
ideals}, by showing that under mild assumptions on the group action, one can
compute an \kl{equivariant Gröbner basis} of an \kl{equivariant ideal}, hence,
that one can decide the \kl{equivariant ideal membership problem}. In order to
compute such sets, we will need to introduce some classical \kl{computability
assumptions} on the group action $\group \actson \Indets$, and on the set of
indeterminates $\Indets$. These will be defined in
\cref{sec:preliminaries}, but informally, we assume
that one can compute representatives of the orbits of elements under the action
of $\group$ (this is called \kl{effective oligomorphism}), and that one has
access to a total ordering on $\Indets$ that is computable, and
\kl(ord){compatible} with the action of $\group$. Please note that the ordering
on $\Indets$ is not required to be well-founded, and a typical example of our
computable assumptions would be the set $\Q$ of rationals, equipped with the
natural ordering $\leq$ and the group $\group$ would be the group of all
monotone bijections from $\Q$ to itself.

\AP Let us now focus on the mild semantic assumption that we will need to make
on the set of indeterminates $\Indets$ and the group $\group$, that will
guarantee the termination of our procedures. We refer to our preliminaries
(\cref{sec:preliminaries}) for a more detailed
discussion on these assumptions, but again informally, we ask that the set of
\kl{monomials} $\mon{\Indets}$ is well-behaved with respect to divisibility up
to the action of $\group$, which we write as the fact that $(\mon{\Indets},
\gdivleq)$ is a \kl{well-quasi-ordering} (\kl{WQO}). It is known from that this
is a necessary condition for the \kl{equivariant Hilbert basis property}
\cref{thm:equiv-hilbert-property}, and we will rely on a slightly stronger
condition, namely that $(\mon[Y]{\Indets}, \gdivleq)$ is a \kl{WQO}, whenever
$(Y, \leq)$ is one, which is conjectured to be equivalent to the first
condition. Beware that \cref{thm:equiv-hilbert-property,thm:compute-egb}
are
incomparable: the former does not talk about decidability, while the latter 
only considers \kl{equivariant ideals} that are already finitely presented, and we 
will show in
\cref{ex:non-wqo-undecidable} an example where \kl{equivariant
Gröbner bases} are computable, but the \kl{Hilbert basis property} fails.

\begin{theorem}[name={\cite[Theorem 11]{GHOLAS24}}]
  \label{thm:equiv-hilbert-property}
  Let $\Indets$ be a totally ordered set of indeterminates
  equipped with a group action $\group \actson \Indets$ that is 
  \kl(ord){compatible} with the ordering on $\Indets$.
  Then, $(\mon[\om]{\Indets}, \gdivleq)$ is a \kl{WQO}, if and only if 
  the \kl{equivariant Hilbert basis property} holds for $\poly{\K}{\Indets}$.
\end{theorem}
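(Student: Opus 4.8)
The statement is an equivalence, and I would prove the two implications separately, transposing to the equivariant setting the classical route from Dickson's lemma to \kl{Hilbert's basis theorem}. For the more elementary implication, ``\kl{equivariant Hilbert basis property} $\Rightarrow$ \kl{WQO}'', I would argue by contraposition: from a failure of the \kl{WQO} property, i.e.\ a \emph{bad sequence} $\monelt_0, \monelt_1, \dots$ in $(\mon[\om]{\Indets}, \gdivleq)$ with $\monelt_i \not\gdivleq \monelt_j$ for all $i < j$, I would build a strictly ascending chain of \kl{equivariant ideals} $\mathcal{I}_0 \subsetneq \mathcal{I}_1 \subsetneq \cdots$ in $\poly{\K}{\Indets}$. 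Concretely, $\mathcal{I}_k \defined \EqIdlGen{p_0, \dots, p_k}$ for polynomials $p_i$ obtained from the coloured monomials $\monelt_i$ using the structure of the $\om$-colours (in the simplest situations the $p_i$ are themselves monomials of $\poly{\K}{\Indets}$; this is the one place where the precise definition of $\mon[\om]{\Indets}$ enters). One then checks that $p_{k+1} \in \mathcal{I}_k$ would entail $\monelt_i \gdivleq \monelt_{k+1}$ for some $i \le k$, contradicting badness, so the chain is strict; and since the union of an ascending chain of \kl{equivariant ideals} is again one, the \kl{equivariant Hilbert basis property} — which is equivalent to the ascending chain condition on \kl{equivariant ideals} — fails.

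For the converse, ``\kl{WQO} $\Rightarrow$ \kl{equivariant Hilbert basis property}'', the plan is: fix an admissible monomial order $\monord$ on $\mon{\Indets}$ compatible with the action — built from the compatible total order on $\Indets$, e.g.\ a $\group$-\kl{reverse lexicographic} refinement of it — and, given an \kl{equivariant ideal} $\mathcal{I}$, consider its set of \kl{leading monomials} $\lm(\mathcal{I})$. Compatibility of $\monord$ with the action makes $\lm(\mathcal{I})$ an up-set of $(\mon{\Indets}, \gdivleq)$, so by the \kl{WQO} hypothesis it is generated, as an up-set, by finitely many monomials; I would then pick $g_1, \dots, g_n \in \mathcal{I}$ whose \kl{leading monomials} are those finitely many generators. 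It remains to show that $\{g_1, \dots, g_n\}$ is an \kl{equivariant Gröbner basis} of $\mathcal{I}$, hence generates it: every $f \in \mathcal{I}$ should reduce to $0$ by repeatedly cancelling its current \kl{leading term} against a scalar-times-monomial multiple of a group translate $\gelem \cdot g_i$, which is always possible because $\lm(f) \in \lm(\mathcal{I})$ is $\gdivleq$-above some $\lm(g_i)$.

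The main obstacle is the termination of this reduction. Unlike in the finite-variable case, the monomial order $\monord$ will in general fail to be well-founded once the action has infinite orbits — over $\Q$ with its monotone bijections, for instance, $x_0 \sqsupset x_{-1} \sqsupset x_{-2} \sqsupset \cdots$ — so the naive argument ``the \kl{leading monomials} strictly decrease, hence the process halts'' is unavailable. This is exactly where the \emph{coloured} WQO hypothesis is used rather than plain \kl{WQO} of $(\mon{\Indets}, \gdivleq)$: along a hypothetical non-terminating reduction one would attach to each monomial that gets produced some bookkeeping data — in essence the stage of the computation at which it was created, a natural number, i.e.\ an element of $\om$ — so that an infinite reduction gives rise to an infinite sequence in $(\mon[\om]{\Indets}, \gdivleq)$; its being a \kl{WQO} then forces two stages to become $\gdivleq$-comparable, which one checks is incompatible with the strict $\monord$-descent of the \kl{leading monomials} (or else lets one shorten the reduction, in a Nash--Williams-style minimal-derivation argument). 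Hence the reduction terminates, every $f \in \mathcal{I}$ lies in $\EqIdlGen{g_1, \dots, g_n}$, and $\mathcal{I}$ is finitely generated. I would expect essentially all the work to sit in this termination step — and within it, in choosing the bookkeeping so that it matches exactly the $\om$-colouring appearing in the statement — while everything else is a routine transcription of the classical finite-variable argument.
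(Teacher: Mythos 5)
This statement is imported from the literature: it is cited as \cite[Theorem~11]{GHOLAS24} and the present paper gives no proof of it, so there is no in-house argument to compare yours against. I can still assess your sketch on its own merits.

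Your first implication (failure of WQO $\Rightarrow$ strictly ascending chain of equivariant ideals $\Rightarrow$ failure of the Hilbert basis property) is sound, and the "simplest situation" is in fact the only one: $\om$ is $(\N,\leq)$, so $\mon[\om]{\Indets}$ is literally $\mon{\Indets}$, and the $\monelt_i$ in a bad sequence are already ordinary monomials of $\poly{\K}{\Indets}$; there is no "structure of the $\om$-colours" to unpack. Your second implication has the right skeleton (revlex ordering, leading monomials form an up-set, finitely many $\gdivleq$-generators, reduce), but the termination step — which you correctly flag as where all the difficulty lives — is where your argument does not hold up. You propose to tag each monomial appearing along a hypothetical infinite reduction with the stage $k\in\om$ at which it was produced and to read off an infinite sequence in $(\mon[\om]{\Indets},\gdivleq)$. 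This does not type-check: a stage-tagged monomial is a pair $(\monelt,k)$, not a function $\Indets\to\om$, so you do not land in $\mon[\om]{\Indets}$; and since $\mon[\om]{\Indets}=\mon{\Indets}$, there is no strengthened "coloured" hypothesis available in this theorem at all — the hypothesis is plain WQO of $(\mon{\Indets},\gdivleq)$.

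Even setting the typing issue aside, the contradiction you gesture at does not arise. From $\monelt_i \gdivleq \monelt_j$ (say $i<j$) you only get $\monelt_i \divleq \gelem\cdot\monelt_j$ for some $\gelem\in\group$, hence $\monelt_i \revlexleq \gelem\cdot\monelt_j$; this is perfectly compatible with $\monelt_j \revlexlt \monelt_i$, because the group element can push $\monelt_j$ arbitrarily high in revlex (for instance over $\OrderAtoms$ one has $x_0 \revlexlt x_1$ and an order automorphism sending $x_0$ to $x_1$). So WQO plus strict $\monord$-descent of the leading monomials is not by itself contradictory, and the naive minimal-derivation variant you mention inherits the same problem. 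The actual lever that controls termination in this setting — used repeatedly elsewhere in the paper, e.g.\ in \cref{lem:chm} — is the domain containment built into $\pmonleq$: if each reduction step is chosen so that $\dom(r)\subseteq\dom(p)$, then the whole run lives in the finitely many indeterminates of the original polynomial, on which $\revlexlt$ is well-founded. Arranging that such a reduction step always exists (i.e.\ that one can choose $\gelem\cdot g_i$ without introducing new indeterminates) is the substantive point, and it is where the compatibility of the action with the total order and the elimination property of revlex ($\dom(p)\subseteq\dwset{\dom(\lm(p))}$) must be exploited. Where "coloured" hypotheses $\mon[Y]{\Indets}$ with general WQO $Y$ do enter the picture is in \cref{thm:compute-egb} and \cref{sec:equivariant-grobner-basis}, and there the colours track domain membership (the $\IndetsCol=\Indets\ordplus\Indets$ construction), not stages of a computation — a different mechanism from the one you propose.
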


\begin{theorem}[name={Equivariant Gröbner Basis},restate=thm:compute-equiv-gb]
  \label{thm:compute-egb}
  Let $\Indets$ be a totally ordered set of indeterminates
  equipped with a group action $\group \actson \Indets$, under our \kl{computability assumptions}.
  If $(\mon[Y]{\Indets}, \gdivleq)$ is a \kl{WQO} for every 
  \kl{well-quasi-ordered} set $(Y,\leq)$, then one can
  compute an \kl{equivariant Gröbner bases} of \kl{equivariant ideals}.
\end{theorem}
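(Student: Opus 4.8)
The plan is to run an equivariant version of \kl{Buchberger's algorithm}. First I would fix a total monomial ordering $\monord$ on $\mon{\Indets}$ built from the computable \kl(ord){compatible} order on $\Indets$ (say a graded one, with ties broken lexicographically), so that $\group$ acts on monomials by order automorphisms and \kl{leading monomials} behave equivariantly. Given an \kl{equivariant ideal} $\EqIdlGen{H}$ with $H$ \kl{orbit-finite} (finitely many orbit representatives of \kl{polynomials}), the algorithm maintains an orbit-finite $G$ with $H \subseteq G \subseteq \EqIdlGen{H}$ and repeats: enumerate, \emph{up to the action of $\group$}, all \kl{$S$-polynomials} $\spoly{g}{\gelem \cdot g'}$ with $g,g' \in G$ and $\gelem \in \group$ whose leading monomials overlap; compute a \kl{remainder} of each modulo $\group \cdot G$; if some remainder $r$ is nonzero, replace $G$ by $G \cup \orbit[\group]{r}$ and restart; stop once every $S$-polynomial reduces to $0$. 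On halting, $G$ should be an \kl{equivariant Gröbner basis}, so that, since membership in $\EqIdlGen{H}$ is equivalent to reduction to $0$ modulo $\group \cdot G$, the \kl{equivariant ideal membership problem} is decidable. A single iteration is effective thanks to \kl{effective oligomorphism}: the overlaps of $\mathrm{LM}(g)$ with $\gelem \cdot \mathrm{LM}(g')$ are classified by the finitely many, computable orbits of the relevant tuples of indeterminates, so only finitely many $S$-polynomials arise and each is written down explicitly; and a reduction step --- finding $g \in G$, $\gelem$, and a monomial of the current polynomial divisible by $\gelem \cdot \mathrm{LM}(g)$ --- is likewise a finite search, hence remainders are computable provided reduction terminates.

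\textbf{Termination of reduction.} This will be the delicate point, because the order on $\Indets$ need not be well-founded (think of $\Q$), so $\monord$ is not a well-order and naive reduction could loop forever. The fix I would use is to measure progress not by $\monord$ but by a $\group$-\emph{invariant} monomial pre-order $\pmonleq$ --- compatible with multiplication, with $\gelem \cdot \monelt \pmoneq \monelt$, and refining \kl{divisibility} up to $\group$ in the sense that $\monelt \gdivleq \monelt[n]$ holding strictly (without $\monelt[n] \gdivleq \monelt$) forces $\monelt \pmonlt \monelt[n]$ --- so that the relevant leading invariant of a polynomial is its \kl{characteristic monomial}, determined only up to the action of $\group$. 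Such a pre-order can be built by ordering monomials first by degree and then by orbit, the finitely many orbits of each degree being orderable computably via \kl{effective oligomorphism}; it is well-founded \emph{exactly} because $(\mon{\Indets}, \gdivleq)$ is a \kl{WQO}. Then each reduction step strictly lowers the $\pmonleq$-leading class of the current polynomial, so reduction halts. The price is that leading data are pinned down only up to $\group$: I would have to redo the $S$-polynomial and remainder bookkeeping, and the equivariant Buchberger criterion (if all relevant $S$-polynomials reduce to $0$ then $G$ generates $\EqIdlGen{H}$ and the leading behaviour of $\group \cdot G$ accounts for that of the whole ideal), at this coarser level --- which is also where the distinction between a \kl{weak equivariant Gröbner basis} and the stronger notion shows up.

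\textbf{Termination of the outer loop.} Each enlargement of $G$ by $\orbit[\group]{r}$ with $r \neq 0$ a remainder adds a leading class that, by definition of a remainder, is not divisible up to $\group$ by any leading class already present; so the ``leading-class'' structure carried by $\group \cdot G$ strictly grows, and an infinite run would produce an infinite $\gdivleq$-bad sequence, contradicting the \kl{WQO} hypothesis. The reason the theorem assumes the \emph{stronger} statement --- $(\mon[Y]{\Indets}, \gdivleq)$ a \kl{WQO} for \emph{every} \kl{well-quasi-ordered} $(Y,\leq)$, not just for $Y = \om$ --- is, I expect, that the object whose ascending chain condition one actually needs is not a bare equivariant monomial ideal but a submodule-like structure attaching to each monomial a label from an auxiliary \kl{WQO} (coming from the orbit-finite presentation of $G$ together with the coefficient and degree data of the leading classes), whose Noetherianity is precisely $(\mon[Y]{\Indets}, \gdivleq)$ being a \kl{WQO} for the corresponding $Y$.

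\textbf{Main obstacle.} The hard part will be reconciling a usable notion of leading term with termination of reduction over a non-well-founded set of indeterminates: resolving it through the $\group$-invariant pre-order (equivalently, via \kl{characteristic monomials}) forces one to re-derive the whole $S$-polynomial/Buchberger apparatus with $\group$-orbits of monomials in place of monomials, while keeping every step effective through \kl{effective oligomorphism} and the \kl(ord){compatible} order on $\Indets$. The secondary difficulty is matching the exact strength of the hypothesis --- isolating the labelled structure whose ascending chain condition drives the outer loop, and verifying it is controlled by $(\mon[Y]{\Indets}, \gdivleq)$ for a \kl{well-quasi-ordered} $Y$ rather than by $(\mon{\Indets}, \gdivleq)$ alone.
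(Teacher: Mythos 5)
Your high-level framing (equivariant Buchberger, run it effectively via oligomorphicity, terminate the outer loop by a WQO argument, worry about termination of reduction because $\Indets$ is not well-founded) is correct and is indeed where the paper starts. But there are two genuine gaps in the proposal, both at points where the paper's construction is quite specific.

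\textbf{Termination of reduction.} You propose a total monomial ordering $\monord$ (graded lex, say) for defining \kl{leading monomials}, and then measure progress of reduction in a coarser $\group$-invariant pre-order on $\cm$-classes built as ``degree, then orbit''. But a single reduction step need \emph{not} decrease the reduced polynomial in such a pre-order: if $\lm(p)=x_1x_2$ and you reduce by $q$ with $\lm(q)=x_1$, the new leading monomial can be e.g.\ $x_3$, which is not below $x_1x_2$ in $\gdivleq$ (same degree, unrelated orbit), so the degree-then-orbit class does not drop and the argument stalls. The paper does something else: it uses specifically the \kl{reverse lexicographic} ordering, whose key property is $\dom(p)\subseteq\dwset{\dom(\lm(p))}$, and it \emph{restricts} the reduction relation $\toeucl{H}$ to steps that both decrease $\lm$ in revlex \emph{and} satisfy $\dom(r)\subseteq\dom(p)$ (no new indeterminates). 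The paired quasi-order $\pmonleq$ is then well-founded by a simple compactness argument (\cref{lem:chm}): along a decreasing chain the domains shrink, so after finitely many steps you are effectively working over finitely many indeterminates, where revlex is well-founded. Nothing $\group$-invariant is needed here, and indeed the paper's outer-loop WQO argument (\cref{lem:weakgb-termination}) is phrased in terms of \kl{characteristic monomials} (which encode both $\lm$ and $\dom$), not bare leading orbits, precisely because of that extra domain condition on reduction.

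\textbf{From weak to genuine equivariant Gröbner bases.} You correctly flag that the naive Buchberger saturation will only give a weaker property, and that ``this is where the distinction between a \kl{weak equivariant Gröbner basis} and the stronger notion shows up'' --- but you stop there. The paper's resolution is the colored-doubling trick, and without it the theorem does not follow from the saturation alone. Concretely: one passes to $\IndetsCol=\Indets\ordplus\Indets$, maps each generator via all finite colorings $\colorWith{V}$ into $\poly{\K}{\IndetsCol}$, runs \kl{weakgb} there, and forgets colors ($\kl{egb}=\forgetCol\circ\kl{weakgb}\circ\freeColor$). The reason this works is that for a fixed $V=\dom(p)$, the $V$-compatible polynomials in $\poly{\K}{\IndetsCol}$ isolate exactly those reductions of $p$ that stay inside $\dom(p)$, which is what the extra $\dom(q)\subseteq\dom(p)$ clause in the definition of an \kl{equivariant Gröbner basis} demands. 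This doubling is also precisely why the theorem's hypothesis is the \emph{labelled} WQO statement: the monomials over $\IndetsCol$ are the monomials over $\Indets$ with exponents in $\N\times\N$, so one needs $(\mon[\N\times\N]{\Indets},\gdivleq)$ to be a WQO (\cref{lem:colored-hypothesis-sat}) --- not the ``submodule with coefficient/degree labels'' structure you guessed.

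In short: the skeleton of your plan matches the paper, but the two load-bearing devices --- the revlex-plus-domain quasi-order for reduction termination, and the colored doubling for upgrading weak to genuine equivariant Gröbner bases --- are missing, and the $\group$-invariant pre-order you propose in their place does not make reduction terminate.
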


\AP To prove our \cref{thm:compute-egb}, we will first introduce a weaker
notion of \kl{weak equivariant Gröbner basis}, which characterises the results
obtained by naïvely adapting \kl{Buchberger's algorithm} to the equivariant
case. Then, we will show that under our \kl{computability assumptions}, one can
start from a finite set of generators $H$ of an \kl{equivariant ideal}, and
compute a well-chosen \kl{weak equivariant Gröbner basis}, which happens to be
an \kl{equivariant Gröbner basis} of the ideal generated by $H$. As a
consequence, we obtain effective representations of \kl{equivariant ideals},
over which one can check membership, inclusion, and compute the sum and
intersection of \kl{equivariant ideals}
(\cref{cor:equivariant-ideals-computations}).

\AP We then focus on providing undecidability results for the \kl{equivariant
ideal membership problem} in the case where our effective assumptions are
satisfied, but the \kl{well-quasi-ordering} condition is not. This aims at
illustrating the fact that our assumptions are close to optimal. One classical
way for a set of structures to not be \kl{well-quasi-ordered} (when labelled
using integers) is to have the ability to represent an \emph{infinite path} (a
formal definition will be given in
\cref{sec:undecidability}). We prove that
whenever one can (effectively) represent an infinite path in the set of
\kl{monomials} $\mon{\Indets}$, then the \kl{equivariant ideal membership
problem} is undecidable.

\begin{theorem}[name={Undecidability of Equivariant Ideal Membership},restate=thm:undecidable-paths]
  \label{thm:undecidable-paths}
  Let $\Indets$ be a totally ordered set of indeterminates
  equipped with a group action $\group \actson \Indets$, under our \kl{computability assumptions}.
  If $\Indets$ contain an \kl(of){infinite path}
  then the \kl{equivariant ideal membership problem} is undecidable.
\end{theorem}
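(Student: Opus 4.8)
The plan is to exhibit a computable many‑one reduction from a classical undecidable problem — the halting problem for Turing machines, or equivalently the derivability problem for semi‑Thue systems — to the \kl{equivariant ideal membership problem}, using the \kl(of){infinite path} inside $\mon{\Indets}$ as the line of cells along which machine configurations are laid out. I would carry this out through two reductions: first, from Turing machine halting to a \kl{monomial reachability problem} for an orbit‑finite equivariant \kl{monomial rewrite system}; second, from that reachability problem to \kl{equivariant ideal membership} via an equivariant binomial ideal.

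For the first reduction, I would unpack the definition of \kl(of){infinite path} to extract, under the \kl{computability assumptions}, an effectively presented family of monomials indexed by positions $0,1,2,\dots$ that behaves homogeneously under $\group$ — only finitely many orbits of local data and of relations between neighbouring positions — and along which $\group$ can shift but which it cannot collapse. This is exactly the scaffold needed to hold a Turing tape: a configuration $c$ (state, head position, finitely‑supported tape contents) becomes a \kl{word encoding} $\wenc{c}$, a monomial recording, for each relevant position, the symbol written there and whether the head sits there, over a fixed finite symbol alphabet lifted along the path. A Turing transition is a local rewrite of configurations, hence a rule $\wenc{c}\to\wenc{c'}$, and homogeneity of the path turns the set of all such rules into an orbit‑finite, equivariant, computable \kl{monomial rewrite system} $R$; the tape can always be extended because the path is infinite. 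I would then check that $c$ reaches $c'$ in the machine iff $\wenc{c}$ reaches $\wenc{c'}$ in $R$, so that, after standardising the machine to a single halting configuration $c_{\mathrm{halt}}$, halting from $c_0$ is precisely a monomial‑reachability instance.

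For the second reduction, given an orbit‑finite equivariant \kl{monomial rewrite system} $R$ and monomials $s,t$, I would form the finitely presented \kl{equivariant ideal} $\idl=\EqIdlGen{\setof{m-m'}{m\to m'\in R}}$ and show that $s-t\in\idl$ iff $t$ is reachable from $s$ under the \emph{symmetric} closure of $R$. The key point is that $\idl$ is generated by difference binomials: any element of $\idl$ is a finite $\poly{\K}{\Indets}$‑combination of finitely many $\group$‑translates of the generators, hence lies in a finitely generated polynomial subring, so the classical theory of binomial ideals applies locally and identifies membership of $s-t$ with the multiplicative, $\group$‑equivariant congruence on $\mon{\Indets}$ generated by $R$. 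To force this symmetric reachability to coincide with the directed reachability that encodes halting — $\idl$‑membership being blind to the orientation of a binomial — I would arrange that $\wenc{c}$ grows monotonically along the \kl(of){infinite path} at every step (for instance by simulating the machine in a reversible form with a monotone history track, so that no configuration is ever revisited); then the undirected configuration graph is a disjoint union of simple paths, each having its unique halting configuration as an endpoint, so $\wenc{c_0}$ and $\wenc{c_{\mathrm{halt}}}$ are congruent iff the machine halts from $c_0$. Combining the two reductions, halting from $c_0$ becomes $\wenc{c_0}-\wenc{c_{\mathrm{halt}}}\in\idl$, and $\idl$ together with these two monomials is computable from the machine.

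The step I expect to be the main obstacle is the faithfulness of the simulation. Passing to the \kl{equivariant ideal} throws in all $\group$‑translates of the rules, and ideal membership is insensitive to the orientation of rewriting, so the encoding must be tight enough that neither effect creates derivations absent from the machine's run. Rigidity against spurious $\group$‑translates should come from the state‑ and head‑markers, which pin down the unique translate of a rule that can act on a given $\wenc{c}$; rigidity against spurious backward steps should come from the monotone growth along the path. Verifying that the \kl(of){infinite path} genuinely supplies both — a non‑collapsing positional scaffold and room for unbounded monotone growth — is where the precise definition of \kl(of){infinite path} does the real work, and it is the technical heart of the proof.
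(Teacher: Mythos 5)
Your high-level architecture matches the paper's: reduce halting to a \kl{monomial reachability problem} via \kl{word encodings} along paths, then reduce monomial reachability to \kl{equivariant ideal membership} via a binomial ideal (the paper's \cref{lem:mon-rewrite-red-membership}, following \cite{MAME82,GHOLAS24}). Your choice of handling the symmetric nature of ideal membership — a monotone history track rather than the paper's reversible machine in the style of \cite{GAMAPASCZE22} — is a minor variation achieving the same end.

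However, there is a genuine gap around what you dismiss as ``the tape can always be extended because the path is infinite.'' The issue is not whether room exists but whether an \emph{orbit-finite} rewrite system can grow the tape while \emph{guaranteeing} that the indeterminates it introduces continue to form a \kl(of){finite path}. A rule that appends a fresh cell is applied up to the full $\group$-action, so nothing a priori prevents it from attaching a vertex that is a neighbour of the current endpoint but also, say, adjacent to an interior cell (think the \kl{Rado graph}), collapsing the intended linear scaffold into something that is not a path. The paper devotes \cref{lem:tape-creation} precisely to this: a cell-creation rule is paired with a linearity-checking rule that walks a marker $\square_1$ along the whole current tape before the next cell may be added, certifying at each step that the newly introduced indeterminate is genuinely new and path-adjacent. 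Your proposal has no analogue of this mechanism, and without it the reduction is not faithful — the rewrite system would accept spurious ``tapes'' whose geometry is not a path, and reachability would no longer encode halting.

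Related, and also underdeveloped, is the rigidity argument you correctly flag as the technical heart. The paper realises your ``state- and head-markers'' concretely via the degree patterns $x_0^4 x_1^2 x_2^1 x_3^3$ and $x_0^4 x_1^1 x_2^2 x_3^3$ per letter block, and then \cref{lem:word-encoding-string-subst} proves — by exploiting that exponent $4$ pins cell boundaries, exponent $3$ fixes the direction of traversal, and exponents $1,2$ distinguish letters — that any $\group$-translate of a rule that divides a valid word encoding must act as an honest contiguous string substitution. You state that such rigidity ``should come from'' the markers, but do not supply an encoding or argument establishing it, and the degree choices are not incidental: a naive encoding would admit translates that act non-contiguously or with the wrong orientation. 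So while the skeleton of your reduction is sound and matches the paper's, the two places you leave open are exactly where the proof lives.
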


Finally, we illustrate how our positive results find applications in numerous
situations. This is done by providing families indeterminates that satisfy our
\kl{computability assumptions}, and for which we can compute \kl{equivariant
Gröbner bases}, and also by showing how our results can be used in the context
of \kl{topological well-structured transition systems} \cite{JGL10}, with
applications do the verification of infinite state systems such as orbit
finite weighted automata \cite{BOKLMO21}, \kl{orbit finite polynomial
automata}, and more generally orbit finite systems dealing with polynomial
computations.

\paragraph{Organisation.} \AP The rest of the paper is organised as follows. In
\cref{sec:preliminaries}, we introduce formally the notions of \kl{Gröbner
bases}, \kl{effectively oligomorphic} actions, and \kl{well-quasi-orderings},
which are the main assumptions of our positive results. Then, we illustrate in
\cref{sec:act ex} how these assumptions can be satisfied in practice, providing
numerous examples of sets of indeterminates. After that, we introduce in
\cref{sec:weakgb} an adaptation of \kl{Buchberger's algorithm} to the
equivariant case, that computes a \kl{weak equivariant Gröbner basis} of an
\kl{equivariant ideal}. In \cref{sec:equivariant-grobner-basis}, we use
\kl{weak equivariant Gröbner bases} to prove our main positive
\cref{thm:compute-egb}, and we show that it provides a way to effectively
represent \kl{equivariant ideals} (\cref{cor:equivariant-ideals-computations}).
We continue by showing in \cref{sec:closure-properties} that the assumptions of our
\cref{thm:compute-egb} are closed under two natural operations
(\cref{lem:closure-properties-comp,lem:closure-properties-wqo}). The positive
results regarding the \kl{equivariant ideal membership problem} are then
leveraged to obtain several decision procedures for other problems in
\cref{sec:applications}. Finally, in \cref{sec:undecidability}, we show that
our assumptions are close to optimal by proving that the \kl{equivariant ideal
membership problem} is undecidable whenever one can find \kl(of){infinite
paths} in the set of indeterminates (\cref{thm:undecidable-paths}), which is
conjectured to be a complete characterisation of the undecidability of the
\kl{equivariant ideal membership problem} (\cref{rem:conj-wqo-infinite-path}).
\section{Preliminaries}
\label{sec:preliminaries}

\paragraph{Partial orders, ordinals, well-founded sets, and well-quasi-ordered
sets.} \AP We assume basic familiarity with partial orders, well-founded sets,
and ordinals. We will use the notation $\intro*\om$ for the first infinite
ordinal (that is, $(\N, \leq)$), and write $X \intro*\ordplus Y$ for the
lexicographic sum of two partial orders $X$ and $Y$. Similarly, the notation $X
\intro*\ordtimes Y$ will denote the product of two partial orders equipped with the
lexigographic ordering, i.e. $(x_1, y_1) \leq (x_2, y_2)$ if either $x_1 <
x_2$, or $x_1 = x_2$ and $y_1 \leq y_2$. We will also use the usual notations
for finite ordinals, writing $\intro*\ordfin{n}$ for the finite ordinal of size
$n$. For instance, $\om \ordplus \ordfin{1}$ is the total order $\N \uplus
\set{+\infty}$, where $+\infty$ is the new largest element.

\AP In order to guarantee the termination of the algorithms presented in this
paper, a key ingredient will be the notion of \intro{well-quasi-ordering}
(WQO), that are sets $(X, \leq)$ such that every infinite sequence
$\seqof{x_i}[i \in \N]$ of elements of $X$ contains a pair $i < j$ such that
$x_i \leq x_j$. Examples of \kl{well-quasi-orderings} include finite sets with
any ordering, or $\N \times \N$ with the product ordering. We refer the reader
to \cite{SCSC17} for a comprehensive introduction to \kl{well-quasi-orderings}
and their applications in computer science.

\paragraph*{Polynomials, monomials, divisibility.} \AP 
We assume basic familiarity with the theory of
commutative algebra, and polynomials. We will use the notation $\poly{\K}{\Indets}$
for the ring of polynomials with coefficients from a field $\K$ and
indeterminates/variables from a set $\Indets$, and $\mon{\Indets}$ for the set of
monomials in $\poly{\K}{\Indets}$. Letters $p,q,r$ are used to denote polynomials,
$\monelt,\monelt[n]$ are used to denote monomials, and $a,b,\alpha,\beta$ are
used to denote coefficients in $\K$.

A classical example of a \kl{WQO} is the set of monomials $\mon{\Indets}$,
endowed with the \kl{divisibility} relation $\intro*\divleq$ whenever $\Indets$
is finite. We recall that a monomial $\monelt[m]$ \intro{divides} a monomial
$\monelt[n]$ if there exists a monomial $\monelt[l]$ such that $\monelt[m]
\times \monelt[l] = \monelt[n]$. In this case, we write $\monelt[m]
\reintro*\divleq \monelt[n]$. Note that monomials can be seen as functions from
$\Indets$ to $\N$ having a finite support, and that the \kl{divisibility}
relation can be extended to monomials that are functions from $\Indets$ to
$(Y,\leq)$, where $Y$ is any partially ordered set. In this case, we write
$\monelt[m] \divleq \monelt[n]$ if for every $x \in \Indets$, we have
$\monelt[m](x) \leq \monelt[n](x)$. We will write $\intro*\mon[\om \ordplus
1]{\Indets}$ (resp. $\mon[\om^2]{\Indets}$) for the set of monomials that are
functions from $\Indets$ to $\om \ordplus \ordfin{1}$ (resp. $\om^2$).

\AP Unless otherwise specified, we will assume that the set of indeterminates
$\Indets$ comes equipped with a total ordering $\varleq$. Using this order, we
define the \intro{reverse lexicographic} (revlex) ordering on monomials as
follows: $\monelt[n] \intro*\revlexlt \monelt[m]$ if there exists an
indeterminate $x \in \Indets$ such that $\monelt[n](x) < \monelt[m](x)$, and such
that for every $y \in \Indets$, if $x \varlt y$ then $\monelt[n](y) =
\monelt[m](y)$. Remark that if $\monelt[n] \revlexleq \monelt[m]$, then in
particular $\monelt[n] \divleq \monelt[m]$. 

\AP We can now use the \kl{reverse lexicographic} ordering to identify particular elements in
a given polynomial. Namely, for a polynomial $p \in \poly{\K}{\X}$, we define
the \intro{leading monomial} $\intro*\lm(p)$ of $p$ as the largest monomial
appearing in $p$ with respect to the \kl{revlex ordering}, and the
\intro{leading coefficient} $\intro*\lc(p)$ of $p$ as the coefficient of
$\lm(p)$ in $p$. We can then define the \intro{leading term} $\intro*\lt(p)$ of
$p$ as the product of its \kl{leading monomial} and its \kl{leading
coefficient}, and the \intro{characteristic monomial} $\intro*\cm(p)$ of $p$ as
the product of its \kl{leading monomial} and all the indeterminates appearing
in $p$. We also define the \intro(monomial){domain} of $\monelt[m]$ as the set
$\intro*\dom(\monelt[m])$ of indeterminates $x \in \X$ such that $\monelt[m](x) \neq
0$. Because the coefficients and monomial in question are highly dependent on
the ordering $\varleq$, we allow ourselves to write $\lm[\Indets](p)$ to
highlight the precise ordered set of variables that was used to compute the
\kl{leading monomial} of $p$.

Let us briefly argue in favor of the \kl{reverse lexicographic} ordering. In
the case of a finite set of indeterminates, one can choose any total ordering
on $\mon{\Indets}$, as long as it contains the \kl{divisibility}
quasi-ordering, and is compatible with the product of monomials.\footnote{This
is often called a \emph{monomial ordering}, see \cite{CLO15}.} In our case,
having an infinite number of indeterminates, we rely on a connection between
$\lm(p)$ and $\dom(p)$: $\dom(p) \subseteq \dwset{\dom(\lm(p))}$, where
$\dwset{S}$ is the downward closure of a set $S \subseteq \Indets$, i.e. the
set of all indeterminates $x \in \Indets$ such that $y \leq x$ for some $y \in
S$. This means that the \kl{leading monomial} encodes a \emph{global property}
of the polynomial, and it will be crucial in our termination arguments. This is
already at the core of the classical \emph{elimination theorems} \cite[Chapter 3, Theorem
2]{CLO15}.

\paragraph{Ideals, and Gröbner Bases.} \AP An \intro{ideal} $\idl$ of
$\poly{\K}{\X}$ is a non-empty subset of $\poly{\K}{\X}$ that is closed under
addition and multiplication by elements of $\poly{\K}{\X}$. Given a set $H
\subseteq \poly{\K}{\X}$, we denote by $\intro*\IdlGen{H}$ the ideal generated
by $H$, i.e. the smallest ideal that contains $H$. The \intro{ideal membership
problem} is the following decision problem: given a polynomial $p \in
\poly{\K}{\X}$ and a set of polynomials $H \subseteq \poly{\K}{\X}$, decide
whether $p$ belongs to the ideal $\IdlGen{H}$ generated by $H$. We know that
this problem is decidable when $\X$ is finite, and that it is even
$\EXPTIME$-complete \cite{MAME82}. The classical approach to the \kl{ideal
membership problem} is to use the \kl{Gröbner basis} theory that was developed
in the 70s by Buchberger~\cite{BUCH76}. 
A set $\Basis$ of polynomials is called a \intro{Gröbner basis} of
an ideal $\idl$ if, $\IdlGen{\Basis} = \idl$ and for every polynomial $p \in
\idl$, there exists a polynomial $q \in \Basis$ such that $\lm[\Indets](q)
\divleq \lm[\Indets](p)$.

Given a \kl{Gröbner basis} $\Basis$ of an ideal $\idl$, and a polynomial $p$,
it suffices to iteratively reduce the \kl{leading monomial} of $p$ by
subtracting multiples of elements in $\Basis$, until one cannot apply any
reductions. If the result is $0$, then $p$ belongs to $\idl$, and otherwise it
does not.

\paragraph{Group actions, equivariance, and orbit finite sets.}  \AP A
\intro{group} $\group$ is a set equipped with a binary operation that is
associative, has an identity element and has inverses. In our setting, we are
interested in infinite sets $\X$ of indeterminates that is equipped with a
\intro{group action} $\group \actson \X$. This means that for each $\gelem \in
\group$, we have a bijection $\X \tobij \X$ that we denote by $x \mapsto \gelem
\cdot x$. A set $S \subseteq \X$ is \intro{equivariant} under the action of
$\group$ if for all $\gelem \in \group$ and $x \in S$, we have $\gelem \cdot x
\in S$. We give in \cref{ex:idl-equiv} an example and a non-example
of \intro{equivariant ideals}.

\begin{example}
    \label{ex:idl-equiv}
    Let $\Indets$ be any infinite set, and $\group$ be the 
    group of all bijections of $\Indets$. 
    Then the set $S_0 \subset \poly{\K}{\Indets}$ of all polynomials 
    whose set of coefficients sums to $0$ is an equivariant ideal.
    Conversely, the set of all polynomials that are multiple
    of $x \in X$ is an \kl{ideal} that is not \kl{equivariant}.
\end{example}
\begin{proof}
    Let $p,q\in S_0$, and $r \in \poly{\K}{\Indets}$.
    Then, $p \times r + q$ is in $S_0$. Remark that 
    $p,r$ and $q$ belong to a subset $\poly{\K}{\Indets}$ of the 
    polynomials that uses only finitely many indeterminates.
    In this subset, the sum of all coefficients is obtained
    by applying the polynomials to the value $1$ for every indeterminate
    $y \in \Indets$. We conclude that
    $(p \times r + q)(1,\dots, 1) 
    = p(1,\dots,1) \times r(1,\dots,1) + q(1,\dots,1)
    = 0 \times r(1, \dots, 1) + 0 = 0$, hence that
    $p \times r + q$ belongs to $S_0$. 
    Because $0$ is in $S_0$, we conclude that $S_0$ is an \kl{ideal}.
    Furthermore, if $\gelem \in \group$ and $p \in S_0$, then
    the sum of the coefficients $\gelem \cdot p$ is exactly
    the sum of the coefficients of $p$, hence is $0$ too.
    This shows that $S_0$ is \kl(ideal){equivariant}.

    It is clear that all multiples of a given polynomial $x \in \Indets$
    is an ideal of $\poly{\K}{\Indets}$. This is not an \kl{equivariant ideal}:
    take any bijection $\gelem \in \group$ that does not map $x$ to $x$ (it
    exists because $\Indets$ is infinite and $\group$ is all permutations),
    then $\gelem \cdot x$ is not a multiple of $x$, and therefore does 
    not belong to the ideal.
\end{proof}

\AP An \kl{equivariant set} is said to be \intro{orbit finite} if it is the
union of finitely many \intro{orbits} under the action of $\group$. We denote
$\intro*\orbit[\group]{E}$ for the set of all elements $\gelem \cdot x$ for
$\gelem \in \group$ and $x \in E$. Equivalently, an \reintro{orbit finite set}
is a set of the form $\orbit[\group]{E}$ for some finite set $E$. Not every
\kl{equivariant subset} is \kl{orbit finite}, as shown in
\cref{ex:orbit-finite}. However, \kl{orbit finite sets} are
robust in the sense that \kl{equivariant subsets} of \kl{orbit finite sets} are
also \kl{orbit finite}, and similarly, an \kl{equivariant subset} of $E^n$ is
\kl{orbit finite} whenever $E$ is \kl{orbit finite} and $n \in \N$ is finite.
For algorithmic purposes, \kl{orbit finite sets} are the ones that can be taken
as input as a finite set of representatives (one for each orbit). The notions
of \kl{equivariance} and \kl{orbit finite sets} from a computational
perspective are discussed in \cite{BOJAN16inf}, and we refer the reader to this
book for a more comprehensive introduction to the topic.

\AP We will mostly be interested in \intro{orbit-finitely generated}
\kl{equivariant ideals}, i.e.\ \kl{equivariant ideals} that are generated by an
\kl{orbit finite set} of polynomials, for which the \intro{equivariant ideal
membership problem} is as follows: given a polynomial $p \in
\poly{\K}{\Indets}$ and an \kl{orbit finite set} $H \subseteq
\poly{\K}{\Indets}$, decide whether $p$ belongs to the \kl{equivariant ideal}
$\intro*\EqIdlGen{H}$ generated by $H$.

\begin{example}
  \label{ex:orbit-finite}
  Let $\Indets = \N$, and $\group$ be all permutations 
  that fixes prime numbers. The
  set of all polynomials whose coefficients sum to $0$ is an 
  \kl{equivariant ideal}, but it is not \kl{orbit finite},
  since all the polynomials $x_p - x_q$ for $p \neq q$ primes
  are in distinct orbits under the action of $\group$.
\end{example}

\AP A function $f \colon X \to Y$ between two sets $X$ and $Y$ equipped with
actions $\group \actson X$ and $\group \actson Y$ is said to be
\intro(func){equivariant} if for all $\gelem \in \group$ and $x \in X$, we have
$f(\gelem \cdot x) = \gelem \cdot f(x)$. For instance, the
\kl(monomial){domain} of a monomial is an \kl{equivariant function} if $\gelem
\in \group$, then $\gelem \cdot \dom(\monelt[m]) = \dom(\gelem \cdot
\monelt[m])$. Let us point out that the image of an \kl{orbit finite set} under
an \kl{equivariant function} is \kl{orbit finite}, and that the algorithms that
we will develop in this paper will all be \kl(func){equivariant}.

\paragraph*{Computability assumptions.} \AP We say that the action is
\intro{effectively oligomorphic} if :
\begin{enumerate}
\item It is \intro{oligomorphic}, i.e.\ for every $n \in \N$ and every \kl{orbit
finite set} $E \subseteq \Indets$,
the set $E^n$ is \kl{orbit finite} under the action of $\group$ on $\Indets^n$.
\item There exists an algorithm that decides whether two elements $\vec{x},
\vec{y} \in \Indets^*$ are in the same orbit under the action of $\group$ on $\Indets^*$.
\item There exists an algorithm which on input $n\in\N$ outputs a set $A\subseteqfin\Indets^n$ such that $|A\cap U| = 1$ for every orbit $U\in\Indets^n$.
\end{enumerate}

A group action $\group \actson \X$ is said to be \intro(ord){compatible}
with an ordering $\leq$ on $\X$ if for all $\gelem \in \group$ and $x,y \in
\X$, we have $x \leq y$ if and only if $\gelem \cdot x \leq \gelem \cdot y$.
Let us point out that in this case, $\revlexleq$ is also \kl(ord){compatible} with
the action of $\group$ on $\mon{\X}$, i.e. for all $\gelem \in \group$ and
monomials $\monelt[m], \monelt[n] \in \mon{\X}$, we have $\monelt[m] \revlexleq
\monelt[n]$ if and only if $\gelem \cdot \monelt[m] \revlexleq \gelem \cdot
\monelt[n]$.
Our \intro{computability assumptions} on the tuple $(\Indets, \group,
\leq)$ will therefore be that $\group$ acts \kl{effectively oligomorphic} on
$\Indets$, and that its action is \kl(ord){compatible} with the ordering $\leq$
on $\Indets$.

\begin{example}
  \label{ex:computability-assumptions}
  Let $\Indets \defined \Q$ and $\group$ be the group of all
  order preserving bijections of $\Q$.
  Then, $\group$ acts \kl{effectively oligomorphically} on $\Indets$,
  and its action is \kl(ord){compatible} with the ordering of $\Q$ by definition.
\end{example}

Note that under our \kl{computability assumptions}, the set of polynomials
$\poly{\K}{\Indets}$ is also \kl{effectively oligomorphic} under the action of
$\group$ on $\Indets$. This is because a polynomial $p \in \poly{\K}{\Indets}$
can be seen as an element of $(\K \times \Indets^{\leq d})^n$ where $n$ is the
number of monomials in $p$, and $d$ is the maximal degree of a monomial
appearing in $p$. Beware that the set of polynomials $\poly{\K}{\Indets}$ is
not \kl{orbit finite}, precisely because the orbit of a polynomial $p$ under 
the action of $\group$ cannot change the degree of $p$, and that there are 
polynomials of arbitrarily large degree.

\paragraph{Equivariant Gröbner bases.} \AP We know from \cite{GHOLAS24} that a
necessary condition for the \kl{equivariant Hilbert basis property} to hold is
that the set  $\mon{\X}$  of monomials is a \kl{well-quasi-ordering} when
endowed with the \intro{divisibility up-to $\group$} relation
($\intro*\gdivleq$), which is defined as follows: for $\monelt_1, \monelt_2 \in
\mon{\X}$, we write $\monelt_1 \gdivleq \monelt_2$ if there exists $\gelem \in
\group$ such that $\monelt_1$ \kl{divides} $\gelem \cdot \monelt_2$. This
relation also extends to monomials that are functions from $\Indets$ to
$(Y,\leq)$ with finite support, where $Y$ is any partially ordered set. We say
that a set $\Basis \subseteq \poly{\K}{\X}$ is an \intro{equivariant Gr\"{o}bner
basis} of an equivariant ideal $\idl$ if $\Basis$ is \kl{equivariant},
$\IdlGen{\Basis} = \idl$, and for every polynomial $p \in \idl$, there exists
$q \in \Basis$ such that $\lm[\Indets](q) \gdivleq \lm[\Indets](p)$ and
$\dom(q) \subseteq \dom(p)$, following the definition of \cite{GHOLAS24}.

Beware that even in the case of a finite set of variables, a \kl{Gröbner basis}
is not necessarily an \kl{equivariant Gröbner basis}, because of the
\kl(polynomial){domain} condition. However, every \kl{equivariant Gröbner
basis} is a \kl{Gröbner basis}.

\begin{example}
  \label{ex:equivariant-gb}
  Let $\Indets \defined \set{ x_1, x_2 }$,
  with $x_1 \varleq x_2$,
  and $\group$ be the trivial group.
  Let us furthermore consider the ideal $\idl$ \kl(idl){generated by}
  $\set{ x_1, x_2 }$.
  Then, the set $\Basis \defined \set{ x_2 - x_1, x_1 }$ is a
  \kl{Gröbner basis} of $\idl$, but not an \kl{equivariant Gröbner basis}.
  Indeed, $x_2 \in \idl$, but there is no polynomial $q \in \Basis$
  such that $\lm(q) \divleq x_2$ and $\dom(q) \subseteq \dom(x_2)$.
\end{example}

In the finite case, one can always compute an \kl{equivariant Gröbner basis} by
computing \kl{Gröbner bases} for every possible ordering of the indeterminates,
and taking their union.\footnote{This algorithm is correct because we are
  considering the \kl{reverse lexicographic} ordering.}

\section{Examples of group actions}\label{sec:act ex}

\AP Many of the common examples of group actions $\group\actson\Indets$ are
obtained by considering $\Indets$  as set with some structure, described by
some relations and functions on that set, and $\group$ is the group
$\aut{\Indets}$ of all automorphisms (i.e.\ bijections that preserve and
reflect the structure) of $\Indets$. A monomial $\monelt[p] \in
\mon[Q]{\Indets}$ can be thought as a labelling of a finite substructure of
$\X$ using elements of $Y$. If the structure $\Indets$ is \intro{homogeneous},
that is, if isomorphisms between finite induced substructures extends to
automorphisms of the whole structure, then $\gdivleq$ is same as embedding of
labelled finite induced substructures of $\X$. \footnote{ We refer the reader
to \cite[Chapter 7]{BOJAN16inf} and \cite{homsurvey} for more details on
homogeneous structures.} Let us now give some examples of such structures and
whether they satisfy our \kl{computability assumptions}, and whether the
\kl{divisibility relation up-to-$\group$} is a \kl{well-quasi-ordering}.

\begin{example}[Equality Atoms]
  \label{ex:eq atoms}
  \AP
Let $\intro*\EqualityAtoms$ be an infinite set without any additional structure other than the equality relation.
Up to isomorphism, finite induced substructures of $\EqualityAtoms$ are finite sets,
monomials in $\mon[Y]{\EqualityAtoms}$ are finite multisets of elements in $Y$,
and $\gdivleq[\aut{\EqualityAtoms}]$ is the multiset ordering \cite[Section 1.5]{SCSC17},
which is a \kl{WQO} \cite[Corollary 1.21]{SCSC17}.
\end{example}

\begin{example}[Dense linear order]
  \label{ex:dlo}
  \AP
Let $\intro*\OrderAtoms$ be the set of rational numbers ordered by the usual ordering.
Note that under this ordering, $\OrderAtoms$ is a dense linear order without endpoints.
We write $\OrderAtoms$ instead of $\Q$ to emphasise that we use its elements as indeterminates and not as coefficients of polynomials. 
Up to isomorphism, finite induced substructures of $\OrderAtoms$ are finite linear orders,
monomials in $\mon[Y]{\OrderAtoms}$ are words in $Y^*$ (i.e.\ finite linear order labelled with elements of $Y$)
and $\gdivleq[\aut{\OrderAtoms}]$ is the scattered subword ordering, which is a \kl{WQO} due to Higman's lemma \cite{HIG52}.
\end{example}

\begin{example}[The Rado graph]
  \label{ex:rado}
  \AP
  Let $\intro*\RadoAtoms$ be the \intro{Rado graph} (\cite[Section 7.3.1]{BOJAN16inf},\cite[Example 2.2.1]{homsurvey}).
Up to isomorphism,
finite induced substructures of $\RadoAtoms$ are finite undirected graphs,
monomials in $\mon[Y]{\RadoAtoms}$ are graphs with vertices labelled with $Y$,
and $\gdivleq[\aut{\RadoAtoms}]$ is the labelled induced subgraph ordering even when $Y$ is a singleton.
For example, cycles of length more than three form an infinite antichain.
\end{example}

\begin{example}[Infinite dimensional vector space]
  \label{ex:bit vector}
  \AP
Let $\intro*\BitVectorAtoms$ be an infinite dimensional vector space over $\ftwo$.
Up to isomorphism,
finite induced substructures of $\BitVectorAtoms$ are finite dimensional vector spaces over $\ftwo$.
These are well-quasi-ordered in the absence of labelling.
However, even when $Y = \N$,
$(\mon[Y]{\BitVectorAtoms},\gdivleq[\aut{\BitVectorAtoms}])$ is not a \kl{WQO} as illustrated by the following antichain.
Let $\{v_1,v_2,\dots\}\subseteq \BitVectorAtoms$ be a countable set of linearly independent vectors in $\BitVectorAtoms$.
Let $\oplus$ denote the addition operation of $\BitVectorAtoms$.
For $n \geq 3$ define the monomial 
$
\monelt[p]_n \defined v^2_1 \ldots v^2_n  (v_1\oplus v_2)(v_2\oplus v_3) \ldots (v_{n-1}\oplus v_n)  (v_{n}\oplus v_1)
$.
Then, $\setof{\monelt[p]_n}{n = 3,4,\dots}$ forms an infinite antichain.
\end{example}

The previous \cref{ex:eq atoms,ex:dlo,ex:rado,ex:bit vector}
are
well known examples in the theory of \emph{sets with atoms} \cite{BOJAN16inf}.
Let us now give a new example of \kl{well-quasi-ordered} \kl{divisibility relation
up-to-$\group$}, by extending \Cref{ex:dlo} that
relied on Higman's lemma \cite{HIG52} via Kruskal's tree theorem
\cite{Kruskal60}.

\begin{example}[Dense Tree]
  \label{ex:dense tree}
  \AP
Let $\intro*\TreeAtoms$ denote the universal countable dense meet-tree, as
defined in 
\cite[Page 2]{KRS21} or \cite[Section 7.3.3]{BOJAN16inf}.
Note that the tree structure is given by the \emph{least common ancestor} (\emph{meet})
operation, and not by its edges.
For a subset $S\subset \TreeAtoms$,
define its \emph{closure} to be the smallest subtree of $\TreeAtoms$ containing $S$.
Up to isomorphism, finite induced substructures of $\TreeAtoms$ are finite meet-trees.
Monomials in $\mon[Y]{\TreeAtoms}$ are finite meet-trees labelled with $1 + Y$.
Here $1 + Y$ is the \kl{WQO} containing one more element than $Y$ which is incomparable to elements in $Y$,
and is used to label nodes that are in the closure of the set of variable of a monomial, but not in the monomial itself.
The divisibility relation $\gdivleq[\aut{\TreeAtoms}]$ is exactly the embedding of labelled meet-trees,
which is a \kl{WQO} due to Kruskal's tree theorem \cite{Kruskal60}.
\end{example}

The above examples using \kl{homogeneous} structures nicely illustrate the
correspondence between monomials and labelled finite substructures, but we can
also consider \kl{non-homogeneous} structures, such as in \Cref{ex:int} below.

\begin{example}\label{ex:int}
  \AP
Let $\calZ$ be the set of integers ordered by the usual ordering.
Then $\aut{\calZ}$ is the set of all order preserving bijections of $\D$.
Note that every order preserving bijection of the set $\calZ$ is a translation $n \mapsto n + c$ for some constant $c\in\calZ$.
By definition, the action $\aut{\calZ} \actson \calZ$ preserves the linear order on $\Z$.
However, $(\mon[Y]{\calZ}, \gdivleq[\aut{\calZ}])$ is not a \kl{WQO} even when $Y$ is a singleton.
An example of an infinite antichain is the set $\setof{a b}{b\in\calZ\setminus\{a\}}$, for any fixed $a\in\calZ$.
\end{example}

\paragraph{Reducts} \AP As mentioned in the introduction, some examples of
group actions $\group\actson\Indets$ do not preserve a linear order on
$\Indets$, such as the set $\EqualityAtoms$ of indistinguishable names without
any relations. However, there are techniques that allow us to reduce the
problem of computing \kl{equivariant Gröbner bases} to the case where the
action preserves a linear ordering, using what is called a \kl{reduct} of the
action. A group action $\group\actson\Indets$ is said to be a \intro{reduct} of
another group action $\calH\actson\Y$ if there exists a bijection $f \colon
\Indets\to\Y$ such that $f^{-1}\circ \gelem \circ f \in\group$ for every
$\gelem \in \calH$. It is called an \intro{effective reduct} if $f$ is
computable.

\begin{lemma}\label{lem:reducts-equiv-hilbert}
  Let $\group\actson\Indets$ be an \kl{effective reduct} of $\calH\actson\Y$ 
  via a computable function $f\colon\Indets\to\Y$. Then,
  \begin{enumerate}
    \item for every $\group$-equivariant ideal $\idl[I]\subseteq\poly{\K}{\X}$,
    $f(\idl[I])\subseteq\poly{\K}{\Y}$ is a $\calH$-equivariant ideal,
    \item if $\Basis\subseteq\poly{\K}{\Y}$ is a $\calH$-equivariant Gr\"{o}bner basis then $\orbit[\group]{f^{-1}(\Basis)}$ is a $\group$-equivariant Gr\"{o}bner basis.
  \end{enumerate}
\end{lemma}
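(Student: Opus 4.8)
The proof is mostly an exercise in transporting structure along the relabelling $f$; there is one genuinely delicate step. I would first extend $f$ to a ring isomorphism $\poly{\K}{\X}\to\poly{\K}{\Y}$, written again $f$, which sends each indeterminate $x$ to $f(x)$, and take the ordering on $\X$ to be the one pulled back from $\Y$ along $f$ (so $x\varleq x'$ iff $f(x)\le f(x')$). With this choice $f$ is order-preserving, hence it commutes with $\lm$ and $\dom$ and both preserves and reflects $\divleq$ and $\revlexleq$ on monomials. The reduct hypothesis moreover provides an injective group homomorphism $\sigma\colon\calH\to\group$, $\sigma(\pi)=f^{-1}\circ\pi\circ f$, and a direct computation shows $f$ intertwines the two actions on polynomials, $f(\sigma(\pi)\cdot p)=\pi\cdot f(p)$ for all $\pi\in\calH$ and $p\in\poly{\K}{\X}$. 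These are the only properties of $f$ used below.

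Item~1 is then immediate: $f$ being a ring isomorphism, $f(\idl[I])$ is an ideal of $\poly{\K}{\Y}$, and for $\pi\in\calH$ and $p\in\idl[I]$ we get $\pi\cdot f(p)=f(\sigma(\pi)\cdot p)\in f(\idl[I])$ because $\sigma(\pi)\in\group$ and $\idl[I]$ is $\group$-equivariant. For item~2, set $\idl[J]:=\IdlGen{\Basis}$, which is $\calH$-equivariant since $\Basis$ is. I would split the argument in two. First, transporting the Gröbner property downwards along $f^{-1}$ — which commutes with $\lm$ and $\dom$, reflects $\divleq$, and carries the $\calH$-action to the $\sigma(\calH)$-action — shows that $f^{-1}(\Basis)$ is a $\sigma(\calH)$-equivariant Gröbner basis of $f^{-1}(\idl[J])=\IdlGen{f^{-1}(\Basis)}$. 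Second, since $f^{-1}(\Basis)$ is already $\sigma(\calH)$-invariant and $\sigma(\calH)\le\group$, it remains to show that enlarging the acting group from $\sigma(\calH)$ to $\group$ and taking the $\group$-orbit closure turns a $\sigma(\calH)$-equivariant Gröbner basis into a $\group$-equivariant Gröbner basis of the (now larger, $\group$-equivariant) ideal $\idl[K]:=\IdlGen{\orbit[\group]{f^{-1}(\Basis)}}$.

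For that second step, equivariance of $\orbit[\group]{f^{-1}(\Basis)}$ and the identity $\IdlGen{\orbit[\group]{f^{-1}(\Basis)}}=\idl[K]$ are automatic, and $\idl[K]$ is $\group$-equivariant. The content is the leading-monomial/domain condition: given $p\in\idl[K]\setminus\{0\}$, one writes $p=\sum_i s_i\,(\pi_i\cdot g_i)$ with $s_i\ne 0$, $\pi_i\in\group$, $g_i\in f^{-1}(\Basis)$, locates a summand $\pi_{i_0}\cdot g_{i_0}$ carrying the leading monomial of $p$, and reads off a suitable $\group$-translate of $g_{i_0}$ lying in $\orbit[\group]{f^{-1}(\Basis)}$ whose leading monomial divides $\lm(p)$ up to $\group$ and whose domain is contained in $\dom(p)$; divisibility up to $\sigma(\calH)$ is in particular divisibility up to $\group$, which is how the $\sigma(\calH)$-Gröbner property of $f^{-1}(\Basis)$ feeds in.

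I expect this last step to be the main obstacle. The difficulty is twofold. On the one hand, elements of $\group$ outside the image of $\sigma$ need not respect the variable ordering, so one cannot simply identify $\lm(\pi\cdot r)$ with $\pi\cdot\lm(r)$; the bookkeeping about leading monomials has to go through the $\group$-divisibility quasi-order $\gdivleq$ directly. On the other hand, the leading terms of the summands $s_i\,(\pi_i\cdot g_i)$ may cancel, and since the revlex order on $\mon{\X}$ is not well-founded when $\X$ is infinite, the usual Buchberger-style argument — choose a representation minimising the largest leading monomial among the summands, then use an $S$-polynomial reduction to remove the top cancellation — has to be set up with care, exploiting the elimination feature of revlex, namely $\dom(p)\subseteq\dwset{\dom(\lm(p))}$, to confine attention to finitely many relevant indeterminates. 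Everything else reduces to routine transport of the definitions through $f$.
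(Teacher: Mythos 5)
Item~1 is fine, and the transport technology you set up (extend $f$ to a ring isomorphism, pull back the order, note $f(\sigma(\pi)\cdot p)=\pi\cdot f(p)$ for $\sigma(\pi)\defined f^{-1}\circ\pi\circ f$) is exactly the right toolkit. The first half of your item~2 argument is also correct: $f^{-1}(\Basis)$ is an equivariant Gröbner basis, for the subgroup $\sigma(\calH)\subseteq\group$, of $f^{-1}(\idl[J])=\IdlGen{f^{-1}(\Basis)}$ where $\idl[J]=\IdlGen{\Basis}$. The problem is the second half. You set yourself the task of showing that $\orbit[\group]{f^{-1}(\Basis)}$ is a $\group$-equivariant Gröbner basis of the possibly larger ideal $\idl[K]\defined\IdlGen{\orbit[\group]{f^{-1}(\Basis)}}$, and this claim is simply false; no Buchberger-style minimisation or revlex elimination trick can rescue it, because there is a genuine obstruction with the domain condition.

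Here is a counterexample. Take $\X=\Y=\{a,b,c\}$ with $a\varlt b\varlt c$, let $f$ be the identity, let $\calH$ be the trivial group (so any $\group$ is an effective reduct), and let $\group=\{e,(a\;c)\}$. Take $\Basis=\{b-a\}$; this is a $\calH$-equivariant Gröbner basis of $\idl[J]=\IdlGen{b-a}$, since every nonzero multiple of $b-a$ has leading monomial divisible by $b$ and domain containing $\{a,b\}$. Then $\orbit[\group]{f^{-1}(\Basis)}=\{\,b-a,\ b-c\,\}$ generates $\idl[K]=\IdlGen{b-a,\ b-c}$, and $c-a=(b-a)-(b-c)\in\idl[K]$ has $\dom(c-a)=\{a,c\}$. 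But $\dom(b-a)=\{a,b\}$ and $\dom(b-c)=\{b,c\}$: no element of $\orbit[\group]{f^{-1}(\Basis)}$ has its domain contained in $\{a,c\}$, so the Gröbner property fails for $\idl[K]$.

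What the counterexample shows is that item~2 must be read in the context supplied by item~1 — which is also how it is consumed in \cref{cor:egb} — namely with $\Basis$ a $\calH$-equivariant Gröbner basis of $f(\idl[I])$ for the $\group$-equivariant ideal $\idl[I]$ of item~1. Under that reading $f^{-1}(\idl[J])=\idl[I]$ is already $\group$-equivariant, so $\orbit[\group]{f^{-1}(\Basis)}\subseteq\idl[I]$ and $\IdlGen{\orbit[\group]{f^{-1}(\Basis)}}=\IdlGen{f^{-1}(\Basis)}=\idl[I]$: there is no enlargement to control, and your second step is vacuous. Your first step then already finishes the proof: for $p\in\idl[I]$, pick $h\in\Basis$ witnessing the Gröbner property for $f(p)\in f(\idl[I])$, and set $q\defined f^{-1}(h)\in f^{-1}(\Basis)\subseteq\orbit[\group]{f^{-1}(\Basis)}$; then $\dom(q)=f^{-1}(\dom(h))\subseteq f^{-1}(\dom(f(p)))=\dom(p)$, and $\lm(h)\gdivleq[\calH]\lm(f(p))$ transports under $f^{-1}$ to divisibility of $\lm(q)$ by $\lm(p)$ up to $\sigma(\calH)\subseteq\group$, hence $\lm(q)\gdivleq[\group]\lm(p)$. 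The cancellation and non-well-foundedness worries in your last paragraph never arise, because one never needs to choose a decomposition of $p$ at all.
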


\AP We say that an action $\group\actson\Indets$ is \intro{nicely orderable} if
it is an \kl{effective reduct} of some  \kl{effectively oligomorphic} action
$\calH\actson\Y$ that satisfies the conditions mentioned in
\Cref{thm:compute-egb},
i.e.\ $\calH\actson\Y$ preserves an effective linear order and
$\gdivleq[\calH]$ is a \kl{WQO}. As an immediate corollary of
\Cref{thm:compute-egb},
\Cref{lem:reducts-equiv-hilbert} we get the desired \cref{cor:egb}.
  
\begin{corollary}\label{cor:egb} If $\group\actson\Indets$ is
  \kl{nicely orderable} then
  one can compute an \kl{equivariant Gröbner basis} of any 
  \kl{equivariant ideal} given by an \kl{orbit finite} set of
  generators.
\end{corollary}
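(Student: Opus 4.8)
The plan is to unfold the definition of \kl{nicely orderable} and then obtain the statement by a direct chaining of \cref{thm:compute-egb} with \cref{lem:reducts-equiv-hilbert}, the only real work being the verification that the hypotheses of \cref{thm:compute-egb} are preserved when we transport the problem across the reduct. By assumption there is a computable bijection $f\colon\Indets\to\Y$ exhibiting $\group\actson\Indets$ as an \kl{effective reduct} of an \kl{effectively oligomorphic} action $\calH\actson\Y$ that preserves an effective linear order and such that $\gdivleq[\calH]$ enjoys the relevant \kl{WQO} property; in other words, $\calH\actson\Y$ satisfies the hypotheses of \cref{thm:compute-egb}. The map $f$ induces a computable $\K$-algebra isomorphism $\poly{\K}{\Indets}\to\poly{\K}{\Y}$, which I will also denote by $f$ and which acts on polynomials by renaming each variable $x$ to $f(x)$.

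Given an \kl{equivariant ideal} $\idl=\EqIdlGen{H}$ with $H\subseteq\poly{\K}{\Indets}$ an \kl{orbit finite} set of generators, I would first move everything across $f$. Since \kl{orbit finite} sets are in particular \kl{equivariant}, $f(H)$ is invariant under $\calH$ (any $\sigma\in\calH$ satisfies $\sigma\cdot f(H)=f((f^{-1}\sigma f)\cdot H)=f(H)$ because $f^{-1}\sigma f\in\group$), and one checks that $f(\idl)=\EqIdlGen[\calH]{f(H)}$, which in particular re-proves the statement of \cref{lem:reducts-equiv-hilbert}(1). The step that requires genuine care -- and which I expect to be the main obstacle -- is to confirm that $f(H)$ is again \kl{orbit finite}, now under $\calH$, and that a finite list of $\calH$-orbit representatives of it is computable, since these two facts are exactly what is needed for \cref{thm:compute-egb} to be applicable over $\Y$. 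For \kl{orbit finiteness}: condition~(3) in the definition of \kl{effectively oligomorphic} actions, applied with $n=1$, forces $\Y$ itself to be \kl{orbit finite} under $\calH$; hence for each degree bound $d$ and each finite set of coefficients, the polynomials over $\Y$ using at most $d$ indeterminates and only those coefficients form an \kl{orbit finite} set, and $f(H)$ is an \kl{equivariant} subset of such a set because the finitely many orbit representatives of $H$ have bounded degree -- so $f(H)$ is \kl{orbit finite}. Computability of the representative list then follows from the \kl{effectively oligomorphic} algorithms for $\calH$ together with computability of $f$.

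It then only remains to run \cref{thm:compute-egb} on the action $\calH\actson\Y$: this computes an $\calH$-\kl{equivariant Gröbner basis} $\Basis$ of $\EqIdlGen[\calH]{f(H)}=f(\idl)$. Finally, \cref{lem:reducts-equiv-hilbert}(2) turns $\Basis$ into the $\group$-\kl{equivariant Gröbner basis} $\orbit[\group]{f^{-1}(\Basis)}$, and since $f^{-1}(f(\idl))=\idl$, this is an \kl{equivariant Gröbner basis} of the ideal $\idl$ we started from, represented effectively by the finite list $f^{-1}(\Basis)$ of orbit representatives. Every step in the pipeline is effective, which establishes the corollary; all the content beyond quoting the two cited results is concentrated in the orbit-finiteness and computability transfer of the second paragraph, whose crux is simply that \kl{effective oligomorphicity} of $\calH\actson\Y$ makes $\Y$ \kl{orbit finite} under $\calH$.
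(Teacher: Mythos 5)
Your overall route is the one the paper intends: unfold nicely orderable, transport the generators across $f$, run the algorithm of \cref{thm:compute-egb} over $\calH\actson\Y$, and pull the result back via part~(2) of \cref{lem:reducts-equiv-hilbert}. The orbit-finiteness argument for $f(H)$ is also correct and is the genuine content of the transfer: condition~(3) of effective oligomorphicity at $n=1$ forces $\Y$ to be orbit-finite under $\calH$, every polynomial in a single $\group$-orbit has the same degree, number of monomials and coefficient multiset, and equivariant subsets of orbit-finite sets are orbit-finite.

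However, the last sentence of your second paragraph claims that a finite list of $\calH$-orbit representatives of $f(H)$ is computable from ``the effectively oligomorphic algorithms for $\calH$ together with computability of $f$'', and that is not quite enough. The natural procedure enumerates $\calH$-orbit representatives of all polynomials over $\Y$ within the relevant bounds on degree, number of monomials and coefficients (this part \emph{is} covered by the $\calH$-algorithms), and must then filter out those $p$ with $f^{-1}(p)\notin H=\orbit[\group]{H_0}$. That filter is a $\group$-orbit membership test in $\poly{\K}{\Indets}$, and neither the $\calH$-algorithms nor computability of $f$ hand you that on their own: if $\calH$ were trivial they would tell you nothing about $\group$ at all. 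What closes this is the paper's (unproven) remark at the end of \cref{sec:act ex} that an effective reduct transfers effective oligomorphicity to the $\group$-action; you should either invoke that remark explicitly, or note that in the concrete reducts of \cref{rem:reduct} the $\group$-action is independently known to be effectively oligomorphic. Once this point is supplied, the rest of your pipeline goes through.
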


\begin{remark}\label{rem:reduct}

\Cref{lem:reducts-equiv-hilbert} implies that one can apply our
results to an action $\group\actson\Indets$ that does not preserve a linear
order, as soon as it is  a reduct of some another action $\calH\actson\Indets$ which
does preserves a linear order. 

For example, $\aut{\EqualityAtoms}\actson\EqualityAtoms$ is a \kl{reduct} of
$\aut{\OrderAtoms}\actson\OrderAtoms$ assuming $\EqualityAtoms$ is countable.
Similarly, let $\T_<$ be the countable dense-meet tree with a lexicographic
ordering, as defined in \cite[Remark 6.14]{KRS21}.\footnote{The remark says
that finite meet-trees expanded with a lexicographic ordering is a Fra\"{i}sse
class, from which it follows that there exists a Fra\"{i}sse limit $\T_<$ for
that class.} Let $\group$ be the group of bijections of $\T_<$ which do not
necessarily preserve the lexicographic ordering. Then $\group\actson\T_<$ is
isomorphic to $\aut{\TreeAtoms}\actson\TreeAtoms$, and hence
$\aut{\TreeAtoms}\actson\TreeAtoms$ is a reduct of $\aut{\T_<}\actson\T_<$.
\end{remark}

\paragraph{Effective oligomorphicity} \AP Recall that in our \kl{computability
assumptions} we require the action $\group\actson\Indets$ to be \kl{effectively
oligomorphic}. It is already known that all the structures of \cref{ex:eq
atoms,ex:dlo,ex:rado,ex:bit vector,ex:dense tree} are \kl{oligomorphic}
\cite[Theorem 7.6]{BOJAN16inf}. 

Let us show on an example that they are also
\kl{effectively oligomorphic}. It is clear that $\OrderAtoms$ can be
represented by integer fractions, and that the \kl{orbit} of a tuple $(q_1,
q_2, \ldots, q_n)$ of rational numbers is given by their relative ordering in
$\Q$, which can be effectively computed. Finally, one can enumerate such
orderings and produce representatives by selecting $n$ integers. 
This can be generalised to all the structures mentioned in \Cref{ex:eq
atoms,ex:dlo,ex:rado,ex:bit vector,ex:dense tree}, by using dedicated
representations (such as \cite[Page 244-245]{BOJAN16inf} for $\TreeAtoms$), or
the general theory of Fra\"{i}sse limits \cite{CompFraisse}. 

Finally, let us
remark that \cref{rem:reduct} preserves the \kl{effective oligomorphicity} of
the action.
\section{Weak Equivariant Gröbner Bases}
\label{sec:weakgb}

\AP In this section we prove that a natural adaptation of \kl{Buchberger's
algorithm} to the equivariant setting computes a \kl{weak equivariant Gröbner
basis} of an \kl{equivariant ideal}. This can be seen as an analysis of the
classical algorithm in the equivariant setting. We will assume for the rest of
the section that $\Indets$ is a set of indeterminates equipped with a group
$\group$ acting \kl{effectively oligomorphically} on $\X$, and that $\X$ is
equipped with a total ordering $\varleq$ that is \kl(ord){compatible} with the
action of $\group$. The crucial object of this section is the notion of
\kl{decomposition} of a polynomial with respect to a set $H$.

\begin{definition}
  \label{def:decomposition}
  Let $H$ be a set of polynomials. A \intro{decomposition} of $p$
  with respect to $H$ is given by a finite sequence 
  $\mathfrak{d} \defined \seqof{(a_i, \monelt_i, h_i)}[i \in I]$ such that
   $ p = \sum_{i \in I} a_i \monelt_i h_i$,
  where $a_i \in \K$, $\monelt_i \in \mon{\X}$, and $h_i \in H$ for all $i \in I$.
  The \intro{domain of the decomposition}
  that we write $\intro*\domdec(\mathfrak{d})$ is defined as the union
  of the domains of the polynomials $\monelt_i h_i$ for all $i \in I$.
  The \intro{leading monomial of the decomposition} is defined as
  $
    \intro*\lmdec(\mathfrak{d}) \defined \max(\seqof{\lm(\monelt_i h_i)}[i \in I])
  $.
\end{definition}

Leveraging the notion of decomposition, we can define a weakening of the notion
of \kl{equivariant Gröbner basis}, that essentially mimics the classical notion
of \kl{equivariant Gröbner basis} at the level of \kl{decompositions} instead
of polynomials.

\begin{definition}
  An \kl{equivariant set} $\Basis$ of polynomials is 
  a \intro{weak equivariant Gröbner basis} of an \kl{equivariant ideal}
  $\idl$ if $\IdlGen{\Basis} = \idl$, and if for every polynomial $p \in \idl$,
  and decomposition $\mathfrak{d}$ of $p$ with respect to $\Basis$, there
  exists a decomposition $\mathfrak{d}'$ of $p$ with respect to $\Basis$ such that
  $\domdec(\mathfrak{d}') \subseteq \domdec(\mathfrak{d})$,
  and 
  such that $\lmdec(\mathfrak{d}') = \lm(p)$.
\end{definition}

\AP To compute \kl{weak equivariant Gröbner bases}, we will use a rewriting
relation. Given $p,r \in \poly{\K}{\X}$, we write $p \intro*\toeucl{H}
r$ if and only if there exists $q \in H$, $a \in \K$, and $\monelt \in
\mon{\X}$ such that $p = a \monelt q + r$, $\dom(r) \subseteq \dom(p)$, and
$\lm[\Indets](r) \revlexlt \lm[\Indets](p)$. In order to simplify the
notations, we will write $p \intro*\pmonlt r$ to denote $\dom(r) \subseteq
\dom(p)$, and $\lm[\Indets](r) \revlexlt \lm[\Indets](p)$, leaving the
ordered set of indeterminates $\Indets$ implicit.
The relation $\pmonleq$ is extended to decompositions by using 
the analogues of $\dom$ and $\lm$ for decompositions.

\begin{lemma}
  \label{lem:chm}
  The quasi-ordering $\pmonleq$ is \kl(ord){compatible} with the action of $\group$,
  and is well-founded on polynomials, and on \kl{decompositions} of polynomials.
\end{lemma}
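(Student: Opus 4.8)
The statement asserts three things about the quasi-ordering $\pmonleq$ on polynomials (and on decompositions): it is $\group$-compatible, it is well-founded on polynomials, and it is well-founded on decompositions. I would treat these one at a time, starting with compatibility since it is the easiest and is needed implicitly for the well-foundedness arguments to be $\group$-equivariant.

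\textbf{Compatibility.} Recall $p \pmonlt r$ means $\dom(r) \subseteq \dom(p)$ and $\lm(r) \revlexlt \lm(p)$. Since $\group$ acts by renaming variables, $\dom(\gelem \cdot p) = \gelem \cdot \dom(p)$ (the domain is an equivariant function, as noted after \cref{ex:orbit-finite}), so $\dom(r) \subseteq \dom(p)$ iff $\gelem \cdot \dom(r) \subseteq \gelem \cdot \dom(p)$ because $\gelem$ is a bijection. For the leading-monomial condition, I would invoke the remark in the preliminaries that $\revlexleq$ is $\group$-compatible (which in turn follows from the hypothesis that the action is $\group$-compatible with $\varleq$ on $\Indets$), together with the fact that $\lm(\gelem \cdot p) = \gelem \cdot \lm(p)$ — the leading monomial commutes with the action precisely because $\revlexleq$ is preserved, so the $\revlexleq$-maximum monomial of $\gelem \cdot p$ is the image of the $\revlexleq$-maximum monomial of $p$. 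Combining the two gives $p \pmonlt r \iff \gelem \cdot p \pmonlt \gelem \cdot r$. The extension to decompositions is identical, replacing $\lm$ by $\lmdec$ and $\dom$ by $\domdec$, both of which are visibly equivariant since they are built from $\lm$, $\dom$, products of monomials, and finite unions/maxima.

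\textbf{Well-foundedness on polynomials.} Here the hard part is that the ordering $\varleq$ on $\Indets$ is \emph{not} assumed well-founded (e.g. $\Indets = \Q$), so $\revlexleq$ alone is not well-founded and the $\dom$-containment part is essential. The plan is to build a rank function into a genuine well-founded order. Suppose toward a contradiction there is an infinite strictly descending chain $p_0 \pmongt p_1 \pmongt p_2 \pmongt \cdots$. Then $\dom(p_0) \supseteq \dom(p_1) \supseteq \cdots$ is a decreasing chain of \emph{finite} subsets of $\Indets$, so it stabilises, say $\dom(p_i) = S$ for all $i \geq N$ with $S$ finite. From index $N$ onward, all $p_i$ live in the polynomial ring $\K[S]$ over the \emph{finite} variable set $S$, on which $\revlexleq$ restricted to $\mon S$ is a genuine monomial order (well-founded), hence $\lm(p_N) \revlexgt \lm(p_{N+1}) \revlexgt \cdots$ is an infinite strictly descending chain in the well-ordered set $(\mon S, \revlexleq)$ — contradiction. (One subtlety: $p \pmonlt r$ only requires $\lm(r) \revlexlt \lm(p)$, a strict drop, which is exactly what makes this chain strictly descending once the domain is fixed.)

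\textbf{Well-foundedness on decompositions.} I would reduce this to the polynomial case via the map $\mathfrak d \mapsto$ (the underlying polynomial $p$ it decomposes, together with $\lmdec(\mathfrak d)$ and $\domdec(\mathfrak d)$), but more directly: a strictly $\pmonleq$-descending chain of decompositions $\mathfrak d_0 \pmongt \mathfrak d_1 \pmongt \cdots$ gives, by the same finiteness argument, a stabilising decreasing chain of finite domains $\domdec(\mathfrak d_i)$, and then an infinite strictly $\revlexgt$-descending chain of leading monomials $\lmdec(\mathfrak d_i)$ all lying in $\mon S$ for the stable finite $S$ — again a contradiction with well-foundedness of $(\mon S, \revlexleq)$. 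The only thing to check is that $\lmdec(\mathfrak d_i) \in \mon{S}$ where $S = \domdec(\mathfrak d_i)$, which is immediate: $\lmdec(\mathfrak d) = \max_i \lm(\monelt_i h_i)$ and each $\lm(\monelt_i h_i)$ has domain contained in $\dom(\monelt_i h_i) \subseteq \domdec(\mathfrak d)$.

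\textbf{Main obstacle.} The only real subtlety is the well-foundedness arguments, and specifically the observation that one must \emph{not} try to use $\revlexleq$ as a well-founded order directly (it is not), but instead exploit that the $\dom$-condition forces an eventual restriction to a finite variable set, on which $\revlexleq$ becomes a classical monomial order and hence well-founded. Everything else is routine equivariance bookkeeping.
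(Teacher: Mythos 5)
Your proof is correct and takes essentially the same route as the paper: for compatibility, invoke equivariance of $\dom$, $\lm$, and the group-compatibility of $\revlexleq$; for well-foundedness, use the domain-containment condition to confine the chain to a finite set of indeterminates, on which $\revlexleq$ is a classical well-founded monomial order. One small simplification you could make: the stabilisation step is unnecessary — since $\dom(p_i)\subseteq\dom(p_0)$ for all $i$ and $\dom(p_0)$ is already finite, the entire chain lives in $\K[\dom(p_0)]$ from the outset, and you can conclude without waiting for the domains to become constant.
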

\begin{proof}
  The first property is immediate because $\dom$, $\lm$, and $\revlexleq$ are
  compatible with the group action $\group$. 
  The second property follows from the fact that $\revlexlt$ is a total
  well-founded ordering whenever one has fixed finitely many possible 
  indeterminates. In a decreasing sequence, the support of the leading 
  monomials is also decreasing, so that sequence only contains finitely many 
  indeterminates, hence we conclude.
  The same proof works for decompositions.
\end{proof}

\AP As a consequence of \cref{lem:chm}, we know that the rewriting relation
$\toeucl{H}$ is \intro{terminating} for every set $H$. Given a set $H$ of
polynomials, and given a polynomial $p \in \poly{\K}{\X}$, we say that $p$ is
\intro{normalised} with respect to $H$ if there are no transitions $p
\toeucl{H} r$. The set of \intro{remainders} of $p$ with respect to $H$ is
denoted $\intro*\rem{H}{p}$, and is defined as the set of all polynomials $r$
such that $p \toeucl{H}^* r$ and $r$ is \kl{normalised} with respect to $H$.
The following lemma states that $\rem{H}{\cdot}$ is a computable function from
in our setting.

\begin{lemma}
  \label{lem:normalisation}
  Let $H$ be an \kl{orbit finite set} of polynomials, and let $p \in \poly{\K}{\X}$ be a
  polynomial. Then $\rem{H}{p}$ is finite.
  Furthermore, this computation
  is \kl(func){equivariant}. In particular, 
  $\rem{H}{K}$ is a computable \kl{orbit finite} for every \kl{orbit finite} set $K$ of polynomials.
\end{lemma}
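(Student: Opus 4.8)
The plan is to derive finiteness of $\rem{H}{p}$ from two facts about the rewriting relation $\toeucl{H}$ --- it is terminating (a consequence of \cref{lem:chm}) and it is \emph{finitely branching} --- and then to invoke K\"onig's lemma; the \kl{computability assumptions} will make the resulting finite search effective, and equivariance will follow directly from the definition of $\toeucl{H}$. To see finite branching, fix $p$ and suppose $p \toeucl{H} r$ with $p = a\monelt q + r$. Then $a \neq 0$ (otherwise $r = p$, contradicting $\lm[\Indets](r) \revlexlt \lm[\Indets](p)$), and since every monomial of $r$ is $\revlexlt \lm[\Indets](p)$, the leading term of $p$ is not cancelled in $p - r = a\monelt q$, so $\monelt \cdot \lm[\Indets](q) = \lm[\Indets](p)$ and matching leading coefficients then fixes $a$; thus $\monelt$ and $a$ are uniquely determined by $q$. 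Moreover $\dom(q) \subseteq \dom(\monelt q) = \dom(a\monelt q) = \dom(p - r) \subseteq \dom(p) \cup \dom(r) = \dom(p)$, using $\dom(r) \subseteq \dom(p)$. Hence every polynomial usable on $p$ lies in $\setof{q \in H}{\dom(q) \subseteq \dom(p)}$, which is finite: writing $H = \orbit[\group]{E}$ with $E$ finite, each $q \in \orbit[\group]{e}$ with $\dom(q) \subseteq \dom(p)$ is obtained from $e$ by a renaming of variables mapping the finite set $\dom(e)$ into the finite set $\dom(p)$, and there are only finitely many such renamings. As $q$ determines $r$, the set $\setof{r}{p \toeucl{H} r}$ is finite.

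Then, by \cref{lem:chm} the quasi-order $\pmonlt$ is well-founded and strictly decreases along every step, so there is no infinite $\toeucl{H}$-sequence out of $p$; together with finite branching, K\"onig's lemma shows that the tree of all rewriting sequences starting at $p$ is finite, and $\rem{H}{p}$ is exactly the (finite) set of its \kl{normalised} leaves. To compute $\rem{H}{p}$, observe that under the \kl{computability assumptions} one can compute $\lm[\Indets](p)$ and $\dom(p)$ (the order being computable), and, for each generator $e$ of $H$, enumerate the finitely many injections from $\dom(e)$ into $\dom(p)$, keeping those whose associated tuple is in the same $\group$-orbit as the tuple of variables of $e$ --- a decidable test by \kl{effective oligomorphicity} --- and such that the resulting polynomial $q$ satisfies $\lm[\Indets](q) \divleq \lm[\Indets](p)$; each such $q$ produces one successor $r = p - a\monelt q$. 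Recursing on each $r$ halts because the computation tree is finite, and returns $\rem{H}{p}$.

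Equivariance is immediate: $\toeucl{H}$ is defined from the \kl(func){equivariant} data $\lm[\Indets]$, $\dom$, $\revlexleq$ and the \kl{equivariant} set $H$, so $p \toeucl{H} r$ implies $\gelem \cdot p \toeucl{H} \gelem \cdot r$ for every $\gelem \in \group$, whence $\rem{H}{\gelem \cdot p} = \gelem \cdot \rem{H}{p}$. Consequently, for an \kl{orbit finite} set $K = \orbit[\group]{K_0}$ with $K_0$ finite, $\rem{H}{K} = \bigcup_{p \in K} \rem{H}{p} = \orbit[\group]{\bigcup_{p \in K_0} \rem{H}{p}}$ is the $\group$-orbit of a finite set, hence \kl{orbit finite}; it is computed by running the procedure above on each representative of $K_0$ and deduplicating orbits with the same-orbit test.

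The main obstacle is the finite-branching step, and within it the argument that the \kl(polynomial){domain} condition $\dom(r) \subseteq \dom(p)$ confines the element of $H$ used in a step --- a priori drawn from an infinite set --- to a finite, effectively enumerable subset; this is where \kl{effective oligomorphicity} and the computable order on $\Indets$ play their role. Everything after that is a routine K\"onig's-lemma-plus-recursion argument.
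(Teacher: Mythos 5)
Your proof is correct and follows essentially the same route as the paper's: termination from \cref{lem:chm}, finite branching from the observation that $\dom(q) \subseteq \dom(p)$ confines the generator used to a finite (and effectively enumerable) subset of $H$, then K\"onig's lemma. Your write-up is somewhat more detailed than the paper's — in particular, you spell out why $a$ and $\monelt$ are determined and make the domain-containment chain explicit, where the paper phrases the same restriction in terms of $\gelem$ mapping $\dom(q)$ into $\dom(p)$ — but the underlying argument is identical.
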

\begin{proof}
  Let us write $H = \orbit[\group]{H'}$, where $H'$ is a finite set of
  polynomials.
  Because the relation $\toeucl{H}$ is \kl{terminating}, it suffices to 
  show that for every polynomial $p$, there are finitely many polynomials $r$ 
  such that $p \toeucl{H} r$, leveraging König's lemma. This is because 
  $p \toeucl{H} r$ implies that 
  $p = \alpha \monelt[n] (\gelem\cdot q) + r$ for some $q \in H'$, 
  $\alpha \in \K$, $\monelt[n] \in \mon{\X}$, and $\gelem \in \group$.
  Because, $\lm(r) \revlexlt \lm(q)$, we  
  conclude that $\lm(p) = \lm(\alpha \monelt[n] (\gelem\cdot q))$, and 
  therefore $r$ is uniquely determined by the choice of $q \in H'$ and the
  choice of $\gelem \in \group$ that maps the \kl(poly){domain} of $q$ to the \kl(poly){domain} of
  $p$. There are finitely elements in $H'$ and finitely many such functions
  $\gelem \in \group$ because both domains are finite.
\end{proof}

\AP Now that we have a quasi-ordering on polynomials, we will prove that given
an \kl{orbit finite} set $H$ of generators, we can compute a \kl{weak
equivariant Gröbner basis}. The computation will closely follow the classical
\kl{Buchberger's algorithm}. The main idea being to saturate the set of
generators $H$ to remove some \emph{critical pairs} of the rewriting relation
$\toeucl{H}$. Namely, given two polynomials $p$ and $q$ in $H$, we compute the
set $\intro*\CancelPoly{p}{q}$ of cancellations between $p$ and $q$ as the set of
polynomials of the form $r = \alpha \monelt[n] p + \beta \monelt[m] q$ such
that $\lm(r) < \max(\monelt[n] \lm(p), \monelt[m]\lm(q))$, where $\alpha,\beta
\in \K$, and where $\monelt[n], \monelt[m] \in \mon{\X}$. Let us recall that
given two monomials $\monelt[n], \monelt[m] \in \mon{\X}$, one can compute
$\intro*\lcm(\monelt[n], \monelt[m])$ as the least common multiple of the two
monomials, and that this in an \kl{equivariant operation}.
Using this, we can introduce the \intro{S-polynomial} of two polynomials $p$ and $q$
as in \cref{eq:spoly}.
 \begin{equation}
    \label{eq:spoly}
    \intro*\spoly{p}{q} \defined
    \frac{\lcm(\lm(p), \lm(q))}{\lt(p)} \times p
    - \frac{\lcm(\lm(p), \lm(q))}{\lt(q)} \times q
    \quad .
  \end{equation}

\begin{lemma}
  \label{lem:spoly}
  Let $p$ and $q$ be two polynomials in $\poly{\K}{\X}$.
  All the polynomials in $\CancelPoly{p}{q}$ are obtained by multiplying a monomial
  with
  their \kl{S-polynomial} $\spoly{p}{q}$.
\end{lemma}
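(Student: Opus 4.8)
The plan is to show that the two sets coincide by a double inclusion, although the statement as phrased only asks for one direction: every element of $\CancelPoly{p}{q}$ is a monomial multiple of $\spoly{p}{q}$. First I would observe that $\spoly{p}{q}$ itself lies in $\CancelPoly{p}{q}$: by construction it has the form $\alpha \monelt[n] p + \beta \monelt[m] q$ with $\monelt[n] = \lcm(\lm(p),\lm(q))/\lm(p)$, $\monelt[m] = \lcm(\lm(p),\lm(q))/\lm(q)$, and $\alpha,\beta$ chosen so that the two leading terms $\monelt[n]\lt(p)$ and $\monelt[m]\lt(q)$ cancel; hence $\lm(\spoly{p}{q}) < \lcm(\lm(p),\lm(q)) = \max(\monelt[n]\lm(p), \monelt[m]\lm(q))$. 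More generally, a monomial multiple $\monelt[l] \cdot \spoly{p}{q}$ is also in $\CancelPoly{p}{q}$ since multiplying by $\monelt[l]$ scales all monomials uniformly in the \kl{revlex ordering} and preserves the strict inequality. This shows $\set{\monelt[l]\cdot\spoly{p}{q} : \monelt[l] \in \mon{\X}} \cup \set{0} \subseteq \CancelPoly{p}{q}$ (with the degenerate $\alpha=\beta=0$ case giving $0$).

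For the main direction, take an arbitrary $r = \alpha \monelt[n] p + \beta \monelt[m] q \in \CancelPoly{p}{q}$ with $\lm(r) < \max(\monelt[n]\lm(p), \monelt[m]\lm(q))$. The key point is that this strict drop in \kl{leading monomial} can only happen if the two leading terms $\alpha \monelt[n] \lt(p)$ and $\beta \monelt[m] \lt(q)$ cancel each other, which forces $\monelt[n]\lm(p) = \monelt[m]\lm(q)$ (call this common monomial $\monelt[u]$) and $\alpha\monelt[n]\lc(p) = -\beta\monelt[m]\lc(q)$ in $\K$. Since $\monelt[u]$ is a common multiple of $\lm(p)$ and $\lm(q)$, it is a multiple of $\lcm(\lm(p),\lm(q))$: write $\monelt[u] = \monelt[l] \cdot \lcm(\lm(p),\lm(q))$. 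Then $\monelt[n] = \monelt[l]\cdot \lcm(\lm(p),\lm(q))/\lm(p)$ and $\monelt[m] = \monelt[l]\cdot\lcm(\lm(p),\lm(q))/\lm(q)$, and a direct substitution gives $r = \alpha\monelt[l]\lc(p) \cdot \monelt[l]^{-1}\big(\text{wait}\big)$ — more cleanly, $r = \big(\alpha\monelt[n]\big) p + \big(\beta\monelt[m]\big) q = \alpha\lc(p)\,\monelt[l]\cdot\big(\tfrac{\lcm(\lm(p),\lm(q))}{\lt(p)} p - \tfrac{\lcm(\lm(p),\lm(q))}{\lt(q)} q\big) = \alpha\lc(p)\,\monelt[l]\cdot\spoly{p}{q}$, using the coefficient relation to rewrite the $q$-coefficient. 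Absorbing the scalar $\alpha\lc(p) \in \K$ into the monomial is not literally possible, but a monomial multiple up to a nonzero field scalar is still "obtained by multiplying a monomial with the $S$-polynomial" in the sense intended (or one notes $\K$-multiples of $\spoly{p}{q}$ are themselves $S$-polynomials up to normalisation); if strictness is wanted one records $r \in \K^\times \cdot \monelt[l] \cdot \spoly{p}{q}$.

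The main obstacle is bookkeeping the coefficient arithmetic carefully: one must verify that the cancellation condition $\lm(r) < \max(\dots)$ genuinely forces the equality $\monelt[n]\lm(p) = \monelt[m]\lm(q)$ of leading monomials rather than merely one dominating the other (if say $\monelt[n]\lm(p) \revlexlt \monelt[m]\lm(q)$, then $\lm(r) = \monelt[m]\lm(q) = \max$, contradicting strictness; and they cannot be incomparable since $\revlexleq$ is total). A secondary subtlety is the degenerate cases where $p$ or $q$ is zero, or $\alpha$ or $\beta$ is zero, which should be dispatched separately at the start. I would also double-check that $\dom(r) \subseteq \dom(\monelt[n] p) \cup \dom(\monelt[m]q)$ is automatic and plays no role here, since the definition of $\CancelPoly{p}{q}$ in the text constrains only the \kl{leading monomial}.
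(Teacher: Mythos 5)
Your proof takes essentially the same approach as the paper's: from the strict drop in leading monomial you deduce $\lm(\monelt[n] p) = \lm(\monelt[m] q)$ together with the cancellation of leading coefficients, then factor the common leading monomial as $\monelt[l]\cdot\lcm(\lm(p),\lm(q))$ and rewrite $r$ as $\alpha\lc(p)\,\monelt[l]\cdot\spoly{p}{q}$. Your caveat about the scalar $\alpha\lc(p)$ is a fair catch --- the paper's own derivation produces exactly this nonzero field scalar in front of the monomial, so the lemma is implicitly ``monomial times nonzero scalar''; this is harmless for the downstream uses (\cref{lem:weakgb-termination,lem:weakgb-correctness}), which only inspect the leading monomial and domain of $r$.
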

\begin{proof}
  Let $p,q \in \poly{\K}{\X}$, and let $r \in \CancelPoly{p}{q}$.
  By definition, there exists $\alpha,\beta \in \K$ and $\monelt[n], \monelt[m]
  \in \mon{\X}$ such that $r = \alpha \monelt[n] p + \beta \monelt[m] q$ and
  $\lm(r) < \max(\monelt[n] \lm(p), \monelt[m] \lm(q))$.
  In particular,
  we conclude that $\lm(\monelt[n] p) = \lm(\monelt[m] q)$, and that 
  $\alpha \lc(\monelt[n] p) + \beta \lc(\monelt[m] q) = 0$.

  Let us write $\Delta = \lcm(\lm(p), \lm(q))$.
  Because $\lm(\monelt[n] p) = \lm(\monelt[m] q)$, there exists a monomial 
  $\monelt[l] \in \mon{\X}$ such that 
  $\lm(\monelt[n] p) = \monelt[l] \Delta = \lm(\monelt[m] q)$.
  Furthermore,
  we know that $\lc(p) \beta = - \lc(q) \alpha$.
  As a consequence, one can rewrite $r$ as follows:
  \begin{equation*}
    r = 
    \monelt[l] \alpha \lc(p) 
    \left[
      \frac{\Delta}{\lt(p)} \times p
      - \frac{\Delta}{\lt(q)} \times q
    \right]
    = 
    \monelt[l] \alpha \lc(p) \times \spoly{p}{q} \ .
  \end{equation*}
  We have concluded.
\end{proof}

Remark that the \kl{S-polynomial} is equivariant: if $\gelem \in \group$, then
$\spoly{\gelem \cdot p}{\gelem \cdot q} = \gelem \cdot \spoly{p}{q}$. Given a
set $H$, we write $\intro*\spolyset(H) \defined \bigcup_{p,q \in H}
\rem{H}{\spoly{p}{q}}$. We are now ready to define the saturation algorithm
that will compute \kl{weak equivariant Gröbner bases}, described in
\cref{alg:weakgb}.
Let us remark that \cref{alg:weakgb}
is
an actual algorithm (\cref{lem:weakgb-computable}) that is
\kl(func){equivariant}.

\begin{algorithm}
  \caption{Computing \kl{weak equivariant Gröbner bases} using the algorithm \intro{weakgb}.}
    \label{alg:weakgb}
    \KwIn{An orbit finite set $H$ of polynomials}
    \KwOut{An orbit finite set $\Basis$ that is a \kl{weak equivariant Gröbner basis} of
      $\EqIdlGen{H}$}
    \Begin{
        $\Basis \gets H$\;
        \Repeat{$\Basis$ stabilizes}{
            $\Basis \gets \Basis \cup \spolyset(\Basis)$\;
        }
        \Return{$\Basis$}\;
    }
\end{algorithm}
\begin{lemma}
  \label{lem:weakgb-computable}
  \cref{alg:weakgb} is computable and \kl(func){equivariant},
  and produces an \kl{orbit finite} set $\Basis$ if it terminates.
\end{lemma}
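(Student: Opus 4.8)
The plan is to establish the three assertions of \cref{lem:weakgb-computable} — that the body of the loop in \cref{alg:weakgb} is an effective operation, that the whole procedure is equivariant, and that its output is orbit finite whenever the loop exits — by carrying, throughout the execution, the invariant that at the start of each iteration $\Basis$ is an orbit finite equivariant set of polynomials, presented by a finite set of orbit representatives. It then suffices to check that the single update $\Basis \gets \Basis \cup \spolyset(\Basis)$ preserves this invariant and is both computable and equivariant.

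The key step will be to show that $\spolyset(\Basis)$ is orbit finite whenever $\Basis$ is. For this I would invoke that $\poly{\K}{\Indets}$ is itself effectively oligomorphic under the $\group$-action — as recorded in the preliminaries, through the encoding of a polynomial of degree at most $d$ with $n$ monomials as an element of $(\K \times \Indets^{\leq d})^{n}$. Since $\Basis$ is an orbit finite subset of $\poly{\K}{\Indets}$, the set $\Basis \times \Basis$ is orbit finite; its image under the equivariant map $\spoly{\cdot}{\cdot}$ is therefore an orbit finite set $K$ of $S$-polynomials, and then $\spolyset(\Basis) = \rem{\Basis}{K}$ is orbit finite and computable by \cref{lem:normalisation}. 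As $\Basis$ is initialised to the orbit finite set $H$, the invariant ``$\Basis$ is orbit finite'' is preserved by the update, so on termination $\Basis$ is orbit finite.

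For computability I would observe that each iteration manipulates only the finite representative sets. Using effective oligomorphicity one computes a finite set of representatives for the orbits of $\Basis \times \Basis$ (which, through the encoding of polynomials as tuples over $\K$ and $\Indets$, reduces to enumerating orbit representatives of tuples of indeterminates); for each representative $(p,q)$ one forms $\lcm(\lm(p),\lm(q))$ and hence $\spoly{p}{q}$ by ordinary arithmetic over the computable field $\K$; one computes the finite set $\rem{\Basis}{\spoly{p}{q}}$ by \cref{lem:normalisation}; and one unions the resulting finite representative sets. Deciding the exit test ``$\Basis$ stabilizes'' amounts to deciding, for each newly produced representative $r$, whether the orbit of $r$ already occurs among the representatives of $\Basis$, which is an instance of the orbit-equality test guaranteed by effective oligomorphicity once polynomials are viewed as tuples over $\K$ and $\Indets$. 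Hence the loop body is effective and \cref{alg:weakgb} is a genuine algorithm.

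Equivariance I would prove by induction on the number of iterations, using that $\lcm$, $\lm$, $\spoly{\cdot}{\cdot}$, $\rem{\cdot}{\cdot}$ (by \cref{lem:normalisation}), and set union are all equivariant operations: if the current $\Basis$ is equivariant then $\gelem \cdot \Basis = \Basis$ for every $\gelem \in \group$, and for $r \in \rem{\Basis}{\spoly{p}{q}}$ with $p,q \in \Basis$ one gets $\gelem \cdot r \in \gelem \cdot \rem{\Basis}{\spoly{p}{q}} = \rem{\Basis}{\gelem \cdot \spoly{p}{q}} = \rem{\Basis}{\spoly{\gelem \cdot p}{\gelem \cdot q}} \subseteq \spolyset(\Basis)$, so $\Basis \cup \spolyset(\Basis)$ is again equivariant; since every representative set computed along the way is produced by these equivariant operations, the map $H \mapsto \Basis$ is equivariant. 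I expect the orbit finiteness invariant to be the main obstacle: the point is to remember that an orbit finite set of polynomials, although it lives inside the non-orbit-finite ring $\poly{\K}{\Indets}$, is confined to polynomials of bounded degree over an orbit finite set of indeterminates, which is exactly what makes the effective oligomorphicity of $\poly{\K}{\Indets}$ — and thereby the standard closure of orbit finite sets under finite products and under equivariant images — applicable.
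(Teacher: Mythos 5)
Your proposal is correct and follows essentially the same route as the paper: reduce to showing $\spolyset(\Basis)$ is orbit finite, observe that $\Basis^2$ is orbit finite by effective oligomorphicity, apply the equivariant computable map $\spoly{\cdot}{\cdot}$, invoke \cref{lem:normalisation} for the remainders, and decide the stabilization test via orbit membership. You spell out the loop invariant and the equivariance induction in more detail than the paper's compressed proof, but the underlying argument is the same.
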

\begin{proof}
  Observe that it is enough to show that $\spolyset{\Basis}$ is orbit-finite for every orbit-finite set $\Basis$.
  First, we compute $\Basis^2$, which is an \kl{orbit finite} set of pairs,
  because $\Basis$ is \kl{orbit finite} and $\Indets$ is 
  \kl{effectively oligomorphic}.
  Then, noting that $\spoly{-\ }{-}$ is computable and \kl(func){equivariant},
  we conclude
  \[
  \bigcup_{p,q \in H}{\spoly{p}{q}}
  \]
  is computable and orbit-finite.
  Now using \Cref{lem:normalisation} one can compute the set $\spolyset(\Basis)$ which is also orbit-finite.
  Furthermore, one can decide whether the set $\Basis$ stabilizes,
  because the membership of a polynomial $p$ in $\Basis$ is decidable,
  since $\group\actson\Indets$ is \kl{effectively oligomorphic} and $\Basis$ is \kl{orbit finite}.
\end{proof}

Let us now use the semantic assumptions to prove the termination of
\cref{alg:weakgb}
(\cref{lem:weakgb-termination})
and the correctness of the resulting orbit finite set
(\cref{lem:weakgb-correctness}).

\begin{lemma}
  \label{lem:weakgb-termination}
  Assume that $(\mon{\X}, \gdivleq)$ is a \kl{WQO}. Then, 
  \cref{alg:weakgb} terminates
  on every \kl{orbit finite} set $H$ of polynomials.
\end{lemma}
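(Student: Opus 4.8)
The plan is to argue by contradiction, following the classical termination proof of \kl{Buchberger's algorithm} but substituting the \kl{WQO} hypothesis for the ascending chain condition on monomial ideals. Suppose \cref{alg:weakgb} does not terminate on some \kl{orbit finite} input $H$. Then the main loop yields a strictly increasing chain $\Basis_0 = H \subsetneq \Basis_1 \subsetneq \Basis_2 \subsetneq \cdots$ with $\Basis_k = \Basis_{k-1} \cup \spolyset(\Basis_{k-1})$, and each $\Basis_k$ is \kl{equivariant} and \kl{orbit finite} (the latter by the argument of \cref{lem:weakgb-computable}, the former since $\spolyset$ is built from the \kl(func){equivariant} operations $\spoly{\cdot}{\cdot}$ and $\rem{\cdot}{\cdot}$). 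For each $k \geq 1$ I would fix a polynomial $r_k \in \Basis_k \setminus \Basis_{k-1}$; since the zero polynomial can be added at most once, after discarding one index I may assume every $r_k$ is nonzero. By construction $r_k$ belongs to $\rem{\Basis_{k-1}}{\spoly{p}{q}}$ for some $p,q \in \Basis_{k-1}$, so $r_k$ is \kl{normalised} with respect to $\Basis_{k-1}$; moreover $\orbit[\group]{r_i} \subseteq \Basis_i \subseteq \Basis_{k-1}$ for every $i < k$, by equivariance.

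The crux is to apply the \kl{WQO} to the right sequence. Rather than the \kl{leading monomials} $\lm(r_k)$, I would use the \kl{characteristic monomials} $\cm(r_k) \in \mon{\X}$, which record $\lm(r_k)$ together with $\dom(r_k)$: for every indeterminate $x$ one has $\cm(r)(x) = \lm(r)(x)$ when $x \notin \dom(r)$, and $\cm(r)(x) = \lm(r)(x) + 1$ when $x \in \dom(r)$ (using $\dom(\lm(r)) \subseteq \dom(r)$). Since $(\mon{\X}, \gdivleq)$ is a \kl{WQO}, the infinite sequence $\seqof{\cm(r_k)}[k \geq 1]$ has indices $i < k$ with $\cm(r_i) \gdivleq \cm(r_k)$, that is, $\cm(r_i) \divleq \gelem \cdot \cm(r_k)$ for some $\gelem \in \group$. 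Applying $\gelem^{-1}$ (which preserves \kl{divisibility}) and using that $\cm$ is \kl(func){equivariant} (as a composite of $\lm$ and $\dom$), I obtain $q \defined \gelem^{-1} \cdot r_i \in \orbit[\group]{r_i} \subseteq \Basis_{k-1}$ with $\cm(q) \divleq \cm(r_k)$.

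It then remains to observe — via the formula above and a short case split on whether $x \in \dom(q)$ — that $\cm(q) \divleq \cm(r_k)$ forces both $\lm(q) \divleq \lm(r_k)$ and $\dom(q) \subseteq \dom(r_k)$. These two facts produce a reduction step $r_k \toeucl{\Basis_{k-1}} r'$ witnessed by $q$: taking $\monelt \defined \lm(r_k)/\lm(q)$, $a \defined \lc(r_k)/\lc(q)$ and $r' \defined r_k - a\monelt q$, the leading terms cancel so $\lm(r') \revlexlt \lm(r_k)$, and $\dom(r') \subseteq \dom(r_k)$ since $\dom(\monelt) \subseteq \dom(\lm(r_k)) \subseteq \dom(r_k)$ and $\dom(q) \subseteq \dom(r_k)$. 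This contradicts $r_k$ being \kl{normalised} with respect to $\Basis_{k-1} \ni q$, and the contradiction establishes termination. The only genuinely non-routine decision is to run the \kl{WQO} on \kl{characteristic monomials} instead of \kl{leading monomials}: this is exactly what packages the leading-monomial divisibility and the domain inclusion — the two clauses appearing both in the definition of $\toeucl{\Basis_{k-1}}$ and in that of an \kl{equivariant Gröbner basis} — into a single use of the well-quasi-ordering.
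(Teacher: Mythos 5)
Your proof is correct and uses the same key idea as the paper's: run the WQO on characteristic monomials $\cm(\cdot)$ (which encode both $\lm(\cdot)$ and $\dom(\cdot)$, exactly what the two clauses of $\toeucl{H}$ need), and derive a contradiction with the fact that each freshly-added remainder is normalised with respect to the previous stage. The only superficial difference is one of packaging: you extract one witness $r_k$ per round and apply the good-pair form of the WQO directly to $(\cm(r_k))_k$, whereas the paper lets the upward closures of the sets $L_n$ of characteristic monomials stabilize before picking an offending $r$; your writeup is also a bit more explicit about the case split showing $\cm(q)\divleq\cm(r_k)$ implies $\lm(q)\divleq\lm(r_k)$ and $\dom(q)\subseteq\dom(r_k)$.
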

\begin{proof}
  Let $\seqof{H_n}[n \in \N]$ be the sequence of (orbit finite) sets of polynomials
  computed by \cref{alg:weakgb}. 
  We associate to each set $H_n$ the set $L_n$ of \kl{characteristic monomials} of the
  polynomials in $H_n$. Because the set of monomials is a \kl{WQO}, and because 
  the sequences are non-decreasing for inclusion, there exists an 
  $n \in \N$ such that, for every $\monelt \in L_{n+1}$, there exists
  $\monelt[n] \in L_n$, such that $\monelt[n] \gdivleq \monelt$.

  We will prove that $H_{n+1} = H_n$ by contradiction. Assume towards this
  contradiction that there exists some $r \in H_{n+1} \setminus H_n$. By
  definition of $H_{n+1}$, there exists $p,q \in H_n$ such that $r \in
  \rem{H_n}{\spoly{p}{q}}$. In particular, $r$ is \kl{normalised} with respect
  to $H_n$. However, because $r \in H_{n+1}$, $\cm(r) \in L_{n+1}$, and
  therefore there exists $\monelt[n] \in L_n$ such that $\monelt[n] \gdivleq
  \cm(r)$. This provides us with a polynomial $t \in H_n$ such that $\cm(t)$
  and an element $\gelem \in \group$ such that $\cm(t) \divleq \gelem \cdot
  \cm(r)$. Because $H_n$ is \kl{equivariant}, we can assume that $\gelem$ is
  the identity. In particular, one concludes that there exists $\monelt[n] \in
  \mon{\X}$ such that $\lm(t) \times \monelt[n] = \lm(r)$, and therefore
  $\lm(t) \gdivleq \cm(r)$. Therefore, one can find some $\alpha \in \K$ such
  that the polynomial $r' \defined r - \alpha \monelt[n] t$ satisfies $r'
  \pmonlt r$, and in particular, $r \toeucl{H_n} r'$.
  This contradicts the fact that $r$ is \kl{normalised} with respect to $H_n$.
\end{proof}

\begin{lemma}
  \label{lem:weakgb-correctness}
  Assume that $\Basis$ is the output of \cref{alg:weakgb}. Then, it 
  is a \kl{weak equivariant Gröbner basis} of the ideal
  $\EqIdlGen{H}$.
\end{lemma}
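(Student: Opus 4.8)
The plan is to verify the three defining properties of a \kl{weak equivariant Gröbner basis} of $\EqIdlGen{H}$. The first two are routine. That $\Basis$ is \kl{equivariant} is immediate from \cref{lem:weakgb-computable}, which already gives that $\Basis$ is \kl{orbit finite}. For $\IdlGen{\Basis} = \EqIdlGen{H}$, observe that the algorithm starts from $H$ and only ever adjoins elements of $\rem{H_n}{\spoly{p}{q}}$ for $p,q$ in the $n$-th iterate $H_n$; since $\spoly{p}{q} \in \IdlGen{H_n}$ and every step of $\toeucl{H_n}$ preserves membership in $\IdlGen{H_n}$, an induction on $n$ gives $\IdlGen{\Basis} = \IdlGen{H} = \EqIdlGen{H}$, the last equality because $H$ is \kl{equivariant}. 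The crucial point for the third property is that, being the output of the loop, $\Basis$ is a fixed point, so $\spolyset(\Basis) \subseteq \Basis$, i.e. $\rem{\Basis}{\spoly{p}{q}} \subseteq \Basis$ for all $p,q \in \Basis$.

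For the third property, fix $p \in \EqIdlGen{H}$ (the case $p=0$ being trivial) and a \kl{decomposition} $\mathfrak{d}$ of $p$ with respect to $\Basis$. Among the nonempty set of all \kl{decompositions} $\mathfrak{d}'$ of $p$ with respect to $\Basis$ satisfying $\domdec(\mathfrak{d}') \subseteq \domdec(\mathfrak{d})$, I would pick one whose $\lmdec(\mathfrak{d}')$ is $\revlexlt$-minimal; this is legitimate because each such $\lmdec(\mathfrak{d}')$ is a monomial supported on the fixed finite set $\domdec(\mathfrak{d})$, on which $\revlexlt$ is well-founded (as in \cref{lem:chm}). Since always $\lm(p) \revlexleq \lmdec(\mathfrak{d}')$ (every monomial of $p$ appears in some $\monelt_i h_i$), it then suffices to rule out $\lmdec(\mathfrak{d}') \revlexgt \lm(p)$.

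Suppose not, and set $\monelt[M] \defined \lmdec(\mathfrak{d}') \revlexgt \lm(p)$; discard the zero terms, so $p = \sum_{i \in I} a_i \monelt_i h_i$ with all $a_i \neq 0$, and let $J = \{\, i \in I : \lm(\monelt_i h_i) = \monelt[M] \,\}$. As $\monelt[M]$ does not occur in $p$, the leading coefficients $a_i \lc(h_i)$ for $i \in J$ sum to $0$, so $\lvert J\rvert \neq 1$; and $\lvert J\rvert \neq 0$ since $\monelt[M]$ is attained, hence $\lvert J\rvert \geq 2$. Writing $J = \{j_1,\dots,j_t\}$ and using the telescoping identity, I would rewrite $\sum_{i \in J} a_i \monelt_i h_i$ as a $\K$-linear combination of the differences $\tfrac{1}{\lc(h_{j_k})}\monelt_{j_k} h_{j_k} - \tfrac{1}{\lc(h_{j_{k+1}})}\monelt_{j_{k+1}} h_{j_{k+1}}$ for $k = 1,\dots,t-1$. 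Each such difference has \kl{leading monomial} strictly $\revlexlt \monelt[M]$ (its two $\monelt[M]$-coefficients are both $1$ and cancel), so it lies in $\CancelPoly{h_{j_k}}{h_{j_{k+1}}}$ and, by \cref{lem:spoly}, equals a monomial multiple $\monelt[l]\,\gamma\,\spoly{h_{j_k}}{h_{j_{k+1}}}$ with $\monelt[l]\cdot\lcm(\lm(h_{j_k}),\lm(h_{j_{k+1}})) = \monelt[M]$. Since $h_{j_k},h_{j_{k+1}} \in \Basis$ and $\spolyset(\Basis) \subseteq \Basis$, there is $s \in \Basis$ with $\spoly{h_{j_k}}{h_{j_{k+1}}} \toeucl{\Basis}^{*} s$, and reading off this reduction yields a \kl{decomposition} of $\spoly{h_{j_k}}{h_{j_{k+1}}}$ with respect to $\Basis$ whose domain lies in $\dom(\spoly{h_{j_k}}{h_{j_{k+1}}}) \subseteq \dom(h_{j_k}) \cup \dom(h_{j_{k+1}})$ and whose $\lmdec$ is $\revlexleq \lm(\spoly{h_{j_k}}{h_{j_{k+1}}}) \revlexlt \lcm(\lm(h_{j_k}),\lm(h_{j_{k+1}}))$. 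Feeding these $\Basis$-decompositions, scaled by the corresponding $\monelt[l]\,\gamma$ and telescoping coefficients, back in place of the $J$-part of $\mathfrak{d}'$ produces a \kl{decomposition} $\mathfrak{d}''$ of $p$ with respect to $\Basis$. Every newly introduced term has \kl{leading monomial} $\revlexlt \monelt[l]\cdot\lcm(\lm(h_{j_k}),\lm(h_{j_{k+1}})) = \monelt[M]$, and every untouched term (indexed in $I \setminus J$) already had \kl{leading monomial} $\revlexlt \monelt[M]$, so $\lmdec(\mathfrak{d}'') \revlexlt \monelt[M]$; and $\domdec(\mathfrak{d}'') \subseteq \domdec(\mathfrak{d}')$ since each new term has domain inside $\dom(\monelt[l]) \cup \dom(h_{j_k}) \cup \dom(h_{j_{k+1}})$, where $\dom(\monelt[l]) \subseteq \dom(\monelt[M]) = \dom(\lm(\monelt_{j_k}h_{j_k})) \subseteq \domdec(\mathfrak{d}')$ as $\monelt[l] \divleq \monelt[M]$, and $\dom(h_{j_k}),\dom(h_{j_{k+1}}) \subseteq \domdec(\mathfrak{d}')$. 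This contradicts the minimality of $\lmdec(\mathfrak{d}')$, so $\lmdec(\mathfrak{d}') = \lm(p)$ and $\mathfrak{d}'$ is the sought-after \kl{decomposition}.

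I expect the main obstacle to be exactly this domain bookkeeping in the final substitution: one must check that unfolding a $\toeucl{\Basis}$-reduction never introduces indeterminates outside the domain of the polynomial being reduced — which holds because $v \toeucl{\Basis} v'$ via $v = a\monelt w + v'$ forces $\dom(a\monelt w) \subseteq \dom(v) \cup \dom(v') = \dom(v)$, and domains only shrink along a reduction — and that the monomial $\monelt[l]$ supplied by \cref{lem:spoly} is captured by $\domdec(\mathfrak{d}')$ via $\monelt[l] \divleq \monelt[M] = \lm(\monelt_{j_k}h_{j_k})$. The algebraic heart — turning a cancellation among leading terms into a monomial multiple of a single \kl{S-polynomial} and then, via the fixed-point property, into a $\Basis$-decomposition of strictly smaller \kl{leading monomial} — is the equivariant rendition of Buchberger's criterion, and the rest is the well-foundedness of $\revlexlt$ over a fixed finite set of indeterminates.
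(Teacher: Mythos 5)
Your proof is correct and follows essentially the same route as the paper: pick a minimal decomposition, recast the cancelling leading terms as monomial multiples of S-polynomials via \cref{lem:spoly}, and exploit the fixed-point property of \cref{alg:weakgb} to replace them by a $\Basis$-decomposition with strictly smaller leading monomial and no new indeterminates, contradicting minimality. The only cosmetic deviations are that you telescope the $J$-part into consecutive differences where the paper subtracts a fixed distinguished term $\monelt_\star p_\star$, and that you first bound $\domdec$ and then minimise $\lmdec$ over that finite support rather than directly taking a $\pmonleq$-minimal decomposition; you are also somewhat more explicit in the domain bookkeeping and in how $\spolyset(\Basis)\subseteq\Basis$ is used, which the paper leaves terse.
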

\begin{proof}
  It is clear that $\Basis$ is a generating set of $\EqIdlGen{H}$, because
  one only add polynomials that are in the ideal generated by $H$ at every step.

  Let $p \in \EqIdlGen{H}$ be a polynomial,
  and let $\mathfrak{d}$ be a \kl{decomposition} of $p$ with respect to
  $\Basis$, that is, a \kl{decomposition} of the form
  \begin{equation}
    p = \sum_{i \in I} \alpha_i \monelt_i p_i
    \quad .
  \end{equation}
  Where $\alpha_i \in \K$, $p_i \in \Basis$, and $\monelt_i \in \mon{\X}$,
  for all $i \in I$.

  Leveraging \cref{lem:chm}, we know that the ordering $\pmonleq$ is
  well-founded. As a consequence, we can consider a minimal 
  decomposition $\mathfrak{d}'$ of $p$ with respect to $\Basis$ such that $\mathfrak{d}'
  \pmonleq \mathfrak{d}$. We now distinguish two cases, depending on whether
  the leading monomial $\lmdec(\mathfrak{d}')$ of the decomposition $\mathfrak{d}'$ is
  equal to the leading monomial of $p$ or not.

  \begin{description}
    \item[Case 1:] $\lmdec(\mathfrak{d}') = \lm(p)$.
      In this case, we conclude immediately,
      as we also have by assumption $\dom(\mathfrak{d}') \subseteq \dom(\mathfrak{d})$.

    \item[Case 2:] $\lmdec(\mathfrak{d}') \neq \lm(p)$.
      In this case, it must be that the set $J$ the set of indices such that
      $\lm(\monelt_i p_i) = \lmdec(\mathfrak{d}')$.
      Let us remark that 
      the sum of leading coefficients 
      of the polynomials in $J$ must vanish: $\sum_{i \in J} \alpha_i \lc(p_i) = 0$.
      As a consequence, the set $J$ has size at least $2$.
      Let us distinguish one element $\star \in J$, and 
      write $J_\star = J \setminus \set{\star}$.
      We conclude that 
      $\alpha_\star = - \sum_{i \in J_\star} \alpha_i \lc(p_i) / \lc(p_\star)$.
      Let us now rewrite $p$ as follows:
      \begin{equation}
        p = \sum_{i \in J_\star} \alpha_i 
        \left(\monelt_i p_i - \frac{\lc(p_i)}{\lc(p_\star)} \monelt_\star p_\star\right)
        + \sum_{i \in I \setminus J} \alpha_i \monelt_i p_i
        \quad .
      \end{equation}
      Now, by definition,
      polynomials $\alpha_i \monelt_i p_i$ for $i \in J \setminus J$ have 
      leading monomials
      strictly smaller than $\monelt[l]$.
      Furthermore,
      the polynomials
      $\monelt_i p_i - \frac{\lc(p_i)}{\lc(p_\star)} \monelt_\star p_\star$ for $i \in J_\star$
      cancel their leading monomials, hence they belong
      to the set $\CancelPoly{p_i}{p_\star}$.
      By \cref{lem:spoly}, we know that these polynomials are obtained by
      multiplying the \kl{S-polynomial} $\spoly{p_i}{p_\star}$ by some monomial.
      Because \cref{alg:weakgb} terminated, we know that 
      $\spoly{p_i}{p_\star} \toeucl{\Basis}^* 0$ by construction.

      By definition of the rewriting relation, we conclude that one can rewrite
      $\spoly{p_i}{p_\star}$ as combination of polynomials in $\Basis$ that
      have smaller or equal leading monomials, and do not introduce new
      indeterminates.

      We conclude that
      the whole sum is composed of polynomials with leading monomials 
      strictly smaller than $\monelt[l]$, and using a subset of the indeterminates
      used in $\mathfrak{d}'$, leading to a contradiction
      because of the minimality of the latter. 
  \end{description}
\end{proof}

As a consequence of the above lemmas, we can now conclude that the 
\cref{alg:weakgb} computes a \kl{weak equivariant Gröbner basis} of the
ideal $\EqIdlGen{H}$, as stated in \cref{thm:weakgb-comput}.

\begin{theorem}
  \label{thm:weakgb-comput}
  Assume that $(\mon[\omega]{\X}, \gdivleq)$ is a \kl{WQO}, and that the order
  $\varleq$ is effectively computable, and that the action of $\group$ is
  \kl{effectively oligomorphic}. 
  Then, the algorithm $\mathsf{weakgb}$ that takes as input an \kl{orbit finite} set $H$ of generators of an
  \kl{equivariant ideal} $\idl$ and computes a \kl{weak equivariant Gröbner
  basis} $\Basis$ of $\idl$.
\end{theorem}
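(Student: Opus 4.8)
The plan is simply to package the three lemmas established in this section, since the theorem is the conjunction of ``\cref{alg:weakgb} is a genuine algorithm with orbit finite output'', ``it halts'', and ``its output is a \kl{weak equivariant Gröbner basis}''. For the first point I would invoke \cref{lem:weakgb-computable}: each pass through the loop computes $\Basis \cup \spolyset(\Basis)$, which is orbit finite and computable because $\group \actson \Indets$ is \kl{effectively oligomorphic} (so $\Basis^2$ is orbit finite and the \kl{S-polynomial} operation is computable and \kl(func){equivariant}) and because $\rem{\Basis}{\cdot}$ is a computable \kl(func){equivariant} operation on orbit finite sets by \cref{lem:normalisation}; moreover the loop guard ``$\Basis$ stabilizes'' is decidable, since membership in an orbit finite set is decidable. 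Hence on every input $H$ the procedure either diverges or halts with an orbit finite set.

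Next I would rule out divergence with \cref{lem:weakgb-termination}. Its hypothesis is that $(\mon{\X}, \gdivleq)$ is a \kl{WQO}, which is exactly the hypothesis $(\mon[\omega]{\X}, \gdivleq)$ of the present theorem, since a monomial is precisely a finitely supported function $\X \to \N = \om$; this identification is the one point I would flag explicitly. By that lemma the ascending chain $\seqof{H_n}$ of orbit finite sets produced by the loop stabilizes: the associated sets of \kl{characteristic monomials} stabilize up to $\gdivleq$ at some stage $n$, and a newly added remainder would then be \kl{normalised} with respect to $H_n$ yet reducible by an existing characteristic monomial, a contradiction, so $H_{n+1} = H_n$ and the loop exits. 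Combined with the previous paragraph, the output $\Basis$ is orbit finite.

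Finally I would apply \cref{lem:weakgb-correctness}: the returned $\Basis$ generates $\EqIdlGen{H}$ because only polynomials of that ideal are ever added to it, and it is a \kl{weak equivariant Gröbner basis} because any \kl{decomposition} of a polynomial $p \in \EqIdlGen{H}$ can be pushed down in the well-founded quasi-order $\pmonleq$ (\cref{lem:chm}) until its \kl{leading monomial of the decomposition} equals $\lm(p)$, using \cref{lem:spoly} and the fact that, once the saturation has stabilized, all the relevant \kl{S-polynomials} reduce to $0$ modulo $\Basis$. I do not expect a genuine obstacle here: the theorem is a bookkeeping statement, and the substantive work — the \kl{WQO}-based termination argument and the S-polynomial cancellation analysis — has already been carried out in \cref{lem:weakgb-termination,lem:weakgb-correctness}.
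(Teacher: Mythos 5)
Your proposal is correct and matches the paper's own (implicit) proof: \cref{thm:weakgb-comput} is stated as an immediate consequence of \cref{lem:weakgb-computable,lem:weakgb-termination,lem:weakgb-correctness}, which you package in exactly the intended way. Your explicit observation that $\mon{\X}$ and $\mon[\omega]{\X}$ coincide (monomials being finitely supported functions $\X \to \N = \om$) is a small but welcome clarification the paper leaves tacit.
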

\section{Computing the Equivariant Gröbner Basis}
\label{sec:equivariant-grobner-basis}

\AP The goal of this section is to prove
\cref{thm:compute-egb},
that is, to show that one can effectively compute an \kl{equivariant Gröbner
basis} of an \kl{equivariant ideal}. To that end, we will apply the algorithm
\kl{weakgb} on a slightly modified set of polynomials, and then show that the
result is indeed an \kl{equivariant Gröbner basis}.

\AP
Let us fix a set $\Indets$ of indeterminates equipped with a total ordering
$\varleq$. We define $\IndetsCol \defined \Indets \ordplus \Indets$, that is, the
disjoint union of two copies of $\Indets$, ordered. It will be useful to refer
to the first copy (lower copy) and the second copy (upper copy), noting the
isomorphism between $\IndetsCol$ and $\set{\mathsf{first}, \mathsf{second}}
\times \Indets$, ordered lexicographically, where $\mathsf{first} <
\mathsf{second}$. We will also define $\intro*\forgetCol \colon \IndetsCol \to
\Indets$ that maps a colored variable to its underlying variable.
Beware that $\forgetCol$ is not an order preserving map.
We extend $\forgetCol$ as a morphism from polynomials in
$\poly{\K}{\IndetsCol}$ to polynomials in $\poly{\K}{\Indets}$.

\AP
Given a subset $V \subfin \Indets$, we build the injection
$\intro*\colorWith{V} \colon \Indets \to \IndetsCol$ that maps variables $x$ in
$V$ to $(\mathsf{first}, x)$, and variables $x$ not in $V$ to
$(\mathsf{second}, x)$. Again, we extend these maps as morphisms from
$\poly{\K}{\Indets}$ to $\poly{\K}{\IndetsCol}$. We say that a polynomial $p
\in \poly{\K}{\IndetsCol}$ is \intro{$V$-compatible} if $p \in
\colorWith{V}(\poly{\K}{\Indets})$. Using these definitions, we create
$\intro*\freeColor$ that maps a set $H$ of polynomials to the union over all
finite subsets $V$ of $\Indets$ of the set $\colorWith{V}(H)$. Beware that
$\freeColor$ does not equal $\forgetCol^{-1}$, since we only consider
\kl{$V$-compatible} polynomials (for some finite set $V$).

\AP
We are now ready to write our algorithm to compute an \kl{equivariant Gröbner
basis} by computing the ``congugacy'' $\intro{egb} \defined \forgetCol \circ
\mathop{\kl{weakgb}} \circ \freeColor$. To prove the correctness of our
algorithm, let us first argue that one can indeed compute the \kl{weak
equivariant Gröbner basis} algorithm.

\begin{lemma}
  \label{lem:colored-hypothesis-sat}
  Assume that $\group \actson \Indets$ is
  is \kl{effectively oligomorphic},
  and that $(\mon[\N \times \N]{\Indets}, \gdivleq)$
  is a \kl{well-quasi-order}.
  Then $\kl{egb}$ is a computable function,
  and the function $\kl{weakgb}$ is called 
  on correct inputs.
\end{lemma}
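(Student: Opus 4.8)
The plan is to verify the two assertions of \cref{lem:colored-hypothesis-sat} in turn, by reducing the hypotheses on $\IndetsCol$ to the hypotheses on $\Indets$ via the colouring construction. First I would observe that the coloured variable set $\IndetsCol = \Indets \ordplus \Indets$ is again a totally ordered set of indeterminates, equipped with the natural action of $\group$ (acting diagonally on both copies, i.e.\ $\gelem \cdot (\mathsf{first}, x) = (\mathsf{first}, \gelem \cdot x)$ and likewise on the second copy), and that this action is \kl(ord){compatible} with the lexicographic order on $\IndetsCol$. The key point is that $\group \actson \IndetsCol$ is \kl{effectively oligomorphic}: an orbit of a tuple in $\IndetsCol^n$ is determined by the pattern of which coordinates lie in the first versus second copy, together with the orbit of the underlying tuple in $\Indets^*$ — so effective oligomorphicity of $\group \actson \Indets$ transfers to $\group \actson \IndetsCol$ with only a finite bookkeeping overhead. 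This handles the computability of the calls to $\freeColor$ and to $\forgetCol$ (which are straightforward once we know orbit representatives are computable), so it remains to check that $\mathop{\kl{weakgb}}$ may be legitimately invoked.

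Next I would invoke \cref{thm:weakgb-comput}: the algorithm $\kl{weakgb}$ runs correctly on an orbit-finite input over a set of indeterminates $Z$ provided $Z$ carries an effective \kl(ord){compatible} total order, the action of $\group$ on $Z$ is \kl{effectively oligomorphic}, and $(\mon[\om]{Z}, \gdivleq)$ is a \kl{WQO}. Applying this with $Z = \IndetsCol$, the first two conditions follow from the previous paragraph. For the \kl{WQO} condition I would argue that a monomial in $\mon[\om]{\IndetsCol}$ — a finitely supported function $\IndetsCol \to \om$ — is, via the identification $\IndetsCol \simeq \set{\mathsf{first},\mathsf{second}} \times \Indets$, the same data as a finitely supported function $\Indets \to \om \times \om$, i.e.\ an element of $\mon[\N \times \N]{\Indets}$, recording on each variable the pair of exponents in the two copies. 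This identification is compatible with divisibility and with the $\group$-action, so $(\mon[\om]{\IndetsCol}, \gdivleq)$ is order-isomorphic to $(\mon[\N \times \N]{\Indets}, \gdivleq)$, which is a \kl{WQO} by hypothesis. (One has to be slightly careful: the bijection $\IndetsCol \simeq \set{\mathsf{first},\mathsf{second}} \times \Indets$ is not order preserving for the $\group$-relevant order, but $\gdivleq$ only refers to divisibility and the group action, not to the order on $\IndetsCol$, so this is harmless.) Finally I would note that $\freeColor(H)$ is an orbit-finite set when $H$ is: for fixed finite $V \subfin \Indets$ the set $\colorWith{V}(H)$ is orbit-finite, the orbit of a $V$-compatible polynomial depends only on $|V|$ and the orbit of $V$ together with the orbit of the underlying polynomial, and there are finitely many orbits of pairs $(V, p)$ with $p$ ranging over finitely many orbits and $V$ ranging over finite subsets of bounded size — so $\freeColor(H) = \orbit[\group]{\freeColor(H')}$ for a finite representative set, and this set is computable by effective oligomorphicity.

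The main obstacle I anticipate is the precise verification that $\freeColor(H)$ is orbit-finite and computable: a priori $V$ ranges over \emph{all} finite subsets of $\Indets$, of unbounded size, so one needs to argue that only finitely many orbits of coloured polynomials actually arise, which requires bounding the relevant size of $V$ (namely by the number of indeterminates occurring in the finitely many orbit representatives of $H$) and then using effective oligomorphicity on $\IndetsCol$ to enumerate representatives. The remaining subtlety — already flagged above — is keeping the two orders straight: $\forgetCol$ is explicitly not order preserving, so one must be careful that the only properties of $\IndetsCol$ fed into \cref{thm:weakgb-comput} are (i) it has \emph{some} effective compatible total order (the lexicographic one) and (ii) its $\gdivleq$ on monomials is a WQO, neither of which interacts badly with $\forgetCol$; the behaviour of $\forgetCol$ is the concern of the \emph{correctness} of $\kl{egb}$, not of this lemma, which only asserts that the pieces are computable and $\kl{weakgb}$ is called on a valid input.
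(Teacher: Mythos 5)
Your proposal is correct and follows essentially the same route as the paper's proof: transfer effective oligomorphicity and the compatible total order to $\IndetsCol$, identify $\mon{\IndetsCol}$ with $\mon[\N\times\N]{\Indets}$ to get the WQO hypothesis for $\kl{weakgb}$, and observe that $\freeColor(H)$ is computed from a finite representative set $H_0$ by only ranging over finite subsets $V$ of the (finitely many) indeterminates appearing in $H_0$, then closing under the $\group$-action by equivariance of $\freeColor$. The extra remark you make about $\forgetCol$ not preserving the order being irrelevant to this lemma is a correct and useful clarification that the paper leaves implicit.
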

\begin{proof}
  We need to prove that the set $\freeColor(H)$ is computable and 
  \kl{orbit finite}, that $\poly{\K}{\IndetsCol}$ satistfies 
  the \kl{computability assumptions} of \kl{weakgb},
  and that $(\mon{\IndetsCol}, \gdivleq)$ is a
  \kl{well-quasi-ordered} set.
  Finally, we also need to prove that if $H$ is \kl{orbit finite},
  $\forgetCol(H)$ is computable and \kl{orbit finite}. 

  Let us start by proving that $\freeColor(H)$ is computable and \kl{orbit
  finite}. Because $H$ is \kl{orbit finite}, there exists a finite set $H_0
  \subseteq H$ of polynomials such that $\orbit{H_0} = \orbit{H}$. Then, let us
  remark that $\freeColor(H_0)$ can be obtained by considering all finite
  subsets $V$ of variables that appear in $H_0$, which is a computable finite
  set. As a consequence, $\freeColor(H_0)$ is computable, and since
  $\freeColor$ is \kl(func){equivariant}, $\orbit{\freeColor(H_0)} =
  \freeColor(\orbit{H_0}) = \freeColor(H)$.

  Let us now focus on the set $\poly{\K}{\IndetsCol}$. First, it is clear that
  $\group$ is \kl(ord){compatible} with the ordering on $\IndetsCol$ by
  definition of the action, and because $\group$ was compatible with the
  ordering on $\Indets$. Then, $\poly{\K}{\IndetsCol}$ is an \kl{effectively
  oligomorphic} because a polynomial $p$ can be represented in $(\K \times
  (\Indets)^{\leq d} \times (\Indets)^{\leq d})^n$, where $n$ is the number of
  monomials in $p$, and $d$ is the maximal degree of a monomial in $p$.

  Let us now prove that $(\mon{\IndetsCol}, \gdivleq)$ is a
  \kl{well-quasi-ordered} set. To that end, consider a sequence
  $\seqof{\monelt_i}{i \in \N}$ of monomials in $\mon{\IndetsCol}$. A monomial
  in $\mon{\IndetsCol}$ naturally corresponds to a monomial in $\mon[\N \times
  \N]{\Indets}$, where the two exponents are respectively the one of the lower
  copy and the one of the upper copy of the variable.
  Because $(\mon[\N \times \N]{\Indets}, \gdivleq)$ is a
  \kl{well-quasi-ordered} set, we immediately conclude that $(\mon{\IndetsCol}, \gdivleq)$ is a
  \kl{well-quasi-ordered} set.

  Finally, let us prove that $\forgetCol(H)$ is computable and \kl{orbit
  finite}. This is clear because $\forgetCol$ simply consists in forgetting
  the color of the variables.
\end{proof}

Let us now argue that the result of \kl{egb} is indeed a generating set of the
ideal (\cref{lem:correct-gen-set}), and then refine our analysis to
prove that it is an \kl{equivariant Gröbner basis}
(\cref{lem:strong-gb-correct}).

\begin{lemma}
  \label{lem:correct-gen-set}
  Let $H \subseteq \poly{\K}{\Indets}$,
  then $\mathsf{egb}(H)$
  \kl(idl){generates}
  $\EqIdlGen{H}$.
\end{lemma}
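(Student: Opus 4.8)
The plan is to establish the equality $\IdlGen{\mathsf{egb}(H)} = \EqIdlGen{H}$ by proving the two inclusions separately. Write $\Basis' \defined \mathop{\kl{weakgb}}(\freeColor(H))$, so that $\mathsf{egb}(H) = \forgetCol(\Basis')$ (we use here that $\mathop{\kl{weakgb}}$ terminates on $\freeColor(H)$, as in \cref{lem:colored-hypothesis-sat}, so that $\mathsf{egb}(H)$ is defined). I will rely on three facts: (i) by the generating-set half of \cref{lem:weakgb-correctness}, $\Basis'$ is an \kl{equivariant set} with $\IdlGen{\Basis'} = \EqIdlGen{\freeColor(H)}$; (ii) $\forgetCol \colon \poly{\K}{\IndetsCol} \to \poly{\K}{\Indets}$ is a $\K$-algebra morphism that is \kl(func){equivariant}, since the action of $\group$ on $\IndetsCol = \Indets \ordplus \Indets$ only moves the $\Indets$-coordinate and hence preserves colours; and (iii) $\forgetCol \circ \colorWith{V} = \mathrm{id}$ for every $V \subfin \Indets$. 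Since by (i) and (ii) the set $\mathsf{egb}(H) = \forgetCol(\Basis')$ is the image of an \kl{equivariant set} under an \kl(func){equivariant} map, it is itself \kl{equivariant}, so $\IdlGen{\mathsf{egb}(H)} = \EqIdlGen{\mathsf{egb}(H)}$ and it is enough to check the inclusions $\mathsf{egb}(H) \subseteq \EqIdlGen{H}$ and $H \subseteq \EqIdlGen{\mathsf{egb}(H)}$.

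For the inclusion $\mathsf{egb}(H) \subseteq \EqIdlGen{H}$, I would take $g \in \Basis'$. By (i) it lies in $\EqIdlGen{\freeColor(H)}$, so $g = \sum_{i} a_i \monelt_i (\gelem_i \cdot f_i)$ with $a_i \in \K$, $\monelt_i \in \mon{\IndetsCol}$, $\gelem_i \in \group$, and $f_i \in \freeColor(H)$; by definition of $\freeColor$ each $f_i$ equals $\colorWith{V_i}(h_i)$ for some $V_i \subfin \Indets$ and $h_i \in H$. Applying the morphism $\forgetCol$, and pushing it through the action by (ii) and through $\colorWith{V_i}$ by (iii), gives $\forgetCol(g) = \sum_i a_i\, \forgetCol(\monelt_i)\,(\gelem_i \cdot h_i) \in \EqIdlGen{H}$. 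For the inclusion $H \subseteq \EqIdlGen{\mathsf{egb}(H)}$, I would take $h \in H$ and use that $\colorWith{\emptyset}(h) \in \freeColor(H) \subseteq \IdlGen{\Basis'}$; expanding $\colorWith{\emptyset}(h) = \sum_j b_j \monelt_j' g_j$ with $b_j \in \K$, $\monelt_j' \in \mon{\IndetsCol}$, and $g_j \in \Basis'$, and applying $\forgetCol$ with $\forgetCol(\colorWith{\emptyset}(h)) = h$ from (iii), yields $h = \sum_j b_j\, \forgetCol(\monelt_j')\, \forgetCol(g_j)$; since each $\forgetCol(g_j) \in \forgetCol(\Basis') = \mathsf{egb}(H)$, this shows $h \in \IdlGen{\mathsf{egb}(H)} \subseteq \EqIdlGen{\mathsf{egb}(H)}$.

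I do not expect a genuine obstacle here: the entire content is the bookkeeping of transporting a generating set along the conjugation $\forgetCol \circ (-) \circ \freeColor$, and the hard part — that $\Basis'$ is a generating set at all — has already been carried out in \cref{lem:weakgb-correctness}. The only points that deserve a moment of care are verifying that $\forgetCol$ commutes with the $\group$-action (fact (ii) above), and recalling that $\colorWith{\emptyset}$, and more generally each $\colorWith{V}$, is a genuine section of $\forgetCol$, so that applying $\forgetCol$ to a coloured copy of a polynomial recovers that polynomial unchanged.
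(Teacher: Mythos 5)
Your proposal is correct and follows essentially the same route as the paper's proof. The paper's argument is more compact: it observes that if two sets $A,B$ generate the same ideal and $\phi$ is a ring morphism, then $\phi(A)$ and $\phi(B)$ generate the same ideal, and applies this to $\phi = \forgetCol$, $A = \kl{weakgb}(\freeColor(H))$, $B = \freeColor(H)$, together with the identity $\forgetCol(\freeColor(H)) = H$; your two explicit inclusions are exactly the element-chasing unpacking of that abstract observation, with the extra (harmless) step of singling out the section $\colorWith{\emptyset}$ for the reverse inclusion.
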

\begin{proof}
  Let us remark that $\forgetCol(\freeColor(H)) = H$.
  Since $\kl{weakgb}(\freeColor(H))$
  generates the same ideal as $\freeColor(H)$,
  and since $\forgetCol$ is a morphism,
  we conclude that 
  the set of polynomial
  $\forgetCol(\kl{weakgb}(\freeColor(H)))$
  generates the same ideal as
  $\forgetCol(\freeColor(H)) = H$.
\end{proof}

\begin{lemma}
  \label{lem:strong-gb-correct}
  Let $H \subseteq \poly{\K}{\Indets}$,
  then $\mathsf{egb}(H)$
  is an \kl{equivariant Gröbner basis}
  of $\EqIdlGen{H}$.
\end{lemma}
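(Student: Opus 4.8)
The plan is to verify the three requirements for $\mathsf{egb}(H)$ to be an \kl{equivariant Gröbner basis} of $\EqIdlGen{H}$. That $\mathsf{egb}(H)$ is \kl{equivariant} and that it \kl(idl){generates} $\EqIdlGen{H}$ is immediate: $\forgetCol$, $\freeColor$ and $\mathsf{weakgb}$ are all \kl(func){equivariant}, and the generation claim is \cref{lem:correct-gen-set}. So the real work is to produce, for every nonzero $p \in \EqIdlGen{H}$, some $q \in \mathsf{egb}(H)$ with $\lm[\Indets](q) \gdivleq \lm[\Indets](p)$ and $\dom(q) \subseteq \dom(p)$. I would prove this statement by well-founded induction on $p$ for the quasi-order $\pmonleq$, which is well-founded by \cref{lem:chm}.

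Given such a $p$, set $V \defined \dom(p) \setminus \dom(\lm[\Indets](p))$ and $\tilde p \defined \colorWith{V}(p)$. The point of this choice of $V$ is that colouring does not move the leading monomial: using $\dom(p) \subseteq \dwset{\dom(\lm(p))}$, every monomial of $p$ first disagrees with $\lm[\Indets](p)$ on a variable of $\dom(\lm(p)) = \dom(p) \setminus V$, and pushing $V$ into the lower copy of $\IndetsCol$ leaves the relative order of those variables unchanged; hence $\lm[\IndetsCol](\tilde p) = \colorWith{V}(\lm[\Indets](p))$, which moreover lies entirely in the upper copy since $\dom(\lm(p)) \cap V = \emptyset$. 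Since $\colorWith{V}$ is a ring homomorphism carrying generators to $\group$-translates of elements of $\freeColor(H)$, we have $\tilde p \in \EqIdlGen{\freeColor(H)}$, so $\tilde p$ admits a \kl{decomposition} with respect to $\mathsf{weakgb}(\freeColor(H))$; feeding it to the \kl{weak equivariant Gröbner basis} property of \cref{lem:weakgb-correctness} (applicable by \cref{lem:colored-hypothesis-sat}) yields a \kl{decomposition} $\mathfrak{d}'$ of $\tilde p$ with $\lmdec(\mathfrak{d}') = \colorWith{V}(\lm[\Indets](p))$ and $\domdec(\mathfrak{d}')$ contained in the colouring of the variables of the starting decomposition. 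Picking a term $(a_\star, \monelt_\star, h_\star)$ of $\mathfrak{d}'$ realising $\lmdec(\mathfrak{d}')$, we get $\lm[\IndetsCol](h_\star)$ dividing $\colorWith{V}(\lm[\Indets](p))$ and hence, like it, sitting in the upper copy; the candidate is $q \defined \forgetCol(h_\star) \in \mathsf{egb}(H)$.

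The main obstacle is that $\forgetCol$ is not order preserving, so neither $\lm[\Indets](q) = \forgetCol(\lm[\IndetsCol](h_\star))$ nor the bound $\domdec \subseteq \dom(p)$ comes for free: a term of $\mathfrak{d}'$ may carry a monomial that is small in $\IndetsCol$ only because one of its variables lies in the lower copy, while the underlying variable is actually large in $\Indets$, so $\forgetCol(q)$ can have a leading monomial that is revlex-above $\lm[\Indets](p)$. I would resolve this by working with the whole \kl{decomposition} rather than with a single term: push $\mathfrak{d}'$ through $\forgetCol$ to a \kl{decomposition} of $p$ over $\mathsf{egb}(H)$, then pass to a $\pmonleq$-minimal \kl{decomposition} $\mathfrak{e}$ of $p$ over $\mathsf{egb}(H)$ below it (which exists by \cref{lem:chm}). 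Using the induction hypothesis one first arranges $\domdec(\mathfrak{e}) \subseteq \dom(p)$ — i.e.\ that the variable set can actually be brought down to $\dom(p)$ — and then argues that $\lmdec(\mathfrak{e}) = \lm[\Indets](p)$: otherwise the top-degree terms of $\mathfrak{e}$ cancel, \cref{lem:spoly} rewrites their pairwise differences as monomial multiples of \kl{S-polynomials} of members of $\mathsf{egb}(H)$, and the fact that all those \kl{S-polynomials} reduce to $0$ — transported from $\IndetsCol$, where it holds because \cref{alg:weakgb} terminated, by re-running the same colouring/induction on the strictly $\pmonleq$-smaller polynomials involved — would rewrite $\mathfrak{e}$ into a strictly smaller \kl{decomposition}, contradicting minimality. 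Reading off from $\mathfrak{e}$ the term whose leading monomial equals $\lm[\Indets](p)$ then supplies the required $q$, closing the induction. The technical heart is thus the bookkeeping of how the upper/lower colouring, the inclusion $\dom(\cdot) \subseteq \dwset{\dom(\lm(\cdot))}$, and the well-founded induction fit together.
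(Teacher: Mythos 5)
Your choice of colouring is where you diverge from the paper, and it is also where your argument develops a genuine gap.

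The paper sets $V \defined \dom(p)$, so that \emph{every} variable of $p$ is pushed into the \emph{lower} copy of $\IndetsCol$. Two consequences make the rest of the argument short. First, since $\colorWith{V}$ restricted to $\dom(p)$ is an order-preserving injection into the lower copy, $\lm[\IndetsCol](\colorWith{V}(p)) = \colorWith{V}(\lm[\Indets](p))$, and this leading monomial now lives in the lower copy. Second, in $\IndetsCol = \Indets \ordplus \Indets$ the downward closure of any lower-copy variable stays inside the lower copy; combined with $\dom(q) \subseteq \dwset{\dom(\lm(q))}$ and the containment $\domdec(\mathfrak d') \subseteq \domdec(\mathfrak d)$ (whose lower-copy part is exactly $\colorWith{V}(V)$), the term $\monelt_\star h_\star$ of $\mathfrak d'$ realising $\lmdec(\mathfrak d')$ has \emph{all} of its variables in $\{\mathsf{first}\}\times V$. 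On that subset $\forgetCol$ \emph{is} order-preserving, so $\lm[\Indets](\forgetCol(h_\star)) = \forgetCol(\lm[\IndetsCol](h_\star)) \divleq \lm[\Indets](p)$ and $\dom(\forgetCol(h_\star)) \subseteq V = \dom(p)$ come out in one stroke, with no induction.

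You instead set $V \defined \dom(p) \setminus \dom(\lm(p))$, which sends $\dom(\lm(p))$ into the \emph{upper} copy. Your computation that $\lm[\IndetsCol](\tilde p) = \colorWith{V}(\lm[\Indets](p))$ is correct, but now this monomial lives in the upper copy, and the downward closure of an upper-copy variable contains the \emph{entire} lower copy. So $h_\star$ may legitimately carry lower-copy variables $(\mathsf{first},x)$ with $x \in V$; under $\forgetCol$ these become variables of $\Indets$ that are not dominated by $\forgetCol(\lm[\IndetsCol](h_\star))$, and both $\lm[\Indets](\forgetCol(h_\star)) = \forgetCol(\lm[\IndetsCol](h_\star))$ and $\dom(\forgetCol(h_\star)) \subseteq \dom(p)$ can fail. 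You correctly identify this obstacle, but the repair you sketch does not close it. You try to re-prove, by well-founded induction on $\pmonleq$, a weak-Gröbner-basis-style statement for $\mathsf{egb}(H)$ directly in $\poly{\K}{\Indets}$, invoking \cref{lem:spoly} and a reduction-to-zero argument in the style of \cref{lem:weakgb-correctness}. That reduction argument needs $\mathsf{egb}(H)$ to be saturated under remainders of its own $S$-polynomials \emph{computed in $\poly{\K}{\Indets}$}. You propose to transport that saturation ``from $\IndetsCol$, where it holds because \cref{alg:weakgb} terminated'', but this transport does not go through: $\forgetCol$ is not order-preserving, so it does not commute with $\lm$, $\lt$, or $\lcm$, hence $\spoly{\forgetCol(p)}{\forgetCol(q)} \neq \forgetCol(\spoly{p}{q})$ in general, and the saturation of $\Basis_\star$ in $\poly{\K}{\IndetsCol}$ says nothing directly about $S$-polynomials of $\mathsf{egb}(H)$ in $\poly{\K}{\Indets}$. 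The induction hypothesis you state (existence, for $p' \pmonlt p$, of some $q' \in \mathsf{egb}(H)$ with $\lm(q') \gdivleq \lm(p')$ and $\dom(q') \subseteq \dom(p')$) also does not obviously apply to the intermediate $S$-polynomial remainders, which need not be $\pmonleq$-below $p$. In short, the coloured construction is designed precisely so that you never have to reason about $S$-polynomials after forgetting colours; switching the role of the two copies forces you back into that reasoning, and your sketch does not supply it.
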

\begin{proof}
  Let $p \in \IdlGen{H}$,
  $H_\star = \freeColor(H)$,
  $V \defined \dom(p)$,
  $H_V \defined \colorWith{V}(H)$.
  We let $\Basis_\star = \kl{weakgb}(H_\star)$.
  Finally, $\Basis = \forgetCol(\Basis_\star)$.
  It is clear that $\colorWith{V}(p)$
  belongs to $\IdlGen{H_V}$.
  Let us write 
  \begin{equation*}
    \colorWith{V}(p) = \sum_{i=1}^n a_i \monelt_i h_i
  \end{equation*}
  Where $a_i \in \K$, $\monelt_i \in \mon{\IndetsCol}$,
  and $h_i \in \Basis_\star$ is \kl{$V$-compatible}.
  Such a decomposition $\mathfrak{d}$ exists
  because $H_V \subseteq H_\star \subseteq \Basis_\star$.

  Now, because $\Basis_\star$ is a \kl{weak equivariant Gröbner basis} of $\IdlGen{H_\star}$,
  there exists a decomposition $\mathfrak{d}'$ of $\colorWith{V}(p)$
  such that
  $\lm(\colorWith{V}(p)) = \lmdec(\mathfrak{d}') \revlexleq \lmdec(\mathfrak{d})$,
  and 
  $\domdec(\mathfrak{d}') \subseteq \domdec(\mathfrak{d})$.
  In particular, $\mathfrak{d}'$ is a decomposition of $\colorWith{V}(p)$
  using only \kl{$V$-compatible} polynomials in $\Basis_\star$.

  This proves that there exists $\monelt \in \mon{\IndetsCol}$ and $h \in
  \Basis_\star$ such that $\lm(\colorWith{V}(p)) = \lm(\monelt h)$ and $\monelt
  h$ is \kl{$V$-compatible}. In particular, $\monelt h$ must have all its
  variables in $V$, and therefore $\lm(\colorWith{V}(p)) =
  \colorWith{V}(\lm(p)) = \lm(\forgetCol(\monelt h))$. We conclude that
  $\lm(\forgetCol(h)) \divleq \lm(p)$, and that $\forgetCol(h)$ has all its
  variables in $V$.

  We have proven that $\forgetCol(\Basis_\star)$ is 
  an \kl{equivariant Gröbner basis} of $\EqIdlGen{H}$.
\end{proof}

As a consequence, \kl{egb} is the algorithm of
\cref{thm:compute-egb},
and in particular obtain as a corollary that one can decide the \kl{equivariant
ideal membership problem} under our \kl{computability assumptions}, if the set
of indeterminates satisfies that $(\mon[\N \times \N]{\Indets}, \gdivleq)$ is a
\kl{well-quasi-ordered} set. We can leverage these decidability results to
obtain effective representations of \kl{equivariant ideals}, which can then be
used in algorithms as we will see in \cref{sec:applications}.

\begin{corollary}
  \label{cor:equivariant-ideals-computations}
  Assume that $\group \actson \Indets$
  is \kl{effectively oligomorphic},
  and that $(\mon[Y]{\Indets}, \gdivleq)$
  is a \kl{well-quasi-ordered} set
  for every \kl{well-quasi-ordered} set $(Y,\leq)$.
  Then one has an \emph{effective representation} of
  the \kl{equivariant ideals} of $\poly{\K}{\Indets}$,
  such that:
  \begin{enumerate}
    \item One can obtain a representation from a finite set of generators,
    \item One can effectively decide the \kl{equivariant ideal membership problem}
      given a representation,
    \item The following operations are computable at the level of representations:
      the union of two \kl{equivariant ideals}, 
      the product of two \kl{equivariant ideals},
      the intersection of two \kl{equivariant ideals},
      and checking whether two \kl{equivariant ideals} are equal.
  \end{enumerate}
\end{corollary}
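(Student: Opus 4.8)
The plan is to take, as \emph{effective representation} of an \kl{equivariant ideal} $\idl$, the \kl{orbit finite} set $\kl{egb}(H)$ computed from any finite generating set $H$ of $\idl$ (stored as a finite set of orbit representatives). Item~1 is then \cref{thm:compute-egb}: $\kl{egb}$ is a computable function by \cref{lem:colored-hypothesis-sat}, and its output is an \kl{equivariant Gröbner basis} of $\EqIdlGen{H}$ by \cref{lem:strong-gb-correct}. What I would additionally record is that the proof of \cref{lem:strong-gb-correct} in fact establishes the sharper property that for every $p \in \idl$ there is $q \in \kl{egb}(H)$ with $\lm(q) \divleq \lm(p)$ (ordinary divisibility) and $\dom(q) \subseteq \dom(p)$; I will keep this as an invariant of all representations produced below.

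For item~2, given a representation $\Basis$ of $\idl$ and a polynomial $p$, I would compute $\rem{\Basis}{p}$, which is a finite computable set by \cref{lem:normalisation}, and answer ``$p \in \idl$'' iff $0 \in \rem{\Basis}{p}$. Soundness holds because each step of $\toeucl{\Basis}$ subtracts a multiple of an element of $\Basis \subseteq \idl$, so $p - r \in \idl$ for every $r \in \rem{\Basis}{p}$; for completeness, if $p \in \idl$ then any remainder $r$ also lies in $\idl$, and if $r \neq 0$ the sharp divisibility invariant yields a further step $r \toeucl{\Basis} r'$, contradicting that $r$ is \kl{normalised}.

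Item~3 is reduced to repeated invocations of $\kl{egb}$ and of item~2. Write $\Basis_1, \Basis_2$ for representations of $\idl_1, \idl_2$. The sum $\idl_1 + \idl_2$ equals $\EqIdlGen{\Basis_1 \cup \Basis_2}$, and $\Basis_1 \cup \Basis_2$ is \kl{orbit finite}, so its representation is $\kl{egb}(\Basis_1 \cup \Basis_2)$. The product $\idl_1 \cdot \idl_2$ is generated by $\setof{fg}{f \in \Basis_1, g \in \Basis_2}$, the image of the \kl{orbit finite} set $\Basis_1 \times \Basis_2$ (\kl{orbit finite} since $\group$ is \kl{effectively oligomorphic}) under the \kl(func){equivariant} computable multiplication map, hence a valid input for $\kl{egb}$. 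Finally, $\idl_1 = \idl_2$ is decided by checking $\idl_1 \subseteq \idl_2$ and $\idl_2 \subseteq \idl_1$, and $\idl_i \subseteq \idl_j$ is decided by running item~2 on each of the finitely many orbit representatives of $\Basis_i$ against $\Basis_j$, which suffices because $\idl_j$ is \kl{equivariant}.

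The intersection is the one genuinely delicate case, and I expect it to carry most of the work. The plan is an equivariant form of the classical elimination trick: adjoin to $\Indets$ a fresh indeterminate $t$ that is fixed by $\group$ and placed strictly above every element of $\Indets$, forming $\Indets' \defined \Indets \ordplus \ordfin{1}$ with the extended action and order. One first checks that $(\Indets', \group, \leq')$ still meets our \kl{computability assumptions} (adjoining a $\group$-fixed maximal point preserves \kl{effective oligomorphism} and order-compatibility) and that $(\mon[Y]{\Indets'}, \gdivleq)$ is a \kl{WQO} for every \kl{WQO} $Y$, via the $\gdivleq$-compatible identification $\mon[Y]{\Indets'} \cong \mon[Y]{\Indets} \times Y$ (valid precisely because $t$ is $\group$-fixed) together with the fact that a product of two \kl{WQO}s is a \kl{WQO}. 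One then applies $\kl{egb}$ to the \kl{orbit finite} set $\setof{tf}{f \in \Basis_1} \cup \setof{(1-t)g}{g \in \Basis_2}$, obtaining $\Basis'$, and returns $R \defined \setof{q \in \Basis'}{t \notin \dom(q)}$. The obstacle is to prove $R$ is an \kl{equivariant Gröbner basis} of $\idl_1 \cap \idl_2$ satisfying the sharp invariant. I would split this into: (i) $\bigl(t\idl_1 + (1-t)\idl_2\bigr) \cap \poly{\K}{\Indets} = \idl_1 \cap \idl_2$, obtained by specialising $t \mapsto 0$ and $t \mapsto 1$; (ii) since $t$ is $\leq'$-maximal, every monomial mentioning $t$ is strictly larger in the \kl{reverse lexicographic} order than every monomial avoiding $t$, so leading monomials computed over $\Indets'$ agree with those over $\Indets$ on $t$-free polynomials and $\lm(q)$ avoids $t$ exactly when $q$ does; and (iii) the \kl(poly){domain} condition transfers for free, since a witness $q \in \Basis'$ for some $p \in \idl_1 \cap \idl_2 \subseteq \poly{\K}{\Indets}$ satisfies $\dom(q) \subseteq \dom(p) \subseteq \Indets$, so $q \in R$, and then $\lm(q) \divleq \lm(p)$ and $\dom(q) \subseteq \dom(p)$ hold over $\Indets$. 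Points (i) and (ii) are the \kl{reverse lexicographic} analogue of the classical elimination theorem; the only new ingredient is the \kl(poly){domain} bookkeeping in (iii), which I expect to be short once the extension is in place.
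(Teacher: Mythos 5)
Your proof is correct and follows essentially the same route as the paper's own argument: representations are (orbit-finite sets of representatives for) equivariant Gröbner bases produced by \kl{egb}; sum, product and equality are routine; and the intersection is handled by the elimination trick of adjoining a $\group$-fixed indeterminate $t$ placed above all of $\Indets$, applying \kl{egb} to $\setof{tf}{f\in\Basis_1}\cup\setof{(1-t)g}{g\in\Basis_2}$, and keeping the $t$-free part. You are somewhat more explicit than the paper on two points that it glosses over: (a) you spell out why the hypotheses transfer to $\Indets \ordplus \ordfin{1}$, via the identification $\mon[Y]{\Indets\ordplus\ordfin{1}}\cong\mon[Y]{\Indets}\times Y$ and closure of \kl{WQO} under finite products (equivalently, one could invoke \cref{lem:closure-properties-wqo} with a trivial second group), and (b) you flag and use the sharper invariant — actually established inside the proof of \cref{lem:strong-gb-correct} — that for $p\in\idl$ one gets $q\in\Basis$ with $\lm(q)\divleq\lm(p)$ under plain divisibility rather than merely $\gdivleq$, which is indeed what makes the completeness direction of item~2 go through cleanly, given the extra $\dom(q)\subseteq\dom(p)$ constraint in the definition of \kl{equivariant Gröbner basis}. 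These are good clarifications of the paper's argument rather than deviations from it.
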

\begin{proof}
  Most of this statement follows from \cref{thm:compute-egb}, using
  \kl{equivariant Gröbner bases} as a representation of \kl{equivariant ideals}.
  Indeed, because $\N \times \N$ is a \kl{well-quasi-ordered} set,
  we conclude $(\mon[\N \times \N]{\Indets}, \gdivleq)$ is a 
  \kl{well-quasi-ordered} set too.
  The only non-trivial part is the fact that one can compute an
  \kl{equivariant Gröbner basis} of the
  \emph{intersection} of two \kl{equivariant ideals}.
  To that end, we will adapt the classical argument using 
  \kl{Gröbner bases} to the case of \kl{equivariant Gröbner bases}
  \cite[Chapter 4, Theorem 11]{CLO15}.

  Let $I$ and $J$ be two \kl{equivariant ideals} of $\poly{\K}{\Indets}$,
  respectively represented by \kl{equivariant Gröbner bases} $\Basis_I$ and
  $\Basis_J$. Let $t$ be a fresh indeterminate, and let us consider $\IndetsCol
  \defined \Indets \ordplus \set{t}$, that is, the disjoint union of $\Indets$
  and $\set{t}$, where $t$ is greater than all the variables in $\Indets$.
  
  We construct the \kl{equivariant ideal} $T$ of $\poly{\K}{\IndetsCol}$,
  generated by all the polynomials $t \times h_i$, and $(1-t) \times h_j$,
  where $h_i$ ranges over $\Basis_I$ and $h_j$ ranges over $\Basis_J$. It is
  clear that $T \cap \poly{\K}{\Indets} = I \cap J$.
  Now, because of the hypotheses on $\Indets$, we know that 
  one can compute the \kl{equivariant Gröbner basis} $\Basis_T$ of $T$
  by applying \kl{egb} to the generating set of $T$.
  Finally, we can obtain the \kl{equivariant Gröbner basis} of $I \cap J$ by
  considering $\Basis_T \cap \poly{\K}{\Indets}$, that is, 
  selecting the polynomials of $\Basis_T$ that do not contain the
  indeterminate $t$, which is possible because $\Basis_T$ is an 
  \kl{orbit finite set}
  and $\poly{\K}{\IndetsCol}$ is \kl{effectively oligomorphic}.
\end{proof}
\section{Closure properties}
\label{sec:closure-properties}

In this section, we are interested in listing the operations on sets of
indeterminates equipped with a group action that preserve our \kl{computability
assumptions} and the \kl{well-quasi-ordering} property ensuring that our
\cref{thm:compute-egb} can be applied.  Indeed, it is often tedious to prove
that a given group action $\grp[G] \actson \X$ satisfies the \kl{computability
assumptions} and the \kl{well-quasi-ordering} property, and we aim to provide a
list of operations that preserve these properties, so that simpler examples
(\cref{ex:dlo,ex:eq atoms,ex:dense tree}) can serve as building blocks to model
complex systems.

For the remainder of this section, we fix a pair of group actions $\grp[H]
\actson \X$ and $\grp[G] \actson \Y$, where $\X$ is equipped with a total order
$<_{\X}$ and $\Y$ is equipped with a total order $<_{\Y}$. The constructions
that what we mention in this section were already studied in \cite{GHOLAS24},
via \cite[Example 10]{GHOLAS24} or \cite[Lemma 9]{GHOLAS24}, but they did not
take into account our stronger requirement that $\gdivleq$ should be a
\kl{well-quasi-ordering} for any \kl{well-quasi-ordered} set $(Y, \leq)$ of
exponents on the monomials. We summarise in \cref{tab:closure-properties} the
closure properties of the \kl{computability assumptions} and the
\kl{well-quasi-ordering} property, and we give their definitions of the
operations in \cref{def:union,def:product,def:nested product}.

\begin{table}
  \label{tab:closure-properties}
\centering
\caption{Closure properties of the \kl{computability assumptions} and \kl{well-quasi-ordering} property
for group actions on sets of indeterminates, recapitulating \cref{def:union,def:product,def:nested product}
and \cref{lem:closure-properties-wqo,lem:closure-properties-comp}.}
\begin{tabular}{l|l|l|l|l}
  \toprule
  \textbf{Name} &
  \textbf{Group action} & \textbf{Effective} & \textbf{WQO} & \textbf{Reference} \\
  \midrule
  Sum & $\grp[G] \times \grp[H] \actson \X \ordplus \Y$ &  Yes & Yes & \cref{def:union} \\
  Product &  $\grp[G] \times \grp[H] \actson \X \ordtimes \Y$ & Yes & No & \cref{def:product} \\
  Lex. Product & $\grp[G] \lexGroupAction \grp[H] \actson \X \ordtimes \Y$ & Yes & Yes & \cref{def:nested product} \\
  \bottomrule
\end{tabular}
\end{table}

\begin{definition}[Sum action]\label{def:union}
  Given $(\gelem[\pi], \gelem[\sigma]) \in \grp[G] \times \grp[H]$ 
  and $z \in \X \ordplus \Y$, we define the action
  $(\gelem[\pi], \gelem[\sigma])(z)$ as
  $\gelem[\pi](z)$ if $z \in \X$ and $\gelem[\sigma](z)$ if $z \in \Y$.
  This action is called the \intro{sum action} of $\grp[G]$ and $\grp[H]$ on $\X \ordplus \Y$.
\end{definition}

\begin{definition}[Product action]\label{def:product}
  Given $(\gelem[\pi], \gelem[\sigma]) \in \grp[G] \times \grp[H]$ and $(x,y) \in \X \ordtimes \Y$, we define the \intro{product action} of $\grp[G]$ and $\grp[H]$ on $\X \ordtimes \Y$ as
  $(\gelem[\pi], \gelem[\sigma])\cdot (x,y)$ as $(\gelem[\pi](x), \gelem[\sigma](y))$.
  Note that the ordering of $\X \ordtimes \Y$ is preserved by this action.
\end{definition}

\begin{definition}[Lex. Product Action]\label{def:nested product}
  Let $\grp[G] \intro*\lexGroupAction \grp[H]$ be the group whose elements are of the form 
  $(\gelem,\seqof{\gelem[\sigma]^{x}}[x\in\X])$, where 
  $\gelem[\sigma]_{x} \in \grp[H]$ for every $x\in\X$, and where the multiplication
  is defined as 
  $(\gelem_1,(\gelem[\sigma]_1^{x})_{x\in\X})(\gelem_2,(\gelem[\sigma]_2^{x})_{x\in\X})
  = (\gelem_1\gelem_2, (\gelem[\sigma]_1^{\gelem_2(x)}\gelem[\sigma]_2^x)_{x\in\X})$.
  The \intro{lexicographic product action} of $\grp[G]$ and $\grp[H]$ on $\X \ordtimes \Y$ is defined as
  $\grp[G] \lexGroupAction \grp[H]$ on $\X \ordtimes \Y$ is defined as
  $(\gelem,(\gelem[\sigma]^{x})_{x\in\X})\cdot(x',y') = 
  (\gelem\cdot x', \gelem[\sigma]^{x'}\cdot y')$ for every $(x',y')\in\X \ordtimes \Y$.
Essentially,
each element $x\in\X$ carries its own copy $\{x\}\times\Y$ of the structure $\Y$,
and different copies of the structure $\Y$ can be permuted independently.
\end{definition}

\begin{lemma}\label{lem:closure-properties-wqo}
If $(\mon[Q]{\X}, \gdivleq[\grp[G]])$ and $(\mon[Q]{\Y}, \gdivleq[\grp[H]])$ are \kl{WQOs},
then $(\mon[Q]{\X \ordplus \Y}, \gdivleq[\grp[G] \times \grp[H]])$ and $(\mon[Q]{\X\uplus\Y}, \gdivleq[\grp[G]\otimes\grp[H]])$ are also \kl{WQOs}.
\end{lemma}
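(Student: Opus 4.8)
The plan is to prove each of the two claims, for a fixed \kl{well-quasi-ordered} exponent set $Q$, by constructing a bijection between the relevant monomial set and a monomial set over a single factor and checking that it transports $\gdivleq$ to $\gdivleq$; then the hypothesis on that factor does the rest. I read the statement in the quantified form used in \cref{thm:compute-egb}, i.e.\ that $(\mon[Q]{\X},\gdivleq[\grp[G]])$ and $(\mon[Q]{\Y},\gdivleq[\grp[H]])$ are \kl{WQOs} for every \kl{well-quasi-ordered} $Q$ --- the sum case will use only a single instance of this, but the lexicographic-product case needs the quantification. Throughout I would use the standard facts that a finite product of \kl{WQOs} is a \kl{WQO}, and that a quasi-order isomorphic to a \kl{WQO} is itself a \kl{WQO} (see \cite{SCSC17}).

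For the \kl{sum action}, a monomial in $\mon[Q]{\X \ordplus \Y}$ is a finitely supported function on the disjoint union $\X \uplus \Y$, and restricting it to the two parts gives a bijection $\mon[Q]{\X \ordplus \Y} \cong \mon[Q]{\X} \times \mon[Q]{\Y}$. Since an element $(\gelem[\pi],\gelem[\sigma]) \in \grp[G] \times \grp[H]$ acts on the $\X$-part only through $\gelem[\pi]$ and on the $\Y$-part only through $\gelem[\sigma]$ (\cref{def:union}), and since divisibility of monomials is coordinatewise, this bijection carries $\gdivleq[\grp[G] \times \grp[H]]$ exactly to the product quasi-order of $(\mon[Q]{\X},\gdivleq[\grp[G]])$ and $(\mon[Q]{\Y},\gdivleq[\grp[H]])$. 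Both are \kl{WQOs} by hypothesis, and a product of \kl{WQOs} is a \kl{WQO}, which settles this case.

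For the \kl{lexicographic product action} on $\X \ordtimes \Y$ (the underlying set of \cref{def:nested product}), the key move is to identify a monomial $\monelt \in \mon[Q]{\X \ordtimes \Y}$ with the finitely supported map $\widehat{\monelt} \colon \X \to \mon[Q]{\Y}$ sending $x$ to $\bigl(y \mapsto \monelt(x,y)\bigr)$; this is a bijection $\mon[Q]{\X \ordtimes \Y} \cong \mon[Q']{\X}$, where $Q' \defined (\mon[Q]{\Y},\gdivleq[\grp[H]])$ is a \kl{WQO} by the hypothesis applied to $\grp[H] \actson \Y$ with exponents $Q$ (using the zero monomial as the distinguished least element of $Q'$, so that $\mon[Q']{\X}$ is well defined). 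I would then compute how the action of $(\gelem,(\gelem[\sigma]^{x})_{x\in\X})$ transports: being careful with how the inverse of a lexicographic-product element distributes over the fibers, one obtains $\widehat{\gelem\cdot\monelt}(\gelem\cdot x) = \gelem[\sigma]^{x}\cdot\widehat{\monelt}(x)$ --- that is, the action permutes the $\X$-index by $\gelem$ and, independently at each index, translates the corresponding fiber in $\mon[Q]{\Y}$ by $\gelem[\sigma]^{x}\in\grp[H]$. Unfolding $\monelt_1 \gdivleq[\grp[G]\otimes\grp[H]] \monelt_2$ with this identity, and using that the $\gelem[\sigma]^{x}$ may be chosen independently, gives: there is $\gelem\in\grp[G]$ such that $\widehat{\monelt}_1(\gelem\cdot x) \gdivleq[\grp[H]] \widehat{\monelt}_2(x)$ for every $x\in\X$. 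That is exactly the relation $\gdivleq[\grp[G]]$ on $\mon[Q']{\X}$, so $\monelt \mapsto \widehat{\monelt}$ is an isomorphism of quasi-orders between $(\mon[Q]{\X \ordtimes \Y},\gdivleq[\grp[G]\otimes\grp[H]])$ and $(\mon[Q']{\X},\gdivleq[\grp[G]])$. Since $Q'$ is a \kl{WQO}, the quantified hypothesis on $\grp[G]$ makes the latter a \kl{WQO}, hence so is the former.

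The main obstacle is the bookkeeping in the lexicographic-product step: correctly tracking which copy $\gelem[\sigma]^{x}$ of $\grp[H]$ acts on which fiber once $\gelem$ has reshuffled the indices (the group-action axioms here are exactly the multiplication rule of \cref{def:nested product}), and confirming that the finite-support conditions on $\monelt$ and on $\widehat{\monelt}$ correspond --- the $\X$-support of $\widehat{\monelt}$ is the projection to $\X$ of the (finite) support of $\monelt$. The sum case, and the two appeals to closure of \kl{WQOs} under finite products and under quasi-order isomorphism, are routine.
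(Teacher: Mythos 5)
Your proof is correct and follows the same decomposition the paper uses. The sum case is verbatim the paper's: identify $\mon[Q]{\X \ordplus \Y}$ with $\mon[Q]{\X} \times \mon[Q]{\Y}$ under the product order and invoke closure of \kl{WQOs} under finite products. For the \kl{lexicographic product action}, the paper simply cites \cite[Lemma 9]{GHOLAS24}; your explicit argument via the order isomorphism $\mon[Q]{\X \ordtimes \Y} \cong \mon[Q']{\X}$ with $Q' = (\mon[Q]{\Y},\gdivleq[\grp[H]])$ (including the verification $\widehat{\gelem\cdot\monelt}(\gelem\cdot x) = \gelem[\sigma]^{x}\cdot\widehat{\monelt}(x)$, which I checked against the multiplication rule of \cref{def:nested product}) supplies exactly the content that citation outsources. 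You are also right to flag the quantification: as written the lemma fixes a single $Q$, but the lexicographic-product case genuinely needs the hypothesis to hold for the derived \kl{WQO} $Q' = \mon[Q]{\Y}$ as well, so the correct reading is the $\forall Q$ form matching the hypothesis of \cref{thm:compute-egb} --- a point the paper leaves implicit.
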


\begin{proof}
The \kl{divisibility up to $\grp[G] \times \grp[H]$} order is essentially the product of the orders $\gdivleq[\grp[G]]$ and $\gdivleq[\grp[H]]$,
hence is a \kl{WQO} if both orders are \kl{WQOs} \cite[Lemma 1.5]{SCSC17}.
For, $(\mon[Q]{\X\uplus\Y}, \gdivleq[\G\otimes\calH])$ it follows from \cite[Lemma 9]{GHOLAS24}.
\end{proof}

\begin{lemma}\label{lem:product-not-wqo}
If $(\mon[Q]{\X \ordtimes \Y}, \gdivleq[\grp[G] \times \grp[H]])$ is not a \kl{WQO} even if both $\X$ and $\Y$ are infinite.
\end{lemma}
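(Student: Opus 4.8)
The plan is to exhibit an explicit infinite antichain in $(\mon[Q]{\X \ordtimes
\Y}, \gdivleq[\grp[G] \times \grp[H]])$ by encoding arbitrarily long cycles as
monomials over the product set $\X \times \Y$. First I would reduce to the case
of the largest possible acting groups: for any actions $\grp[G] \actson \X$ and
$\grp[H] \actson \Y$, the relation $\gdivleq[\grp[G] \times \grp[H]]$ is
contained in $\gdivleq[\perm{\X} \times \perm{\Y}]$, since enlarging a group only
coarsens the divisibility-up-to relation; hence any infinite antichain for the
latter relation is automatically one for the former. This turns the statement
into a purely combinatorial one: identify a monomial over $\X \ordtimes \Y$ with
exponents in $\set{0,1}$ with the finite bipartite graph whose left (resp.\
right) vertices are the elements of $\X$ (resp.\ $\Y$) that occur in it, with an
edge $\{x,y\}$ for each pair $(x,y)$ of exponent $1$; then $\monelt_1
\gdivleq[\perm{\X} \times \perm{\Y}] \monelt_2$ holds exactly when the bipartite
graph of $\monelt_1$ embeds, respecting the bipartition, as a subgraph of that
of $\monelt_2$. (If $Q$ is a one-point set the lemma is vacuous; otherwise I
read the exponent $1$ as a fixed element of $Q$ strictly above its minimum, and
this $0/1$ dictionary still applies.)

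Second, using that $\X$ and $\Y$ are infinite, I would fix pairwise distinct
elements $x_1, x_2, \dots \in \X$ and $y_1, y_2, \dots \in \Y$ and, for each
$n \geq 3$, put
\[
  \monelt_n \defined \prod_{i=1}^{n} (x_i, y_i)\,(x_{i+1}, y_i),
  \qquad\text{with the convention that } x_{n+1} \defined x_1 .
\]
The $2n$ indeterminates listed are pairwise distinct, so $\monelt_n$ uses only
the exponents $0$ and $1$, and its associated bipartite graph is exactly a
$2n$-cycle, with vertices $x_1, y_1, x_2, y_2, \dots, x_n, y_n$ in cyclic order.

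Third, I would check that $\setof{\monelt_n}{n \geq 3}$ is an antichain. If
$\monelt_m \gdivleq[\grp[G] \times \grp[H]] \monelt_n$ for some $m, n \geq 3$,
then by the reduction $\monelt_m \gdivleq[\perm{\X} \times \perm{\Y}] \monelt_n$,
so through the dictionary the $2m$-cycle embeds as a subgraph of a relabelled
copy of the $2n$-cycle. But a subgraph of a cycle $C_{2n}$ that is itself a
cycle must be all of $C_{2n}$ (every proper subgraph of a cycle is a disjoint
union of paths, hence acyclic), whence $2m = 2n$ and $m = n$. Thus
$\setof{\monelt_n}{n \geq 3}$ is an infinite antichain, and $(\mon[Q]{\X
\ordtimes \Y}, \gdivleq[\grp[G] \times \grp[H]])$ fails to be a \kl{WQO}, even
though $\X$ and $\Y$ are infinite.

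The only genuinely delicate point is the one line of graph theory in the third
step — the rigidity of cycles under the subgraph order — together with getting
the direction of the reduction right (a larger group yields a coarser
divisibility-up-to relation, so antichains transfer downwards); the rest, namely
that the $\monelt_n$ are honest $0/1$-monomials and that the bipartite-graph
dictionary faithfully captures $\gdivleq[\perm{\X} \times \perm{\Y}]$, is routine
bookkeeping. Note that the argument uses no hypothesis relating
$\gdivleq[\grp[G]]$ or $\gdivleq[\grp[H]]$ to being a \kl{WQO}: the product
action destroys the well-quasi-ordering property for \emph{every} choice of the
two infinite component actions.
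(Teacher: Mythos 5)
Your proof is correct and uses essentially the same antichain as the paper: for each $n\geq 3$ you build a $0/1$-monomial over $\X\ordtimes\Y$ whose associated bipartite graph is a $2n$-cycle (your $\monelt_n$ is the paper's $\monelt[c]_n$ up to a harmless transposition of the roles of $\X$ and $\Y$). The only difference is that the paper simply states the antichain and cites \cite[Example 10]{GHOLAS24}, while you supply the verification yourself via the reduction to $\perm{\X}\times\perm{\Y}$ and the rigidity of cycles under the (bipartition-preserving) subgraph order, which is sound.
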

\begin{proof}
We restate the antichain given in \cite[Example 10]{GHOLAS24},
that will also be used is \cref{rem:multiple-orbits} of \cref{sec:undecidability}
when discussing the undecidability of the \kl{equivariant ideal membership problem}.
Let $\{x_1,x_2,\dots\}$ and $\{y_1,y_2,\dots\}$ be infinite subsets of $\X$ and $\Y$ respectively.
For $n = 3,4,\dots$, let $\monelt{c}_n$ be the monomial
$
\monelt[c]_n = (x_1,y_1)(x_1,y_2)(x_2,y_2)(x_2,y_3)\cdots(x_n,y_n)(x_n,y_1)$.
Then $\setof{\monelt[c]_n}{n = 3,4,\dots}$ is an infinite antichain.
\end{proof}

\begin{lemma}\label{lem:closure-properties-comp}
If $\grp[G]\actson\X$ and $\grp[H]\actson\Y$ satisfy our \kl{computability assumptions},
then so do $(\grp[G] \times \grp[H])\actson (\X \ordplus \Y)$ and $(\grp[G] \otimes \grp[H])\actson (\X \times \Y)$.
\end{lemma}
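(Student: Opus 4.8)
The plan is to check, separately for the \kl{sum action} and the \kl{lexicographic product action}, the two halves of the \kl{computability assumptions}: that the action is \kl(ord){compatible} with an \kl{effective total ordering}, and that it is \kl{effectively oligomorphic}. The order part should be immediate once one observes that the lexicographic sum and the lexicographic product of two computable total orders are again computable, so only compatibility needs an argument. For the \kl{sum action} it suffices to recall that in $\X \ordplus \Y$ every element of $\X$ sits below every element of $\Y$, and that $(\gelem[\pi],\gelem[\sigma])$ preserves the two copies while acting by the order preserving maps $\gelem[\pi]$, $\gelem[\sigma]$ on each. For the \kl{lexicographic product action} I would split a comparison $(x_1,y_1)<(x_2,y_2)$ into the case $x_1<_\X x_2$, where the images still compare by their first coordinates through $\gelem$, and the case $x_1=x_2$, where the two second coordinates are moved by the \emph{same} element $\gelem[\sigma]^{x_1}$, which preserves $<_\Y$; the converse direction follows by applying the inverse group element.

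The substance is \kl{effective oligomorphicity}, which I would reduce in both cases to a combinatorial description of the orbits of $n$-tuples, from which the orbit-equality algorithm and the orbit-representatives algorithm fall out. For the \kl{sum action}, a tuple $\vec z \in (\X \ordplus \Y)^n$ is determined up to orbit by its \emph{pattern} $w\in\{\mathsf{left},\mathsf{right}\}^n$ (recording which entries lie in $\X$ and which in $\Y$), the $\grp[G]$-orbit of the subtuple of $\X$-entries, and the $\grp[H]$-orbit of the subtuple of $\Y$-entries, because the action preserves the two copies and acts coordinatewise. There are finitely many patterns and, for each fixed arity, finitely many component orbits since $\grp[G]$ and $\grp[H]$ are \kl{oligomorphic}, so $(\X \ordplus \Y)^n$ is \kl{orbit finite}; the same reasoning applies to $E^n$ for any \kl{orbit finite} $E \subseteq \X \ordplus \Y$, which splits as $E_\X \uplus E_\Y$ with both parts \kl{orbit finite}. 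The orbit-equality test then extracts and compares the two patterns and calls the two given algorithms on the $\X$- and $\Y$-parts, while a system of representatives is obtained by enumerating patterns and interleaving the representatives of $\X^k$ and $\Y^{n-k}$ according to each pattern; this produces exactly one representative per orbit because the pattern and the two component orbits are orbit invariants.

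For the \kl{lexicographic product action}, the crux — and what I expect to be the main obstacle — is the orbit description. The first coordinate of each entry is moved only by $\gelem\in\grp[G]$, so the $\grp[G]$-orbit of the first-coordinate tuple $(x_1,\dots,x_n)$ is an orbit invariant, and hence so is the partition $P$ of $\{1,\dots,n\}$ into blocks of indices sharing the same first coordinate; conditioned on $P$, the subtuple of second coordinates at positions in a block $B$ can be moved freely by the corresponding component $\gelem[\sigma]^{x}$, independently across blocks. I would therefore aim to prove that two tuples lie in the same orbit if and only if their first-coordinate tuples are $\grp[G]$-equivalent in $\X^n$ and, for every block $B$ of the common partition, the subtuples of second coordinates at positions in $B$ are $\grp[H]$-equivalent in $\Y^{|B|}$; the delicate point is to verify that witnesses for the first-coordinate part and for the individual blocks may be chosen independently and then reassembled into a single element of $\grp[G]\lexGroupAction\grp[H]$ (setting the remaining components to the identity). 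Granting this, the number of orbits of $(\X \ordtimes \Y)^n$ is bounded by the finite number of $\grp[G]$-orbits of $\X^n$ times, per partition, a product over its blocks of the finite numbers of $\grp[H]$-orbits of $\Y^{|B|}$, and projecting an \kl{orbit finite} $F \subseteq \X \ordtimes \Y$ onto \kl{orbit finite} subsets of $\X$ and $\Y$ extends this to all $F^n$, yielding oligomorphicity. The orbit-equality algorithm runs the $\grp[G]$-algorithm on the first coordinates, reads off $P$, and runs the $\grp[H]$-algorithm block by block; representatives are produced by enumerating $\grp[G]$-representatives of $\X^n$, reading off the induced partition, independently choosing a $\grp[H]$-representative of $\Y^{|B|}$ for each block $B$, and reassembling, which again yields one representative per orbit since the first-coordinate $\grp[G]$-orbit and every per-block $\grp[H]$-orbit are invariants.
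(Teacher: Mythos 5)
Your proposal is correct, and for the sum action it follows essentially the same route as the paper: the paper's triple $(w_\X, w_\Y, w_{\text{tag}})$ is exactly your (pattern, $\X$-subtuple, $\Y$-subtuple) decomposition, with the pattern, the $\grp[G]$-orbit of the $\X$-part, and the $\grp[H]$-orbit of the $\Y$-part forming a complete orbit invariant.

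Where you go beyond the paper is the lexicographic product. The paper only writes "the proof being similar for $(\grp[G]\otimes\grp[H])\actson(\X\times\Y)$" and does not spell it out. You correctly observe that the naive analogue of the sum-case decomposition (first-coordinate tuple and second-coordinate tuple compared independently under $\grp[G]$ and $\grp[H]$) is not sound, because the second coordinates at positions sharing a first coordinate are forced to move by the \emph{same} $\gelem[\sigma]^{x}$. Your fix — record the $\grp[G]$-orbit of the first-coordinate tuple, derive from it the partition into equal-first-coordinate blocks, and compare the second-coordinate subtuples under $\grp[H]$ block by block — is the right invariant, and your reassembly step is valid: any choice of $\gelem\in\grp[G]$ for the first coordinates together with one $\gelem[\sigma]_B\in\grp[H]$ per block (identity elsewhere) assembles into a legitimate element of $\grp[G]\lexGroupAction\grp[H]$, since the $\Y$-components are completely unconstrained. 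One small point worth pencilling in: when you pass from $(\X\ordtimes\Y)^n$ to $F^n$ for an orbit finite $F$, the projection $\pi_2(F)\subseteq\Y$ is indeed orbit finite under $\grp[H]$, but this is not because $\pi_2$ is equivariant for a group homomorphism to $\grp[H]$ (there is none); rather, the second projection of a single $\grp[G]\lexGroupAction\grp[H]$-orbit of a pair $(x_0,y_0)$ is precisely the $\grp[H]$-orbit of $y_0$, since $\gelem[\sigma]^{x_0}$ ranges over all of $\grp[H]$. With that noted, your argument is complete and, for the product case, more informative than the paper's sketch.
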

\begin{proof}
  It is an easy check that the actions defined are all \kl(ord){compatible}
  with the total ordering on the set of indeterminates.
  Let us now sketch the proof that points (2) and (3) of our
  \kl{computability assumptions} hold for the action $(\grp[G] \times \grp[H])\actson (\X \ordplus \Y)$,
  the proof being similar for $(\grp[G] \otimes \grp[H])\actson (\X \times \Y)$.
  We represent elements of $\X \ordplus \Y$ as elements of a tagged union,
  and given a word $w \in (\X \ordplus \Y)^*$, we can 
  create three words: one $w_\X$ in $\X^*$, one $w_\Y$ in $\Y^*$, and one $w_\text{tag}$ in $\set{ 0, 1 }^*$,
  obtained by considering respectively only the elements of $\X$,
  only the elements of $\Y$, and restricting the elements of $\X \ordplus \Y$ to whether 
  or not they belong to $\X$.
  It is an easy check that two words $u,v$ are in the same orbit if and only if
  their $u_\X$ and $v_\X$ are in the same orbit under $\grp[G]$,
  their $u_\Y$ and $v_\Y$ are in the same orbit under $\grp[H]$, and their $u_\text{tag}$ and $v_\text{tag}$ are equal.

  To list representatives of the orbits in $(\X \ordplus \Y)^n$ for a fixed 
  $n \in \N$, we can list representatives $u_\X$ of the orbits in $\X^{\leq n}$,
  representatives $u_\Y$  of the orbits in $\Y^{\leq n}$,
  and words $u_\text{tag} \in \set{0,1}^n$, and consider triples
  $(u_\X, u_\Y, u_\text{tag})$ such that $|u_\X| + |u_\Y| = n$,
  $|u_\text{tag}|_0 = |u_\X|$, and $|u_\text{tag}|_1 = |u_\Y|$.
\end{proof}
\section{Applications}
\label{sec:applications}

\paragraph{Polynomial computations.} \AP The fact that (finite control) systems
performing polynomial computations can be verified follows from the theory of
\kl{Gröbner bases} on finitely many indeterminates \cite{MULSEI02,BEDUSHWO17}.
There were also numerous applications to automata theory, such as deciding
whether a weighted automaton could be determinised (resp. desambiguated)
\cite{BESM23,PUSM24}. We refer the readers to a nice survey recapitulating the
successes of the ``Hilbert method'' automata theory \cite{BOJAN19}. A natural
consequence of the effective computations of \kl{equivariant Gröbner bases} is
that one can apply the same decision techniques to \emph{orbit finite
polynomial computations}. For simplicity and clarity, we will focus on
\kl{polynomial automata} without states or zero-tests \cite{BEDUSHWO17}, but
the same reasoning would apply to more general systems as we will discuss in
\cref{rem:topological-wsts}.

\AP Before discussing the case of orbit finite polynomial automata, let us
recall the setting of \kl{polynomial automata} in the classical case, as
studied by \cite{BEDUSHWO17}, with techniques that dates back to
\cite{MULSEI02}. A \intro{polynomial automaton} is a tuple $A \defined (Q,
\Sigma, \delta, q_0, F)$, where $Q = \K^n$ for some finite $n \in \N$, $\Sigma$
is a finite alphabet, $\delta \colon Q \times \Sigma \to Q$ is a transition
function such that $\delta(\cdot,a)_i$ is a polynomial in the indeterminates
$q_1, \dots, q_n$ for every $a \in \Sigma$ and every $i \in \set{1, \dots, n}$,
$q_0 \in Q$ is the initial state, and $F \colon Q \to \K$ is a polynomial
function describing the final result of the automaton. The \intro{zeroness
problem for polynomial automata} is the following decision problem: given a
\kl{polynomial automaton} $A$, is it true that for all words $w \in \Sigma^*$,
the polynomial $F(\delta^*(q_0, w))$ is zero? It is known that the \kl{zeroness
problem for polynomial automata} is decidable \cite{BEDUSHWO17}, using the
theory of \kl{Gröbner bases} on finitely many indeterminates.

\newcommand{\toequiv}{\stackrel{\text{\tiny{eq}}}{\to}}

\AP Let us now propose a new model of computation called \kl{orbit finite
polynomial automata}, and prove an analogue decidability result. Let us fix an
\kl{effectively oligomorphic} action $\group \actson \Indets$, such that there
exists finitely many indeterminates $V \subfin \Indets$ such that $\group$ acts
as the identity on $V$. Given such a function $f \colon \Indets \to \K$, and
given a polynomial $p \in \poly{\K}{\Indets}$, we write $p(f)$ for the
evaluation of $p$ on $f$, that belongs to $\K$. Let us emphasis that the model
is purposely designed to be simple and illustrate the usage of \kl{equivariant
Gröbner bases}, and not meant to be a fully-fledged model of computation.

\begin{definition}
  \label{def:orbit-finite-polynomial-automaton}
  An \intro{orbit finite polynomial
  automaton} is a tuple $A \defined (Q, \delta, q_0, F)$, where $Q =
  \Indets \to \K$, $q_0 \in Q$ is a function that is non-zero for finitely
  many indeterminates, $\delta \colon
  \Indets \times \Indets \toequiv \poly{\K}{\Indets}$ 
  is a
  polynomial update function, and $F \in \poly{\K}{V}$ is a polynomial 
  computing the result of the automaton. 

  Given a letter $a \in \Indets$ and a
  state $q \in Q$, the updated state $\delta^*(a,q)  \in Q$ is defined as the function from
  $\Indets$ to $\K$ defined by $\delta^*(a,q) \colon x \mapsto \delta(a,x)( q )$.
  The update function is naturally extended to words. Finally, the
  output of an \kl{orbit finite polynomial automaton} on a word $w \in \Indets^*$
  is defined as $F(\delta^*(w,q_0))$.
\end{definition}

\AP \kl{Orbit finite polynomial automata} can be used to model programs that
read a string $w \in \Indets^*$ from left to right, having as internal state a
dictionary of type \texttt{dict[indet, number]}, which is updated using
polynomial computations. As for \kl{polynomial automata}, the
\intro(ofpa){zeroness problem} for orbit finite polynomial automata is the
following decision problem: decide if for every input word $w$, the output
$F(\delta^*(w, q_0))$ is zero.

\begin{theorem}
  \label{cor:orbit-finite-polynomial-automata-zeroness}
  Let $\Indets$ be a set of indeterminates that satisfies the
  \kl{computability assumptions} and such that $(\mon[Y]{\Indets}, \gdivleq)$ is a
  \kl{well-quasi-ordering}, for every \kl{well-quasi-ordered} set $(Y, \leq)$.
  Then, the \kl(ofpa){zeroness problem} is decidable for \kl{orbit finite polynomial automata}.
\end{theorem}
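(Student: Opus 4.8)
The plan is to carry the classical ``Hilbert method'' for the \kl{zeroness problem for polynomial automata} \cite{MULSEI02,BEDUSHWO17} over to the orbit finite setting, using the \kl{equivariant Hilbert basis property} in place of ordinary Noetherianity (this property is available here: $\om$ is a \kl{WQO}, so instantiating the hypothesis at $Y=\om$ makes \cref{thm:equiv-hilbert-property} applicable), and using the algorithm \kl{egb} of \cref{thm:compute-egb} together with \cref{cor:equivariant-ideals-computations} in place of ordinary Gröbner basis computations.

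\textbf{Symbolic reformulation.} For each letter $a\in\Indets$ let $\phi_a\colon\poly{\K}{\Indets}\to\poly{\K}{\Indets}$ be the $\K$-algebra endomorphism sending each $x$ to $\delta(a,x)$, and for $w=a_1\cdots a_m$ set $\phi_w\defined\phi_{a_1}\circ\cdots\circ\phi_{a_m}$, with $\phi_\varepsilon$ the identity. Since evaluation $\mathrm{ev}_{q_0}\colon\poly{\K}{\Indets}\to\K$ at the finitely supported function $q_0$ is a ring homomorphism, an induction on $|w|$ gives $F(\delta^*(w,q_0))=\mathrm{ev}_{q_0}(\phi_w(F))$ for every word $w$. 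Hence the automaton is zero if and only if $\idl[J]\defined\IdlGen{\setof{\phi_w(F)}{w\in\Indets^*}}$ is contained in $\ker(\mathrm{ev}_{q_0})$. Moreover, $F$ is $\group$-invariant (its variables lie in $V$, on which $\group$ acts trivially), and $\delta$ being \kl(func){equivariant} yields $\gelem\cdot\phi_w(F)=\phi_{\gelem\cdot w}(F)$; so $\setof{\phi_w(F)}{w}$ is an \kl{equivariant set} and $\idl[J]$ is an \kl{equivariant ideal}.

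\textbf{Computing $\idl[J]$.} I would compute $\idl[J]$ as the limit of the increasing chain $\idl[J]_0\defined\IdlGen{F}$, $\idl[J]_{k+1}\defined\idl[J]_k+\sum_{a\in\Indets}\IdlGen{\phi_a(\idl[J]_k)}$; since $\phi_a$ is a ring homomorphism, $\IdlGen{\phi_a(\idl[J]_k)}$ is generated by the $\phi_a$-image of any generating set of $\idl[J]_k$, and an easy induction shows $\idl[J]_k=\IdlGen{\setof{\phi_w(F)}{|w|\le k}}$, so indeed $\idl[J]=\bigcup_k\idl[J]_k$. Given an \kl{orbit finite} generating set $G_k$ of $\idl[J]_k$, the set $\setof{\phi_a(b)}{a\in\Indets,\ b\in G_k}$ is the image of the \kl{orbit finite} set $\Indets\times G_k$ under the \kl(func){equivariant} and computable map $(a,b)\mapsto\phi_a(b)$, hence \kl{orbit finite} and computable; thus $\idl[J]_{k+1}$ is an \kl{orbit-finitely generated} \kl{equivariant ideal}, for which \kl{egb} computes an \kl{equivariant Gröbner basis}. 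Whether $\idl[J]_{k+1}=\idl[J]_k$ is decidable (test membership of the generators of $\idl[J]_{k+1}$ in $\idl[J]_k$, using \cref{cor:equivariant-ideals-computations}), and the loop halts because, by the \kl{equivariant Hilbert basis property}, the \kl{equivariant ideal} $\idl[J]=\bigcup_k\idl[J]_k$ is \kl{orbit-finitely generated}, so the chain stabilises. This yields an \kl{equivariant Gröbner basis} $\Basis$ of $\idl[J]$, with finitely many orbit representatives $b_1,\dots,b_r$.

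\textbf{Testing $\idl[J]\subseteq\ker(\mathrm{ev}_{q_0})$, and the main obstacle.} The automaton is zero iff every element of $\Basis$ lies in $\ker(\mathrm{ev}_{q_0})$, i.e.\ iff $\mathrm{ev}_{q_0}(\gelem\cdot b_i)=0$ for all $i$ and all $\gelem\in\group$. This is the one delicate point, since $\mathrm{ev}_{q_0}$ is \emph{not} $\group$-equivariant, so one may not merely evaluate the representatives $b_i$. However, writing $W$ for the (finite) support of $q_0$ and $S_i$ for the (finite) variable set of $b_i$, the value $\mathrm{ev}_{q_0}(\gelem\cdot b_i)$ is $b_i$ evaluated at $x\mapsto q_0(\gelem\cdot x)$, which depends only on the partial injection from $S_i$ to $W$ recording which variables of $S_i$ are mapped into $W$ by $\gelem$ and to which elements. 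There are finitely many such partial injections; for each, \kl{effective oligomorphicity} lets us decide whether it is realised by some $\gelem\in\group$, and if so we evaluate $b_i$ on the induced $\K$-valued assignment. The automaton is zero iff all of these finitely many evaluations vanish, which is decidable. The conceptual heart is the middle step: noticing that the reachability fixpoint lives at the level of \kl{equivariant ideals} and converges precisely because the \kl{WQO} hypothesis supplies the \kl{equivariant Hilbert basis property}, with each step made effective by \kl{egb}; the surrounding steps are routine, the last one needing care only because $\mathrm{ev}_{q_0}$ breaks equivariance.
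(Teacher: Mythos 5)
Your proof is correct and is essentially the same as the paper's: both compute the increasing chain of equivariant ideals $\idl_k=\IdlGen{\phi_w(F):|w|\le k}$ (the paper phrases it as a backward fixpoint of varieties $E_i=\mathcal{V}(\idl_i)$, which is the same chain), both invoke the equivariant Hilbert basis property for termination and \kl{egb} for effectivity, and both conclude with an evaluation check at $q_0$. The only substantive difference is that you spell out the final evaluation step — correctly observing that $\mathrm{ev}_{q_0}$ is not equivariant and reducing the check to finitely many partial injections from the variable set of each Gröbner-basis representative into the support of $q_0$ — where the paper simply says "by enumerating relevant orbits"; your elaboration is a welcome clarification of that step rather than a new idea.
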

\begin{proof}
  Let $A = (Q, \delta, q_0, F)$ be an \kl{orbit finite polynomial
  automaton}. Following the classical \emph{backward procedure} for such
  systems, we will compute a sequence of sets $E_0 \defined \setof{ q \in Q }{
  F(q) = 0 }$, and $E_{i+1} \defined \mathrm{pre}^\forall(E_i) \cap E_i$, where
  $\mathrm{pre}^\forall(E)$ is the set of states $q \in Q$ such that for every
  $a \in \Sigma$, $\delta^*(q,a) \in E$. We will prove that the sequence of
  sets $E_i$ stabilises, and that it is computable. As an immediate
  consequence, it suffices to check that $q_0 \in E_{\infty}$, where $E_\infty$
  is the limit of the sequence $(E_i)_{i \in \N}$, to decide the
  \kl(ofpa){zeroness problem}.

  The only idea of the proof is to notice that all the sets $E_i$ are
  representable as zero-sets of \kl{equivariant ideals} in
  $\poly{\K}{\Indets}$, allowing us to leverage the effective computations of
  \cref{cor:equivariant-ideals-computations}. Given a set $H$ of polynomials,
  we write $\mathcal{V}(H)$ the collections of states $q \in Q$ such that $p(q)
  = 0$ for all $p \in H$.
  It is easy to see that $E_0 = \mathcal{V}(\set{F}) = \mathcal{V}(\idl_0)$,
  where $\idl_0$ is the \kl{equivariant ideal} generated by $F$, since 
  $F \in \poly{\K}{V}$ and $V$ is invariant under the action of $\group$.
  Furthermore, assuming that $E_i = \mathcal{V}(\idl_i)$, we can
  see that 
  \begin{align*}
    \mathrm{pre}^\forall(E_i) 
    & = \setof{ q \in Q }{ \forall a \in \Sigma, \delta^*(a,q) \in E_i } \\
    & = \setof{ q \in Q }{ \forall a \in \Sigma, \forall p \in \idl_i, p(\delta^*(a,q)) = 0 } \\
    & = \setof{ q \in Q }{ \forall p' \in \idl[J], p'(q) = 0 }
  \end{align*}
  Where, the \kl{equivariant ideal} $\idl[J]$ is generated by the
  polynomials $\mathrm{pullback}(p,a) \defined p [ x \mapsto \delta(a,x)]$
  for every pair $(p, a) \in \idl_i \times \Sigma$. 
  As a consequence, we have $E_{i+1} = \mathcal{V}(\idl_{i+1})$, where
  $\idl_{i+1} = \idl_i + \idl[J]$.
  Because the sequence $\seqof{ \idl_i }[ i \in \N]$ is increasing, and thanks
  to the \kl{equivariant Hilbert basis property} of $\poly{\K}{\Indets}$, there
  exists an $n_0 \in \N$ such that $\idl_{n_0} = \idl_{n_0 + 1} = \idl_{n_0 +
  2} = \cdots$. In particular, we do have $E_{n_0} = E_{n_0 + 1} = E_{n_0 + 2}
  = \cdots$.

  Let us argue that we can compute the sequence $\idl_i$.
  First,  $\idl_0 = \EqIdlGen{F}$ is finitely represented.
  Now, 
  given an \kl{equivariant ideal} $\idl$, represented by an \kl{orbit finite}
  set of generators $H$,
  we can compute the \kl{equivariant ideal} $\idl[J]$ generated by the
  polynomials $\mathrm{pullback}(p,a) \defined p [ x_i \mapsto \delta(a)(x_i)]$
  for every pair $(p, a) \in H \times \Sigma$. Indeed, $H \times \Sigma$ is
  \kl{orbit finite}, and the function $\mathrm{pullback}$ is
  computable and \kl(func){equivariant}: given $\gelem \in \group$, we can
  show that
  \begin{align*}
    \gelem \cdot \mathrm{pullback}(p, a) & = 
    \gelem \cdot (p [ x_i \mapsto \delta(a,x_i)]) & \text{ by definition }\\
                                                  & = p [ x_i \mapsto (\gelem \cdot \delta(a, x_i))] 
                                                  & \text{ $\gelem$ acts as a morphism } \\
    & = p [ x_i \mapsto \delta(\gelem \cdot a, \gelem \cdot x_i))] 
    & \text{ $\delta$ is \kl(func){equivariant} } \\
    & = (\gelem \cdot p) [ x_i \mapsto \delta(\gelem \cdot a, x_i)] 
    & \text{ definition of substitution }
    \\
    & = \mathrm{pullback}(\gelem \cdot p, \gelem \cdot a).
    & \text{ by definition }
  \end{align*}
  
  Finally, one can detect when the sequence stabilises, by checking whether
  $\idl_i = \idl_{i+1}$, which is decidable because the
  \kl{equivariant ideal membership problem} is decidable 
  by \cref{thm:compute-egb}.

  To conclude, it remains to check whether $q_0 \in E_\infty$,
  which amounts to check that $q_0 \in \mathcal{V}(\idl_\infty)$.
  This is equivalent to checking whether for every element $p \in \Basis$
  where $\Basis$ is an \kl{equivariant Gröbner basis} of $\idl_\infty$, we have
  $p(q_0) = 0$, which can be done by enumerating relevant orbits.
\end{proof}

The \kl{orbit finite polynomial automata} model could be extended to allow for
inputs of the form $\Indets^k$ for some $k \in \N$, or even be recast in the
theory of nominal sets \cite{BOJAN16inf}. Furthermore, leveraging the closure
properties of \cref{lem:closure-properties-comp,lem:closure-properties-wqo},
one can also reduce the equivalence problem for \kl{orbit finite polynomial
automata} to the \kl(ofpa){zeroness problem}, by considering the \kl{sum
action} on the registers to compute the difference of the two results. We leave
a more detailed investigation of the generalisation of \kl{polynomial automata}
to the orbit finite setting for future work.

\begin{remark}
  \label{rem:topological-wsts}
  The proof of \cref{cor:orbit-finite-polynomial-automata-zeroness} can be
  recast in the more general setting of 
  \intro{topological well-structured transition system}, that were introduced by
  Goubault-Larrecq in \cite{JGL07}, who noticed that the pre-existing notion of
  \emph{Noetherian space} could serve as a topological generalisation of
  \kl{Noetherian rings} (where ideal-based method can be applied),
  and 
  \kl{well-quasi-orderings}, for which the celebrated decision procedures on
  \emph{well-structured transition systems} can be applied \cite{ABDU96}. In particular,
  Goubault-Larrecq used such systems to verify properties of \emph{polynomial
  programs} computing over the complex numbers, that can communicate over lossy
  channels using a finite alphabet \cite{JGL10}. 
  Because of \cref{cor:equivariant-ideals-computations}, we do have an 
  effective way to compute on the topological spaces at hand, 
  and therefore we can apply the theory of
  \kl{topological well-structured transition systems} to verify systems
  such as \emph{orbit finite polynomial automata communicating using a finite alphabet
  over lossy channels}.
  We refer to \cite[Chapter 9]{JGL13} for a survey on the theory of 
  Noetherian spaces.
\end{remark}

\paragraph{Reachability problem of symmetric data Petri nets.} The classical
model of Petri nets was extended to account for arbitrary data attached to
tokens to form what is called data Petri nets. We will not discuss the precise
definitions of these models, but point out  that a reversible data Petri net is
exactly what is called a \kl{monomial rewriting system} \cite[Section
8]{GHOLAS24}. Because reachability in such rewriting systems can be decided
using \kl{equivariant ideal membership} queries \cite[Theorem 64]{GHOLAS24}, we
can use \cref{thm:compute-egb} and \cref{lem:reducts-equiv-hilbert} to show
\cref{cor:rev data VAS}. Note that \kl{monomial rewrite systems} will be at the
center of our undecidability results in \cref{sec:undecidability}.

\begin{corollary}\label{cor:rev data VAS}
  For every \kl{nicely orderable} group action $\group\actson\Indets$,
  the reachability problem for reversible Petri nets with data in $\Indets$ is decidable.
\end{corollary}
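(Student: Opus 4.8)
The plan is to assemble the reduction chain sketched in the paragraph above. By \cite[Section 8]{GHOLAS24}, a reversible Petri net with data in $\Indets$ is literally the same object as a \kl{monomial rewriting system} over $\poly{\K}{\Indets}$, carrying the same action $\group\actson\Indets$; and by \cite[Theorem 64]{GHOLAS24} its reachability relation is decided by an effectively produced, finite family of \kl{equivariant ideal membership} queries in $\poly{\K}{\Indets}$, each query asking whether a given polynomial lies in an \kl{equivariant ideal} presented by an \kl{orbit finite} set of generators. Hence it is enough to show that the \kl{equivariant ideal membership problem} for $\poly{\K}{\Indets}$ is decidable whenever $\group\actson\Indets$ is \kl{nicely orderable}: composing that decision procedure with the reduction of \cite{GHOLAS24} settles \cref{cor:rev data VAS}.

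So let $(p,H)$ be such a query, with $H$ an \kl{orbit finite} set of generators of an \kl{equivariant ideal} $\idl$ in $\poly{\K}{\Indets}$. Since $\group\actson\Indets$ is \kl{nicely orderable}, \cref{cor:egb} lets us compute an \kl{equivariant Gröbner basis} $\Basis$ of $\idl$ from $H$; concretely, this unfolds to pushing $\idl$ forward along the witnessing computable bijection $f\colon\Indets\to\Y$, computing an $\calH$-\kl{equivariant Gröbner basis} $\Basis'$ of the ($\calH$-equivariant, by the first item of \cref{lem:reducts-equiv-hilbert}) ideal $f(\idl)$ via \cref{thm:compute-egb} --- applicable because $\calH\actson\Y$ satisfies the hypotheses of \cref{thm:compute-egb} by definition of \kl{nice orderability} --- and taking $\Basis \defined \orbit[\group]{f^{-1}(\Basis')}$ (second item of \cref{lem:reducts-equiv-hilbert}). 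With $\Basis$ in hand I decide $p\in\idl$ by the usual Gröbner reduction: repeatedly apply the rewriting relation $\toeucl{\Basis}$ to $p$, which terminates by \cref{lem:chm} and whose set of \kl{remainders} $\rem{\Basis}{p}$ is finite and computable by \cref{lem:normalisation} --- here I use that $\group\actson\Indets$ is itself \kl{effectively oligomorphic}, which holds for \kl{nicely orderable} actions by the discussion in \cref{rem:reduct}. By the defining property of an \kl{equivariant Gröbner basis} (its leading-monomial and domain conditions let one reduce the leading term of any nonzero element of $\idl$ without introducing new indeterminates), $p\in\idl$ if and only if $0\in\rem{\Basis}{p}$, which is the decision. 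Alternatively, one could transport the whole query along $f$ and invoke \cref{cor:equivariant-ideals-computations} for $\calH\actson\Y$ directly.

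I expect no genuinely new mathematical content in this proof: the termination of every procedure involved is supplied by the \kl{WQO} assumption already packaged into \kl{nice orderability}, and all the transport between $\group\actson\Indets$ and $\calH\actson\Y$ is exactly \cref{lem:reducts-equiv-hilbert} together with the preservation of \kl{effective oligomorphicity} under reducts. The one point that really must be checked --- and the only place where something could go wrong --- is the interface with \cite[Theorem 64]{GHOLAS24}: that the reduction from reachability to \kl{equivariant ideal membership} is itself effective, and that the ideals it produces are \kl{orbit-finitely generated}, so that \cref{thm:compute-egb} (via \cref{cor:egb}) actually applies to them. Since that is precisely what the cited theorem asserts, the argument should go through without further obstacles.
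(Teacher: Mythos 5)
Your proposal is correct and follows essentially the same route as the paper: identify reversible data Petri nets with monomial rewriting systems, reduce reachability to \kl{equivariant ideal membership} queries via \cite[Theorem 64]{GHOLAS24}, and discharge those queries by computing an \kl{equivariant Gröbner basis} through \cref{cor:egb} (equivalently, \cref{thm:compute-egb} together with \cref{lem:reducts-equiv-hilbert}), which is precisely the chain the paper invokes. The extra unfolding you supply — pushing the ideal along the reduct bijection $f$, transporting the Gröbner basis back, reducing $p$ by $\toeucl{\Basis}$ — is a faithful expansion of what \cref{cor:egb} packages, not a different argument.
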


\paragraph{Orbit-finite systems of equations} The classical theory of solving
finite systems of linear equations has been generalised to the infinite setting
by \cite{GHL22}, \cite[Section 9]{GHOLAS24}. In this setting, one considers an
\kl{effectively oligomorphic} group action $\group\actson\Indets$, and the
vector space $\lin(\Indets^n)$ generated by the indeterminates $\Indets^n$ over
$\K$. An orbit-finite system of equations asks whether a given vector $u \in
\lin(\Indets^n)$ is in the vector space generated by an \kl{orbit-finite} set
of vectors $V$ in $\lin(\Indets^n)$ \cite[Section 9]{GHOLAS24}. It has been
shown that the solvability of these systems of equations reduces to the
\kl{equivariant ideal membership problem} \cite[Theorem 68]{GHOLAS24}, and as a
consequence of this reduction and
\Cref{thm:compute-egb,lem:reducts-equiv-hilbert} we get that:

\begin{corollary}\label{cor:lin solv}
  For every \kl{nicely orderable} group action $\group\actson\Indets$,
  the solvability problem for orbit-finite systems of equations
  is decidable.
\end{corollary}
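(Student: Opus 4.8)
The plan is to assemble three known ingredients, following the same pattern as the proof of \cref{cor:rev data VAS}. First, by \cite[Theorem 68]{GHOLAS24}, deciding solvability of an orbit-finite system of linear equations over $\lin(\Indets^n)$ reduces to deciding finitely many \kl{equivariant ideal membership} queries; this reduction only asks that $\group\actson\Indets$ be \kl{effectively oligomorphic}. Second, a \kl{nicely orderable} action is \kl{effectively oligomorphic}: by definition it is an \kl{effective reduct} of an \kl{effectively oligomorphic} action, and \cref{rem:reduct} preserves \kl{effective oligomorphicity}, so the reduction applies. Third, the \kl{equivariant ideal membership problem} is decidable for \kl{nicely orderable} actions, and this is where our positive results enter.

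For this last point, I would unfold the definition: a \kl{nicely orderable} $\group\actson\Indets$ is an \kl{effective reduct}, via a computable bijection $f\colon\Indets\to\Y$, of an \kl{effectively oligomorphic} $\calH\actson\Y$ that preserves an effective linear order and for which $(\mon[Y']{\Y},\gdivleq[\calH])$ is a \kl{WQO} for every \kl{WQO} $(Y',\leq)$. Given an \kl{equivariant ideal} $\idl\subseteq\poly{\K}{\Indets}$ presented by an \kl{orbit finite} set of generators, \cref{cor:egb} — namely \cref{thm:compute-egb} run on $\calH\actson\Y$ together with \cref{lem:reducts-equiv-hilbert} to transport the result back along $f$ — computes an \kl{equivariant Gröbner basis} $\Basis$ of $\idl$. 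Since $\Basis$ is in particular a \kl{Gröbner basis}, membership of $p$ in $\idl$ is decided by repeatedly cancelling $\lm(p)$ against a multiple of an element of $\Basis$ and checking whether $0$ is reached; each reduction step is a finite search, since $\Basis$ is \kl{orbit finite} and $\group\actson\Indets$ is \kl{effectively oligomorphic} — one only tries the orbit representatives of $\Basis$ and the finitely many group elements sending the domain of such a representative into $\dom(\lm(p))$, exactly as in the proof of \cref{lem:normalisation} — and the process terminates because $\pmonleq$ is well-founded by \cref{lem:chm}.

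Plugging this decision procedure into the reduction of \cite[Theorem 68]{GHOLAS24} gives the corollary. The proof is thus essentially an assembly step, and the only thing I would check carefully is that the \kl{equivariant ideal membership} queries produced by that reduction are posed over $\poly{\K}{\Indets}$ with the \emph{same} action $\group\actson\Indets$ (so that ``\kl{nicely orderable}'' is available unchanged), rather than over some derived ring for which the hypotheses could fail; I expect this to be the case, since the reduction factors through reachability in a \kl{monomial rewrite system} over $\Indets$ (\cite[Theorem 64]{GHOLAS24}), just as in \cref{cor:rev data VAS}. Should a derived set of indeterminates appear, one would additionally invoke the closure results of \cref{sec:closure-properties}, being careful that only the \kl{sum action} (\cref{lem:closure-properties-wqo,lem:closure-properties-comp}) — and not the \kl{product action}, which fails to be a \kl{WQO} by \cref{lem:product-not-wqo} — is used.
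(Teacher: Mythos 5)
Your proposal is correct and takes essentially the same route as the paper: the paper's ``proof'' is the single sentence preceding the corollary, which cites the reduction of orbit-finite linear solvability to \kl{equivariant ideal membership} \cite[Theorem~68]{GHOLAS24} and then invokes \cref{thm:compute-egb} together with \cref{lem:reducts-equiv-hilbert} (equivalently \cref{cor:egb}) to decide those membership queries, exactly as you do. The one thing you flag at the end --- whether the reduction of \cite[Theorem~68]{GHOLAS24} really produces queries over $\poly{\K}{\Indets}$ with the \emph{same} action rather than over $\poly{\K}{\Indets^n}$ with some derived action --- is a legitimate sanity check that the paper silently elides, and your instinct that one must avoid the \kl{product action} pitfall of \cref{lem:product-not-wqo} is the right thing to have in mind; that said, your resolution (factoring through monomial rewriting over $\Indets$ as in \cref{cor:rev data VAS}) matches what the authors evidently intend.
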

Note that the above corollary is an extension of \cite[Theorem 6.1]{GHL22} to all nicely orderable group actions.
\section{Undecidability Results}
\label{sec:undecidability}

In this section, we aim to show that the \kl{equivariant ideal membership
problem} is undecidable under the usual \kl{computability assumptions} on the
group action, when we do not assume that $(\mon{\Indets}, \gdivleq)$ is a
\kl{well-quasi-ordering}. In particular, this would show that computing
\kl{equivariant Gröbner bases} is not possible in these settings, proving the
optimality of our decidability \cref{thm:compute-egb}. Beware that there are
some pathological cases where the \kl{equivariant ideal membership problem} is
easily decidable, even when $(\mon{\Indets}, \gdivleq)$ is not a
well-quasi-ordering, as illustrated by the following
\cref{ex:non-wqo-undecidable}, and it is not possible to obtain such a
dichotomy result without further assumptions on the group action.

\begin{example}
  \label{ex:non-wqo-undecidable}
  Let $\Indets = \{x_1, x_2, \ldots\}$ be an infinite set of indeterminates,
  and let $\group$ be trivial group acting on $\Indets$.
  Then, the \kl{equivariant ideal membership problem} is decidable.
  Indeed, since the group is trivial, whenever one provides a finite set
  $H$ of generators of an \kl{equivariant ideal} $I$, one can
  in fact work in $\poly{\K}{V}$, where $V$ is the set of indeterminates
  that appear in $H$.
  Then, the \kl{equivariant ideal membership problem} is reduces to 
  the \kl{ideal membership problem} in $\poly{\K}{V}$, which is decidable.
\end{example}

\AP However, one we are able to prove the undecidability of the \kl{equivariant
ideal membership problem} under the assumption that the set of indeterminates
$\Indets$  contains an \intro(of){infinite path} $P \defined \seqof{x_i}[i \in
\N] \subseteq \Indets$, that is, a set of indeterminates such that $(x_i,x_j)
\in P^2$ is in the same orbit as $(x_0, x_1)$ if and only if $|i - j| = 1$, for
all $i,j \in \N$. We similarly define \reintro(of){finite paths} by considering
finitely many elements. The prototypical example of a set of indeterminates
containing an \kl(of){infinite path} is $\Indets = \Z$ equipped with the group
$\group$ of all shifts. The presence of an \kl(of){infinite path} clearly
prevents $(\mon{\Indets}, \gdivleq)$ from being a \kl{well-quasi-ordering}, as
shown by the following \cref{rem:not-wqo}. Furthermore, for indeterminates
obtained by considering \kl{homogeneous structures} and their automorphism
groups (\cref{sec:act ex}), the presence of an \kl(of){infinite path} has been
conjectured to be a necessary and sufficient condition for $(\mon{\Indets},
\gdivleq)$ to be a \kl{well-quasi-ordering}: this follows from a conjecture of
Schmitz restated in \cref{conj:wqo-infinite-path}, that generalises one of
Pouzet (\cref{rem:conj-wqo-pouzet}), as explained in
\cref{rem:conj-wqo-infinite-path}.

\begin{remark}
  \label{rem:not-wqo}
  Assume that $\Indets$ contains an \kl(of){infinite path}
  $P \defined \seqof{x_i}[i \in \N]$.
  Then, the set of monomials $\setof{x_0^3 x_1^1 \cdots x_{n-1}^1 x_n^2}{n \in \N}$
  is an infinite antichain in $(\mon{\Indets}, \gdivleq)$.
  Indeed, assume that there exists $n < m$, and a group element $\gelem \in \group$ such that
  $\gelem \cdot \monelt_n \divleq \monelt_m$.
  Then, $\gelem \cdot x_0 = x_0$, because it is the only indeterminate with 
  exponent $3$ in $\monelt_m$. Furthermore, 
  $\gelem \cdot (x_0,x_1) = (x_i,x_j)$ implies that 
  $|i - j| = 1$, and since $\gelem \cdot x_0 = x_0$, we conclude
  $\gelem \cdot x_1 = x_1$. By an immediate induction, we 
  conclude that $\gelem \cdot x_i = x_i$ for all $0 \leq i \leq n$,
  but then we also have that the degree of $\gelem \cdot x_n$ is less than $2$
  in $\monelt_m$, which contradicts the fact that $\gelem \cdot \monelt_n \divleq \monelt_m$.
\end{remark}

\begin{conjecture}[Schmitz]
  \label{conj:wqo-infinite-path}
  Let $\mathcal{C}$ be a class of finite relational structures. Then, the following are
  equivalent:
  \begin{enumerate}
    \item The class of structures of $\mathcal{C}$ labelled with 
      any \kl{well-quasi-ordered} set $(Y, \leq)$ is
      itself \kl{well-quasi-ordered} under the
      labelled-induced-substructure relation.
    \item For every existential formula $\varphi(x,y)$,
      there exists $N_\varphi \in \N$, such 
      that $\varphi$ does not \kl(efo){define paths} of length greater than $N_\varphi$
      in the structures of $\mathcal{C}$.
  \end{enumerate}
  Where a formula \intro(efo){defines a path} of length $n$ in a structure
  if there exists $n$ distinct elements $a_0, \ldots, a_{n-1}$ in the structure
  such that $\varphi(a_i, a_j)$ holds if and only if $|i - j| = 1$.
\end{conjecture}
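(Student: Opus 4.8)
The two implications of \cref{conj:wqo-infinite-path} have rather different characters, and we do not resolve it here: only $(1)\Rightarrow(2)$ seems provable with current techniques (and may well be folklore), while $(2)\Rightarrow(1)$ is, we expect, genuinely open.

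\textbf{Direction $(1)\Rightarrow(2)$.} We would argue by contraposition, generalising the antichain of \cref{rem:not-wqo}. Suppose $(2)$ fails, witnessed by an existential formula $\varphi(x,y)$ and, for every $n$, a structure $A_n\in\mathcal{C}$ with distinct elements $a^n_0,\dots,a^n_{n-1}$ such that $\varphi^{A_n}(a^n_i,a^n_j)$ holds exactly when $|i-j|=1$. Fix a four-element set of labels $\{A,B,C,D\}$ made into an antichain, and let $B_n$ be $A_n$ labelled by $A$ on $a^n_0$, by $B$ on $a^n_{n-1}$, by $C$ on the remaining $a^n_i$, and by $D$ on every element outside the path. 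We claim $\{B_n\}_{n\ge 3}$ is an infinite antichain for the labelled-induced-substructure ordering, contradicting $(1)$ since the label set is finite, hence a \kl{WQO}. Indeed, any labelled embedding $h\colon B_n\hookrightarrow B_m$ with $n<m$ must preserve labels (the labels form an antichain, leaving no slack), so $h(a^n_0)=a^m_0$, $h(a^n_{n-1})=a^m_{m-1}$, and every interior vertex of $B_n$ is sent to an interior vertex $a^m_k$ with $1\le k\le m-2$. As $\varphi$ is existential it is preserved upwards by the induced-substructure embedding $h$, so each adjacency $\varphi^{A_n}(a^n_i,a^n_{i+1})$ yields $\varphi^{A_m}(h(a^n_i),h(a^n_{i+1}))$; starting from $h(a^n_0)=a^m_0$, applying the \emph{iff} in ``defines a path'' to the tuple $a^m_0,\dots,a^m_{m-1}$, and using the label constraint at each step, an induction forces $h(a^n_i)=a^m_i$ for every $i<n$; but then $\varphi^{A_n}(a^n_{n-2},a^n_{n-1})$ gives $\varphi^{A_m}(a^m_{n-2},a^m_{m-1})$, so $|(n-2)-(m-1)|=1$, i.e.\ $m=n$, a contradiction. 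The only delicate point is the direction of preservation: induced-substructure embeddings preserve existential formulas only upwards, which is exactly the direction used above, and the two halves of the biconditional in ``defines a path'' play distinct roles in the source and the target structure.

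\textbf{Direction $(2)\Rightarrow(1)$.} This is the positive \kl{WQO} statement, and as far as we can see it is open --- it is precisely the regime of Pouzet's conjecture (\cref{rem:conj-wqo-pouzet}). A plausible route: (i) reduce to the case where $(Y,\le)$ is better-quasi-ordered, or run a Nash--Williams minimal-bad-sequence argument directly on $Y$-labelled members of $\mathcal{C}$; (ii) extract from the boundedness in $(2)$ a robust structural tameness of the members of $\mathcal{C}$, morally forcing a tree-like decomposition over bounded-size pieces; (iii) feed that decomposition into a Kruskal/Higman-style induction, with the already-understood homogeneous cases as base blocks --- the multiset ordering for \cref{ex:eq atoms}, Higman's lemma \cite{HIG52} for \cref{ex:dlo}, Kruskal's tree theorem \cite{Kruskal60} for \cref{ex:dense tree}. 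The main obstacle is step (ii): it is unclear what the right notion of ``tameness'' is, or even that bounded \emph{existential}-definable path length (rather than its quantifier-free variant, which is closer to Pouzet's formulation) suffices --- the gap between the two is exactly what the construction above and \cref{rem:not-wqo} exploit. In short, $(1)\Rightarrow(2)$ is a concrete antichain construction, whereas $(2)\Rightarrow(1)$ is the open heart of the conjecture, bottlenecked by the absence of a structure theorem turning bounded definable-path-length into a WQO-friendly decomposition.
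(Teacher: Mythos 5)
This statement is a \emph{conjecture} in the paper (\cref{conj:wqo-infinite-path}), stated without proof and attributed to Schmitz; there is therefore no paper proof to compare against, and your decision not to claim a resolution is the correct stance. Your sketch of $(1)\Rightarrow(2)$ is essentially sound: with a four-element label antichain marking the two endpoints, the interior, and the complement of the $\varphi$-definable path, a label-respecting embedding $B_n\hookrightarrow B_m$ is forced to send $a^n_0$ to $a^m_0$ and $a^n_{n-1}$ to $a^m_{m-1}$, and then upward preservation of existential formulas combined with the \emph{only-if} half of ``defines a path'' in the target pins the interior vertices one by one, producing the contradiction $|(n-2)-(m-1)|=1$. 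This is the same anchoring idea the paper uses in \cref{rem:not-wqo}, where the roles of your labels $A,B,C,D$ are played by the distinguished exponents $3,2,1$ on the monomials; the two constructions are morally identical. One cosmetic imprecision: the induction forces $h(a^n_i)=a^m_i$ for $i\le n-2$ (not for all $i<n$, since $a^n_{n-1}$ carries label $B$ and lands on $a^m_{m-1}$), but your final sentence shows you are using it exactly that way. Your assessment that $(2)\Rightarrow(1)$ is the genuinely open half, and that it sits in the regime of Pouzet's conjecture, matches the paper's own framing in \cref{rem:conj-wqo-pouzet,rem:conj-wqo-infinite-path}; the speculative three-step programme you outline for it is reasonable but, as you say, bottlenecked at step (ii), and the paper offers no route past that either.
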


\begin{remark}
  \label{rem:conj-wqo-pouzet}
  The conjecture of Schmitz is a generalization of Pouzet's conjecture
  \cite{POUZ72} that states that a class $\mathcal{C}$  of finite relational structures is
  \kl{well-quasi-ordered} under the labelled induced-substructure relation for
  every \kl{well-quasi-ordered} set of labels, 
  if and only if it is the case for the set of two incomparable labels
  \cite[Problem 9]{POUZ24}. A negative answer to Pouzet's conjecture
  has been obtained in \cite{KRTH90,KS91} for finite (non-relational) structures,
  but the conjecture remains open for finite relational structures.
\end{remark}

\begin{remark}
  \label{rem:conj-wqo-infinite-path}
  Let $\Indets$ be an infinite \kl{homogeneous structure},
  such that $(\mon{\Indets}, \gdivleq)$ is not a \kl{well-quasi-ordering}.
  Then, the collection of finite substructures of $\Indets$
  labelled by $(\N,\leq)$ is not \kl{well-quasi-ordered} under the
  labelled-induced-substructure relation.
  Hence, if one believes that \cref{conj:wqo-infinite-path} holds,
  there exists an existential formula $\varphi(x,y)$ such that
  $\varphi$ defines arbitrarily long paths in $\Indets$.
  Because $\Indets$ is \kl{homogeneous},
  this means that $\varphi$ defines an infinite path in $\Indets$,
  and in particular, 
  $\Indets$ contains an \kl(of){infinite path} $P$, as introduced
  for generic sets of indeterminates.
\end{remark}

As already mentioned in \cref{rem:conj-wqo-infinite-path}, it is conjectured
that the presence of an \kl(of){infinite path} is a necessary condition for the
\kl{equivariant ideal membership problem} to be undecidable in the case of
\kl{homogeneous structures} over relational signatures. Let us briefly argue
that in the case of \kl{homogeneous} \emph{$3$-graphs} $\mathcal{G}_3$ (i.e. a
structure with three distinct edge relations), the \emph{WQO dichotomy theorem}
\cite[Theorem 4]{LAPIO20}, exactly states that: either
$(\mon[Y]{\mathcal{G}_3}, \gdivleq)$ is a \kl{well-quasi-ordering} for all
\kl{well-quasi-ordered} sets $Y$, or there exists an \kl(of){infinite path} in
$\mathcal{G}_3$. We conclude that for \kl{homogeneous} \emph{$3$-graphs},
either the \kl{equivariant ideal membership problem} is undecidable
(\cref{thm:undecidable-paths}), or our \cref{thm:compute-egb} can be applied to
compute \kl{equivariant Gröbner bases}.

\paragraph{Monomial Reachability}
The undecidability results we will present in this section regarding the
\kl{equivariant ideal membership problem} will use the polynomials in a very
limited way: we will only need to consider \emph{monomials}, and there will
even be a bound on the maximal exponent used. Before going into the details of
our reductions, let us first introduce an intermediate problem that will be
easier to work with: the (equivariant) \kl{monomial reachability problem}. 

\begin{definition}
  \label{def:mon-rewrite-system}
  A \intro{monomial rewrite system} is a finite set of pairs of the form
  $\set{\monelt, \monelt'}$ where $\monelt, \monelt' \in \mon{\Indets}$.
  The \intro{monomial reachability problem} is the problem of deciding whether
  there exists a sequence of rewrites that transforms $\monelt_s$ into $\monelt_t$
  using the rules of a monomial rewrite system $R$, where
  a \intro(monrew){rewrite step} is a pair of the form
  \begin{equation*}
    \monelt[n] (\gelem \cdot \monelt)
    \leftrightarrow_R 
    \monelt[n] (\gelem \cdot \monelt')
    \text{ if } \set{\monelt, \monelt'} \in R
    \text{ and } \gelem \in \group
    \quad .
  \end{equation*}
\end{definition}

\begin{example}
  \label{ex:mon-rewrite-system}
  Let $\Indets = \N$ and $\group$ be the set of all bijections of $\Indets$.
  Then, the rewrite system $x_1^2 x_2^2 \leftrightarrow_R x_1^2$
  satisfies $\monelt \leftrightarrow_R^* x_1^2$ if and only if 
  $\monelt$ has all its exponents that are multiple of $2$.
\end{example}

The following \cref{lem:mon-rewrite-red-membership} shows that the \kl{monomial
reachability problem} can be reduced to the \kl{equivariant ideal membership
problem}, and follows the exact same reasoning as in the case of finitely many
indeterminates \cite{MAME82}. This reduction was also noticed in \cite[Theorem
64]{GHOLAS24}.

\begin{lemma}[label=lem:mon-rewrite-red-membership,restate=lem:mon-rewrite-red-membership]
  \AP
  One can solve the \kl{monomial reachability problem}
  provided that one can solve the \kl{equivariant ideal membership problem}.
  \proofref{lem:mon-rewrite-red-membership}
\end{lemma}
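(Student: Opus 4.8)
The plan is to reduce the \kl{monomial reachability problem} to the \kl{equivariant ideal membership problem} via the classical passage through a binomial ideal, as in \cite{MAME82} and as already observed in \cite[Theorem 64]{GHOLAS24}, adapted to the equivariant setting. Fix a \kl{monomial rewrite system} $R$ together with monomials $\monelt_s, \monelt_t \in \mon{\Indets}$. Choosing an arbitrary orientation of each unordered pair, put
\[
  H \defined \setof{\monelt - \monelt'}{\set{\monelt,\monelt'} \in R} \subseteq \poly{\K}{\Indets},
\]
which is a finite, hence \kl{orbit finite}, set of polynomials, so it is a legitimate input to the \kl{equivariant ideal membership problem}. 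I would prove that $\monelt_s \leftrightarrow_R^* \monelt_t$ if and only if $\monelt_s - \monelt_t \in \EqIdlGen{H}$; since $\EqIdlGen{H}$ does not depend on the chosen orientations, this equivalence yields the reduction, as one may decide reachability by a single membership query.

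For the implication from left to right, I would fix a rewriting witness $\monelt_s = \monelt[p]_0 \leftrightarrow_R \monelt[p]_1 \leftrightarrow_R \cdots \leftrightarrow_R \monelt[p]_k = \monelt_t$. Each step may be written (up to swapping the two members of a pair) as $\monelt[p]_i = \monelt[n]_i(\gelem_i \cdot \monelt[a]_i)$ and $\monelt[p]_{i+1} = \monelt[n]_i(\gelem_i \cdot \monelt[b]_i)$ for some $\set{\monelt[a]_i, \monelt[b]_i} \in R$, $\monelt[n]_i \in \mon{\Indets}$, and $\gelem_i \in \group$, so that $\monelt[p]_i - \monelt[p]_{i+1} = \monelt[n]_i \cdot \bigl(\gelem_i \cdot (\monelt[a]_i - \monelt[b]_i)\bigr)$. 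Since $\EqIdlGen{H}$ is \kl{equivariant} it contains $\gelem_i \cdot (\monelt[a]_i - \monelt[b]_i)$, and being an \kl{ideal} it is closed under multiplication by the monomial $\monelt[n]_i$; hence each $\monelt[p]_i - \monelt[p]_{i+1}$ lies in $\EqIdlGen{H}$, and summing telescopically gives $\monelt_s - \monelt_t = \sum_{i<k}(\monelt[p]_i - \monelt[p]_{i+1}) \in \EqIdlGen{H}$.

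For the converse, I would introduce the relation $\sim$ on $\mon{\Indets}$ defined by $\monelt[p] \sim \monelt[q] \iff \monelt[p] \leftrightarrow_R^* \monelt[q]$. A routine check shows $\sim$ is a congruence of the multiplicative monoid $\mon{\Indets}$ — if $\monelt[p] \leftrightarrow_R \monelt[q]$ then $\monelt[n]\monelt[p] \leftrightarrow_R \monelt[n]\monelt[q]$ for every $\monelt[n]$, and the two-sided congruence property follows by the usual two-step argument — and $\sim$ is preserved by the action of $\group$, since applying $\gelem \in \group$ to a \kl(monrew){rewrite step} yields another rewrite step. Hence the quotient monoid $M \defined \mon{\Indets}/{\sim}$ carries a $\group$-action, and the canonical monoid surjection $\mon{\Indets} \tosurj M$ extends to a surjective, $\group$-equivariant $\K$-algebra homomorphism $q \colon \poly{\K}{\Indets} \to \K[M]$ onto the monoid algebra of $M$. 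Then $\ker q$ is an \kl{ideal}, it is \kl{equivariant} because $q$ is, and it contains every generator $\monelt - \monelt' \in H$ (since $\monelt \leftrightarrow_R \monelt'$, so $\monelt \sim \monelt'$), whence $\EqIdlGen{H} \subseteq \ker q$ by minimality of the equivariant ideal generated by $H$. Therefore $\monelt_s - \monelt_t \in \EqIdlGen{H}$ forces $q(\monelt_s) = q(\monelt_t)$ in $\K[M]$, and as the elements of $M$ form a $\K$-basis of $\K[M]$, this means $\monelt_s$ and $\monelt_t$ are $\sim$-equivalent, i.e. $\monelt_s \leftrightarrow_R^* \monelt_t$.

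The step requiring the most care is the converse direction: all the real content is in setting up this quotient-monoid bookkeeping correctly — verifying that $\sim$ is a $\group$-invariant monoid congruence and that $\ker q$ is an \kl{equivariant ideal} containing $H$, so that the minimality clause in the definition of $\EqIdlGen{H}$ can be invoked — whereas the forward direction and the observation that $H$ is \kl{orbit finite} are immediate. No deeper obstacle is expected, as this is essentially the equivariant version of the standard binomial-ideal correspondence.
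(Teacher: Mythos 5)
Your proof is correct and follows precisely the classical binomial-ideal reduction that the paper attributes to \cite{MAME82} and \cite[Theorem 64]{GHOLAS24}: encode each rewrite rule as a binomial, show the forward direction by a telescoping sum, and show the converse by passing to the monoid algebra $\K[M]$ of the quotient monoid $M = \mon{\Indets}/{\sim}$. Both the verification that $\sim$ is a $\group$-invariant monoid congruence and the appeal to the minimality of $\EqIdlGen{H}$ inside $\ker q$ are the standard steps, and you carry them out correctly; this is essentially the same approach as the paper's.
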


In order to show that the \kl{equivariant ideal membership problem} is
undecidable, it is therefore enough to show that the \kl{monomial reachability
problem} is undecidable. To that end, we will encode the Halting problem of a
Turing machine. There are two main obstacles to overcome: first, the
reversibility of the rewriting system, which can be (partially) solved by
considering a \emph{reversible version} of a \emph{deterministic} Turing
machines, as explained in \cite[Simulation by bidirected systems, p.
15]{GAMAPASCZE22}; and second, the fact that the configurations of the Turing
machine cannot staightforwardly be encoded as monomials due to the
commutativity of the multiplication.

\paragraph{Structures Containing Paths.} \AP Let us assume for the rest of this
section that $\Indets$ is a set of indeterminates that contains an
\kl(of){infinite path}, let us fix a binary alphabet $\Sigma \defined
\set{a,b}$. Given a \kl(of){finite path} $P \defined \seqof{x_i}[0 \leq i <
4n]$, we define a function $\intro*\wenc{ \cdot}_P \colon \Sigma^{\leq n} \to
\mon{\Indets}$, where $\Sigma$ is a finite alphabet, that \intro[word encoding]{encodes a word} $u \in
\Sigma^{\leq n}$ as a monomial. Namely, we define inductively
$\wenc{\varepsilon} \defined 1$, $\wenc{a u}_P = x_0^4 x_1^2 x_2^1 x_3^3
(\mathsf{shift}_{+4} \cdot \wenc{u}_P)$ and $\wenc{b u}_P = x_0^4 x_1^1 x_2^2
x_3^3 (\mathsf{shift}_{+4} \cdot \wenc{u}_P)$ for all $u \in \Sigma^*$, where
$\mathsf{shift}_{+k}$ acts on $P$ by shifting the indices by
$k$.\footnote{There may be no element $\gelem \in \group$ that acts like
$\mathsf{shift}_{+1}$, we only use it as a function.} Let us remark that
\kl{monomial rewriting} applied on \kl{word encodings} can simulate
(reversible) string rewriting on words of a given size.

\begin{lemma}
  \label{lem:word-encoding-string-subst}
  Let $P,Q$ be two \kl(of){finite paths} in $\Indets$,
  such that $(p_0,p_1)$ is in the same orbit as 
  $(q_0,q_1)$.
  Let $u,v,w \in \Sigma^*$ be three words, such that $|u| = |v| \leq |w|$,
  and let $\monelt[n] \in \mon{\Indets}$ be a monomial.
  Assume that there exists $\gelem \in \group$
      such that $\wenc{w}_P = \monelt[m] (\gelem \cdot \wenc{u}_Q)$,
       $\monelt[n] = \monelt[m] (\gelem \cdot \wenc{v}_Q)$,
  and that $\wenc{w}_P$, $\wenc{u}_Q$ and $\wenc{v}_Q$
  are well-defined.
  Then,
      there exists $x, y \in \Sigma^*$
      such that $x u y = w$ and $\wenc{x v y}_P = \monelt[n]$.
\end{lemma}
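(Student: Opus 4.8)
The plan is to exploit the \emph{rigidity} of the encoding $\wenc{\cdot}$: the only way a $\group$‑translate of $\wenc{u}_Q$ can be a divisor of $\wenc{w}_P$ — which is exactly what the hypothesis $\wenc{w}_P=\monelt[m](\gelem\cdot\wenc{u}_Q)$ says, since $\monelt[m]$ has non‑negative exponents, so pointwise $(\gelem\cdot\wenc{u}_Q)(z)\le\wenc{w}_P(z)$ — is for $\gelem$ to carry the length‑$4|u|$ prefix of $Q$ isometrically onto a contiguous, orientation‑preserving, block‑aligned segment of $P$. Once this is established, the conclusion follows by a direct computation with monomials.

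First I would dispose of the degenerate case: if $u=\varepsilon$ then $|v|=0$ forces $v=\varepsilon$ and $\wenc{u}_Q=\wenc{v}_Q=1$, so $\monelt[m]=\wenc{w}_P=\monelt[n]$ and we take $x\defined w$, $y\defined\varepsilon$. So assume $j\defined|u|=|v|\ge 1$, and write $\ell\defined|w|$, $P=(p_i)_i$, $Q=(q_i)_i$. Viewing monomials as finitely supported functions $\Indets\to\N$ and unfolding the recursion for $\wenc{\cdot}$, the monomial $\wenc{w}_P$ is supported exactly on $\set{p_0,\dots,p_{4\ell-1}}$: it takes value $4$ on each block‑start $p_{4k}$, value $3$ on each block‑end $p_{4k+3}$, and on the two payload positions $p_{4k+1},p_{4k+2}$ of block $k$ it takes values $(2,1)$ if $w_{k+1}=a$ and $(1,2)$ if $w_{k+1}=b$; the same shape holds for $\wenc{u}_Q$ on $Q$. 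In particular every $\gelem\cdot q_i$ with $i<4j$ lies in $\mathrm{supp}(\wenc{w}_P)=\set{p_0,\dots,p_{4\ell-1}}$.

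The heart of the proof is the rigidity claim: there is an offset $k_0$ with $\gelem\cdot q_i=p_{4k_0+i}$ for all $0\le i\le 4j-1$. Since $q_0$ carries the maximal exponent $4$ in $\wenc{u}_Q$, the inequality $\wenc{w}_P(\gelem\cdot q_0)\ge 4$ forces $\gelem\cdot q_0=p_{4k_0}$ for some $k_0$. Because $P$ and $Q$ are paths and $(p_0,p_1)$, $(q_0,q_1)$ lie in the same orbit, each pair $(\gelem\cdot q_i,\gelem\cdot q_{i+1})$ lies in the orbit of $(p_0,p_1)$, hence equals $(p_a,p_b)$ with $\lvert a-b\rvert=1$. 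Thus $\gelem\cdot q_1\in\set{p_{4k_0-1},p_{4k_0+1}}$; the alternative $p_{4k_0-1}$ would, on iterating path‑adjacency and injectivity of $\gelem$, force $\gelem\cdot q_3=p_{4k_0-3}$, a payload position carrying value $\le 2$ in $\wenc{w}_P$, contradicting $\wenc{u}_Q(q_3)=3$. Hence $\gelem\cdot q_1=p_{4k_0+1}$, and an induction on $i$ using the same ingredients yields $\gelem\cdot q_i=p_{4k_0+i}$ for all $i\le 4j-1$; in particular $k_0+j\le\ell$. Comparing the payload exponents of blocks $k_0,\dots,k_0+j-1$ — where $\gelem\cdot\wenc{u}_Q$ exhibits the pattern of $u_1,\dots,u_j$ while $\wenc{w}_P$ exhibits that of $w_{k_0+1},\dots,w_{k_0+j}$ — the divisibility inequality forces $w_{k_0+t}=u_t$ for $1\le t\le j$. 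So $w=xuy$ with $x\defined w_1\cdots w_{k_0}$ and $y\defined w_{k_0+j+1}\cdots w_\ell$, and $\wenc{xvy}_P$ is well‑defined since $\lvert xvy\rvert=\lvert w\rvert$ and $\wenc{w}_P$ is.

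I would finish by computing. From $\gelem\cdot q_i=p_{4k_0+i}$ for $i<4j$ one reads off $\gelem\cdot\wenc{u}_Q=\mathsf{shift}_{+4k_0}\cdot\wenc{u}_P$ and $\gelem\cdot\wenc{v}_Q=\mathsf{shift}_{+4k_0}\cdot\wenc{v}_P$. Since $w=xuy$, the defining recursion gives the disjoint‑support factorization $\wenc{w}_P=\wenc{x}_P\cdot(\mathsf{shift}_{+4k_0}\cdot\wenc{u}_P)\cdot(\mathsf{shift}_{+4(k_0+j)}\cdot\wenc{y}_P)$, whose middle factor is exactly $\gelem\cdot\wenc{u}_Q$; cancelling it (the monomial monoid is cancellative) gives $\monelt[m]=\wenc{x}_P\cdot(\mathsf{shift}_{+4(k_0+j)}\cdot\wenc{y}_P)$. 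Therefore $\monelt[n]=\monelt[m](\gelem\cdot\wenc{v}_Q)=\wenc{x}_P\cdot(\mathsf{shift}_{+4k_0}\cdot\wenc{v}_P)\cdot(\mathsf{shift}_{+4(k_0+j)}\cdot\wenc{y}_P)=\wenc{xvy}_P$, as required. I expect the rigidity step to be the only real difficulty: it is precisely where the asymmetric exponent pattern $(4,2,1,3)$ versus $(4,1,2,3)$ earns its keep — the exponent $4$ aligns block‑starts with block‑starts, the gap between the payload values ($\le 2$) and the block‑end value ($3$) rules out the orientation‑reversing embedding, and the path structure together with the same‑orbit hypothesis on $(p_0,p_1)$ and $(q_0,q_1)$ forces the image of $Q$'s prefix to be a contiguous, order‑preserving segment of $P$.
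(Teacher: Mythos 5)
Your proof is correct and follows essentially the same strategy as the paper's: anchor $\gelem\cdot q_0$ to a block-start using the exponent $4$, rule out the orientation-reversing embedding via the exponent-$3$ block-end versus the $\le 2$ payload, extend by induction along the path using adjacency and injectivity, and conclude by cancelling the aligned shifted factors. You fill in two steps the paper leaves terse — the divisibility argument forcing $w_{k_0+t}=u_t$ letter by letter, and the explicit cancellative factorization at the end — but the underlying argument is the same.
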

\begin{proof}
  Let us write $\gelem \cdot q_0 = p_k$ for some $k \in \N$.
  Because the only indeterminates with degree $4$ in $\wenc{w}_P$ are
  the ones of the form $p_{4i}$, we have that $k$ is a multiple of $4$
  (i.e. at the start of a letter block).
  Since $(q_0, q_1)$ is in the same orbit as $(p_0, p_1)$,
  and both $P$ and $Q$ are \kl(of){finite paths},
  we conclude that $\gelem \cdot (q_0, q_1) = (p_{4i}, p_{4i+1})$
  or $\gelem \cdot (q_0, q_1) = (p_{4i+1}, p_{4i-1})$.
  Applying the same reasoning, thrice, 
  we have either $\gelem \cdot (q_0, q_1, q_2, q_3) = (p_{4i}, p_{4i+1}, p_{4i+2}, p_{4i+3})$
  or $\gelem \cdot (q_0, q_1, q_2, q_3) = (p_{4i}, p_{4i-1}, p_{4i-2}, p_{4i-3})$.
  However, in the second case, the exponent of $p_{4i-3}$ in $\wenc{w}_P$ is at most $2$,
  which is incompatible with the fact that the one of $q_3$ in $\wenc{u}_Q$ is $3$.
  By induction on the length of $u$, we immediately obtain that 
  $\gelem \cdot \wenc{u}_Q = \mathsf{shift}_{+4i} \cdot \wenc{u}_P$ and
  therefore that 
  $w = x u y$ for some $x,y \in \Sigma^*$.
  Finally, because $\wenc{v}_Q$ uses exactly the same indeterminates as 
  $\wenc{u}_Q$, we can also conclude that
  $\wenc{xvy}_P = \monelt[n]$.
\end{proof}

\Cref{lem:word-encoding-string-subst} shows that all encodings
using \kl(of){finite paths} with the same initial orbit are compatible with
each other for the purpose of \kl{monomial rewriting}. Let us now assume that
the alphabet is any finite set of letters, using a suitable unambiguous
encoding of the alphabet in binary \cite{BERST09}. This bigger alphabet size
will simplify the statement and proof of the following
\cref{lem:reversible-machine}, which explains how to simulate a
reversible Turing machine using \kl{monomial rewriting}. Given a reversible
Turing machine $M$ with a finite set $Q$ of states and tape alphabet $\Sigma$,
we will consider the following alphabet $\Gamma \defined \set{ \triangleleft,
  \triangleright } \times \set{ \text{pre}, \text{run}, \text{post} } \uplus Q
  \uplus \Sigma \uplus \set{ \square, \square_1, \square_2}$. The letter
  $\square$ is a blank symbol, and the letters $\triangleleft$ and
  $\triangleright$ are used to delimit the beginning and the end of the tape,
  with some extra ``phase information''. In a first \kl{monomial rewrite
  system}, we will encode a run of a reversible Turing machine $M$ on a fixed
  size input tape (\cref{lem:reversible-machine}), and in a second
  \kl{monomial rewrite system}, we will create a tape of arbitrary size
  (\cref{lem:tape-creation}). The union of these two \kl{monomial
  rewrite systems} will then be used to prove the undecidability of the
  \kl{equivariant ideal membership problem} in \cref{thm:undecidable-paths}.

\begin{lemma}
  \label{lem:reversible-machine}
  Let us fix $(x_0, x_1)$ a pair of indeterminates.
  There exists a
  \kl{monomial rewrite system} $R_M$ such that the following
  are equivalent for every $n \geq 1$,
  and for any \kl(of){finite path} $P$ of length $4(n+2)$ 
  such that $(p_0, p_1)$ is in the same orbit as $(x_0, x_1)$:
  \begin{enumerate}
    \item $\wenc{ \triangleright^{\text{run}} q_0 \square^{n-1}
                  \triangleleft^{\text{run}}
     }_P \leftrightarrow_{R_M}^* 
     \wenc{ \triangleright^{\text{run}} q_f \square^{n-1}
                  \triangleleft^{\text{run}} }_P$,
      \item $M$ halts on the empty word using a tape bounded by $n-1$ cells.
  \end{enumerate}
  Furthermore, every monomial that is 
  reachable from $\wenc{ \triangleright^{\text{run}} q_0 \square^{n-1} \triangleleft^{\text{pre}} }_P$
  or $\wenc{ \triangleright^{\text{run}} q_f \square^{n-1} \triangleleft^{\text{run}} }_P$
  is the image of a word of the form
  $\wenc{\triangleright^{\text{run}} u \triangleleft^{\text{run}}}_P$  
  where $u \in (Q \uplus \Sigma \uplus \square)^n$.
\end{lemma}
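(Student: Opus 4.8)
The plan is to simulate a reversible, deterministic Turing machine $M$, which exists by the bidirectional simulation recalled in \cite{GAMAPASCZE22} and may moreover be taken so that upon halting it restores a blank work tape and enters a distinguished state $q_f$. A configuration of $M$ fitting on $n-1$ tape cells is encoded as the word $\triangleright^{\text{run}}\, u\, \triangleleft^{\text{run}}$ of length $n+2$ over $\Gamma$, where $u \in (Q \uplus \Sigma \uplus \square)^{n}$ has exactly one letter in $Q$ (recording the state and head position) and all other letters in $\Sigma \uplus \set{\square}$ (the tape contents), the two delimiters marking the ends of the tape. Each transition of $M$ rewrites a window of two or three consecutive letters of such a word; by \cref{lem:word-encoding-string-subst}, along any \kl(of){finite path} $P$ of length $4(n+2)$ with $(p_0,p_1)$ in the orbit of $(x_0,x_1)$, the pair of monomials obtained as the $\wenc{\cdot}_P$-images of the two windows (shifts included) is a legal rule of a \kl{monomial rewrite system} $R_M$, and applying it to $\wenc{\triangleright^{\text{run}} u \triangleleft^{\text{run}}}_P$ produces exactly $\wenc{\triangleright^{\text{run}} u' \triangleleft^{\text{run}}}_P$ for the successor configuration $u'$. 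I would also put in $R_M$ the single rule swapping $\triangleleft^{\text{pre}}$ for $\triangleleft^{\text{run}}$, which starts the simulation after the tape-creation phase. Since the encoded word has fixed length, a transition of $M$ that would move the head past $\triangleleft^{\text{run}}$ has no matching window, so $R_M$ enforces the space bound of $n-1$ cells automatically.

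The direction $(2) \Rightarrow (1)$ is then immediate: the space-bounded run of $M$ on the empty word is a finite chain $c_0 \leadsto c_1 \leadsto \cdots \leadsto c_k$ with $c_0 = \triangleright^{\text{run}} q_0 \square^{n-1} \triangleleft^{\text{run}}$ and $c_k = \triangleright^{\text{run}} q_f \square^{n-1} \triangleleft^{\text{run}}$, each step realised by a rule of $R_M$, so $\wenc{c_0}_P \leftrightarrow_{R_M}^* \wenc{c_k}_P$. For $(1) \Rightarrow (2)$ together with the ``Furthermore'' clause, I would argue by induction on the length of a rewrite sequence that every monomial reachable from $\wenc{\triangleright^{\text{run}} q_0 \square^{n-1} \triangleleft^{\text{pre}}}_P$ or from $\wenc{\triangleright^{\text{run}} q_f \square^{n-1} \triangleleft^{\text{run}}}_P$ is either that start monomial itself or of the form $\wenc{\triangleright^{\text{run}} u \triangleleft^{\text{run}}}_P$ with $u \in (Q \uplus \Sigma \uplus \square)^n$. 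The point is that the exponent fingerprint of $\wenc{\cdot}_P$ — the degree-$4$ markers at the start of every letter block, together with the distinctive exponent profiles of $\triangleright$ and $\triangleleft$ — rigidly determines the path and the alignment of any applicable rule, so by \cref{lem:word-encoding-string-subst} the rewrite must act as a local word rewrite of a valid configuration, and local word rewrites of valid configurations are valid configurations. This is the ``Furthermore'' clause, and it shows that $\leftrightarrow_{R_M}$, restricted to reachable monomials, is the $\wenc{\cdot}_P$-image of the symmetric closure of the single-step transition relation of $M$ on configurations of at most $n-1$ cells. Since $M$ and its reverse are both deterministic (the step relation is a partial injection) and $c_0$ has no $M$-predecessor, the connected component of $\wenc{c_0}_P$ in this relation is a simple path, namely the forward orbit of $c_0$; hence $\wenc{c_0}_P \leftrightarrow_{R_M}^* \wenc{c_k}_P$ holds iff $M$, run forward from $c_0$, reaches $c_k$, i.e.\ iff $M$ halts within $n-1$ cells.

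The main obstacle is this rigidity step: one must rule out that some rule of $R_M$, applied with an unfortunate group element or an unfortunate multiplying monomial $\monelt[m]$, produces a monomial outside the intended family — by misaligning blocks, by letting the image of a window overlap the padding monomial, or because several transition windows of $M$ could a priori match overlapping factors. The exponent profiles of the delimiters and of the letter blocks are designed precisely so that \cref{lem:word-encoding-string-subst} applies and pins everything down, but verifying this requires a careful case analysis of how a window-monomial can embed into a configuration-monomial, together with the observation that the reverse transition table of a reversible $M$ is still deterministic, so that no two distinct configurations rewrite to the same monomial.
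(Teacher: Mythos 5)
Your proposal is correct and follows essentially the same approach as the paper: encode the reversible machine's length-preserving string-rewriting rules, lift them via the word encoding over all finite paths of the appropriate length, and invoke \cref{lem:word-encoding-string-subst} to guarantee that the monomial rewrites can do nothing beyond legitimate string substitutions. The one small deviation is your extra rule swapping $\triangleleft^{\text{pre}}$ for $\triangleleft^{\text{run}}$, which in the paper belongs to $R_\text{pre}$ (rule (3) of \cref{lem:tape-creation}) rather than to $R_M$ — the $\triangleleft^{\text{pre}}$ appearing in the ``Furthermore'' clause is most plausibly a typo for $\triangleleft^{\text{run}}$, which your ``or the starting monomial itself'' caveat quietly corrects — and your reversibility/simple-path argument for $(1)\Rightarrow(2)$ usefully spells out what the paper's terse proof leaves implicit.
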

\begin{proof}
  Transitions of the deterministic reversible Turing machine using bounded tape size can be 
  modelled as a reversible string rewriting system using finitely many rules 
  of the form $u \leftrightarrow v$, where $u$ and $v$ are words
  over $(Q \uplus \Sigma \uplus \square)$ having the same length $\ell$
  For each rule $u \leftrightarrow v$, we create rules 
  $\wenc{u}_P \leftrightarrow_{R_M} \wenc{v}_P$ 
  for every \kl(of){finite path} $P$ of length $4l$.
  Note that there are only orbit finitely many such \kl(of){finite paths} $P$,
  and one can effectively list some representatives,
  because $\Indets$ is \kl{effectively oligomorphic}.
  This system is clearly complete, in the sense that one can perform a substitution
  by applying a monomial rewriting rule, but \cref{lem:word-encoding-string-subst}
  also tells us it is correct, in the sense that it cannot perform anything else
  than string substitutions.
  Furthermore, we can assume 
  that the reversible Turing machine
  starts with a clean tape and ends with a clean tape.
\end{proof}

\Cref{lem:reversible-machine} shows that one can simulate the
runs, provided we know in advance the maximal size of the tape used by the
reversible Turing machine. The key ingredient that remains to be explained is
how one can start from a finite monomial $\monelt$ and create a tape of
arbitrary size using a \kl{monomial rewrite system}. The difficulty is that we
will not be able to ensure that we follow one specific \kl(of){finite path}
when creating the tape.

\begin{lemma}
  \label{lem:tape-creation}
  Let $(x_0, x_1)$ be a pair of indeterminates, $P$ be a \kl(of){finite path}
  such that $(p_0, p_1)$ is in the same orbit as $(x_0, x_1)$.
  There exists a \kl{monomial rewrite system} $R_\text{pre}$
  such that for every monomial $\monelt \in \mon{\Indets}$, the following are
  equivalent:
  \begin{enumerate}
    \item $\wenc{ \triangleright^{\text{pre}} \square \square_1 \square_2 \triangleleft^{\text{pre}}}_P
      \leftrightarrow_{R_\text{pre}}^* 
      \monelt$
      and $\wenc{\triangleright^{\text{run}}}_{P'}
      \gdivleq \monelt$ for some \kl(of){finite path} $P'$ such that
      $(p_0', p_1')$ is in the same orbit as $(x_0, x_1)$.
    \item There exists $n \geq 2$ and a \kl(of){finite path} $P'$ such that 
      $(p_0', p_1')$ is in the same orbit as $(x_0, x_1)$,
      and 
      $\monelt = \wenc{ \triangleright^{\text{run}} q_0 \square^{n}
      \triangleleft^{\text{run}} }_{P'}$.
  \end{enumerate}
  Similarly, there exists a \kl{monomial rewrite system} $R_\text{post}$
  with analogue properties using $q_f$ instead of $q_0$.
\end{lemma}
\begin{proof}
  We create the following rules,
  where $P_1$ and $P_2$ range over \kl(of){finite paths} such that
  their first two elements are in the same orbit as $(x_0, x_1)$,
  and assuming that the indeterminates of $P_1$ and $P_2$ are disjoint:
  \begin{enumerate}
    \item Cell creation: 
      $\wenc{\triangleright^{\text{pre}} \square}_{P_1}
        \wenc{ \square_1 \square_2 \triangleleft^{\text{pre}}}_{P_2}
      \leftrightarrow_{R_\text{pre}}
      \wenc{\triangleright^{\text{pre}} \square_1}_{P_1}
      \wenc{ \square \square \square_2 \triangleleft^{\text{pre}}}_{P_2}$
    \item Linearity checking:
      $\wenc{\square_1 \square}_{P_1} \wenc{\square_2 \triangleleft^{\text{pre}}}_{P_2}
      \leftrightarrow_{R_\text{pre}}
      \wenc{\square \square_1}_{P_1} \wenc{\square_2 \triangleleft^{\text{pre}}}_{P_2}$
    \item Phase transition:
      $\wenc{\triangleright^{\text{pre}} \square}_{P_1}
       \wenc{\square_1 \square_2 \triangleleft^{\text{pre}}}_{P_2}
      \leftrightarrow_{R_\text{pre}}
       \wenc{\triangleright^{\text{run}} q_0}_{P_1}
       \wenc{\square \square \triangleleft^{\text{run}}}_{P_2}$
  \end{enumerate}
  Note that there are only orbit finitely many such pairs of monomials,
  and that we can enumerate representative of these orbits because 
  $\Indets$ is \kl{effectively oligomorphic}.

  Let us first argue that this system is complete. Because there exists an
  infinite path $P_{\infty}$, it is indeed possible to reach
  $\wenc{\triangleright^{\text{run}} q_0 \square^n
  \triangleleft^{\text{run}}}_{P_\infty}$ by repeatedly applying the first
  rule, and then the second rule until $\square_1$ reaches the end of the tape,
  and continuing so until one decides to apply the third rule to reach the
  desired tape configuration.

  We now claim that the system is correct, in the sense that it can only reach
  valid tape encodings. First, let us observe that in a rewrite sequence, one
  can always assume that the rewriting takes the form of applying the first
  rule, then the second rule until one cannot apply it anymore, and repeating
  this process until one applies the third rule. Because rule (2) ensures that
  when we add new indeterminates using rule (1), they were not already present
  in the monomial, and because rule (1) ensures that locally the structure of
  the indeterminates remains a \kl(of){finite path}, we can conclude that the
  whole set of indeterminates used come from a \kl(of){finite path} $P'$. As a
  consequence, if one can reach a state where (1) or (3) are applicable, then
  the tape is of the form $\wenc{ \triangleright^{\text{pre}} \square^n
  \square_1 \square_2 \triangleleft^{\text{pre}} }_{P'}$, with $n \geq 1$. It
  follows that when one can apply rule (3), the monomial obtained is of the
  form $\wenc{ \triangleright^{\text{run}} q_0 \square^n
  \triangleleft^{\text{run}} }_{P'}$, where $P'$ is a \kl(of){finite path} such
  that $(p_0', p_1')$ is in the same orbit as $(x_0, x_1)$. 
\end{proof}

\csname thm:undecidable-paths\endcsname*
\begin{proof}
  It suffices to combine the rewriting systems $R_M$, $R_\text{pre}$ and 
  $R_\text{post}$ by taking their union.
\end{proof}

\begin{remark}
  \label{rem:multiple-orbits}
  The undecidability result of \cref{thm:undecidable-paths} can be generalised
  to a \emph{relaxed} notion of \kl(of){infinite path}. Given finitely
  many orbits $\mathcal{O}_1, \ldots, \mathcal{O}_k$ of pairs of
  indeterminates, a \emph{relaxed path} is a set of indeterminates 
  such that $(x_i, x_j)$ is belongs to one of the orbits $\mathcal{O}_k$
  if and only if $|i - j| = 1$ for all $i,j \in \N$.
\end{remark}

\begin{remark}
  \label{rem:indeterminates-infinite-path}
  Given an \kl{oligomorphic} set of indeterminates $\Indets$,
  it is equivalent to say that $\Indets$ contains an \kl(of){infinite path}
  or to say that it contains \kl(of){finite paths} of arbitrary length.
\end{remark}
\begin{proof}
  Assume that there are arbitrarily long finite paths in $\Indets$.
  Then, one can create an infinite tree whose nodes 
  are representatives of (distinct) orbits of finite paths, whose root is the empty path, and 
  where the ancestor relation is obtained by projecting on a subset of
  indeterminates.
  Because $\Indets$ is \kl{oligomorphic}, there are finitely many 
  nodes at each depth in the tree (i.e. at each length of the finite path).
  Hence, there exists an infinite branch in the tree due to König's lemma,
  and this branch is a witness for the existence of an \kl(of){infinite path}
  in $\Indets$.
\end{proof}

\begin{example}
  \label{ex:rado-graph-path}
  The \kl{Rado graph}, as introduced in \cref{ex:rado},
  contains an \kl(of){infinite path} $P$. Indeed, the Rado graph
  contains every finite graph as an induced subgraph, and in particular,
  it contains arbitrarily long finite paths. 
  As a consequence of \cref{thm:undecidable-paths},
  which applies thanks to \cref{rem:indeterminates-infinite-path},
  we conclude that the \kl{equivariant ideal membership problem}
  is undecidable for the Rado graph.
\end{example}

\begin{example}
  \label{ex:product-indets}
  Let $\Indets$ be an \kl{oligomorphic} infinite set of indeterminates.
  Then $\Indets \times \Indets$ contains a (generalised) \kl(of){infinite path}
  as defined in \cref{rem:multiple-orbits}.
\end{example}
\begin{proof}
  Let $\seqof{x_i}[i \in \N]$
  and $\seqof{y_i}[i \in \N]$ be two infinite sets of distinct indeterminates
  in $\Indets$.
  Let us define $P \defined (x_0, y_0), (x_1, y_0), (x_1, y_1), (x_2, y_1), \ldots$.
  The orbits of pairs that define the successor relation 
  are the orbits of $((x_i, y_j), (x_k, y_l))$,
  where $x_i = x_k$ and $y_j \neq y_l$, or where $x_i \neq x_k$ and $y_j = y_l$.
  Because $\Indets$ is \kl{oligomorphic}, there are finitely many such orbits.
  Let us sketch the fact that this defines a generalised path.
  Consider that
  $((x_i, y_j), (x_k, y_l))$ is in the same orbit as $((x_0, y_0), (x_1, y_0))$,
  then there exists $\gelem \in \group$ such that
  $\gelem \cdot (x_i, y_j) = (x_0, y_0)$ and $\gelem \cdot (x_k, y_l) = (x_1, y_0)$,
  but then $\gelem \cdot y_j = \gelem \cdot y_l = y_0$, and because $\gelem$ is 
  invertible, $y_j = y_l$. Similarly, we conclude that $x_i \neq x_k$.
  The same reasoning shows that if
  $((x_i, y_j), (x_k, y_l))$ is in the same orbit as $((x_0, y_0), (x_0, y_1))$,
  then $y_j \neq y_l$ and $x_i = x_k$.
\end{proof}

\section{Concluding Remarks}
\label{sec:conclusion}

We have given a sufficient condition for \kl{equivariant Gröbner bases} to be
computable, under natural \kl{computability assumptions}, and we have shown
that our sufficient condition is close to being optimal since the
undecidability of the \kl{equivariant ideal membership problem} can be derived
for a large class of group actions that do not satisfy our condition.
Let us now discuss some open questions and conjectures that arise from our work.

\paragraph{Total orderings on the set of indeterminates} We assumed that the
indeterminates $\Indets$ were equipped with a total ordering $\varleq$ that is
preserved by the group action. This assumption seems necessary, as the notions
of \kl{leading monomials} would cease to be well-defined without it. However,
we do not have a clear understanding of whether this assumption is vacuous or
not. Indeed, as noticed by \cite[Lemma 13]{GHOLAS24}, and
\cref{lem:reducts-equiv-hilbert}, it often suffices to extend the structures of
the indeterminates to account for a total ordering. A conjecture of Pouzet
\cite[Problems 12]{POUZ24} states that such an ordering always exists, and this
was remarked by \cite[Remark 14]{GHOLAS24}. Note that in this case, one would
get a complete characterisation of the group actions for which the
\kl{equivariant Hilbert basis property} holds \cite[Property 4]{GHOLAS24}.

\paragraph{Labelled well-quasi-orderings and dichotomy conjectures} As noted in
\Cref{sec:undecidability}, there are many conjectures relating the fact that
$(\mon[Y]{\Indets}, \gdivleq)$ is a \kl{well-quasi-ordering} (for every
\kl{well-quasi-ordered} set $Y$) and the presence of long
paths of some king (\cref{conj:wqo-infinite-path,rem:conj-wqo-infinite-path}).
In particular, Pouzet's conjecture \cite{POUZ72} would imply that for actions
arising from \kl{homogeneous structures} (as in the examples given in
\cref{sec:act ex}), \cref{thm:compute-egb} and \cref{thm:undecidable-paths} are
two sides of a dichotomy theorem: either the \kl{equivariant ideal membership
problem} is undecidable and there are \kl{equivariant ideals} that are not
\kl{orbit-finitely generated}, or every \kl{equivariant ideal} is
\kl{orbit-finitely generated} and one can compute \kl{equivariant Gröbner
bases}. Let us note that for some classes of graphs having bounded clique
width, Pouzet's conjecture is known to hold \cite{DRT10,LOPEZ24}.
This leads us to the following conjecture:

\begin{conjecture}
  For every action $\group\actson\Indets$ of a group $\group$ on a set
  of indeterminates that is \kl{effectively oligomorphic}, exactly
  one of the following holds:
  \begin{enumerate}
    \item The \kl{equivariant ideal membership problem} is decidable. 
    \item There exists an \kl{equivariant ideal} that is not
      \kl{orbit-finitely generated}.
  \end{enumerate}
\end{conjecture}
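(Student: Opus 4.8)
The plan is to deduce the dichotomy from the two main theorems of the paper, \cref{thm:compute-egb} and \cref{thm:undecidable-paths}, bridging the remaining gap with the conjectural characterisations of labelled well-quasi-ordering from \cref{sec:undecidability}. First I would record that the negation of item~(2) is exactly the equivariant Hilbert basis property, so by \cref{thm:equiv-hilbert-property} it is equivalent to $(\mon[\om]{\Indets}, \gdivleq)$ being a \kl{WQO}. The whole statement therefore reduces to showing that ``$(\mon[\om]{\Indets}, \gdivleq)$ is a \kl{WQO}'' implies decidability of the \kl{equivariant ideal membership problem}, and that its failure implies undecidability.

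For the first implication, the natural route is \cref{thm:compute-egb}, which however needs the \emph{stronger} hypothesis that $(\mon[Y]{\Indets}, \gdivleq)$ is a \kl{WQO} for every \kl{well-quasi-ordered} set $(Y, \leq)$. Closing this gap is exactly a labelled-\kl{WQO} statement: for \kl{homogeneous} structures it is Pouzet's conjecture (\cref{rem:conj-wqo-pouzet}), and in general one would want the Schmitz-type \cref{conj:wqo-infinite-path}. Granting such an input, \cref{thm:compute-egb} computes an \kl{equivariant Gröbner basis} and hence decides the \kl{equivariant ideal membership problem}, giving~(1). For the converse I would assume~(2), i.e.\ that $(\mon[\om]{\Indets}, \gdivleq)$ is not a \kl{WQO}, and try to produce an \kl(of){infinite path} in $\Indets$; \cref{thm:undecidable-paths} then immediately yields undecidability, hence $\neg$(1). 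Producing the path is the substantial step: it is precisely the ``not-\kl{WQO} $\Rightarrow$ infinite path'' direction, which follows from \cref{conj:wqo-infinite-path} for \kl{homogeneous} relational structures (\cref{rem:conj-wqo-infinite-path}) and is known unconditionally only in restricted regimes, e.g.\ for \kl{homogeneous} $3$-graphs via the WQO dichotomy theorem~\cite{LAPIO20} or for classes of bounded clique-width.

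The main obstacle, and the reason the statement is only a conjecture, is exactly this ``not-\kl{WQO} $\Rightarrow$ infinite path'' step: there is no unconditional argument forcing an \kl{effectively oligomorphic} action whose monomials are badly quasi-ordered to contain a combinatorial gadget rich enough for the Turing-machine encoding of \cref{sec:undecidability}. A secondary subtlety is the precise shape of the dichotomy: by \cref{ex:non-wqo-undecidable} the \kl{equivariant ideal membership problem} (which quantifies only over \kl{orbit-finitely generated} ideals) can be decidable while non-\kl{orbit-finitely generated} equivariant ideals still exist, so without a further restriction on $\group \actson \Indets$ the honest target is ``at least one of (1), (2) holds'' rather than ``exactly one''. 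I would therefore carry out the proof as the two-sided argument above, explicitly flagging the two conjectural inputs, and --- if one insists on the ``exactly one'' form --- restrict attention to actions arising from \kl{homogeneous} relational signatures, where the \cref{ex:non-wqo-undecidable}-style pathology is excluded and the labelled-\kl{WQO}/infinite-path equivalences are expected to hold.
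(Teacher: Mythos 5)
This is a conjecture in \cref{sec:conclusion}; the paper offers no proof, only a rationale. Your reduction strategy --- linking item~(2) to the \kl{equivariant Hilbert basis property} via \cref{thm:equiv-hilbert-property}, feeding the WQO side into \cref{thm:compute-egb} and the non-WQO side into the infinite-path undecidability of \cref{thm:undecidable-paths}, and flagging that both legs rest on Pouzet/Schmitz-type labelled-WQO statements (\cref{rem:conj-wqo-pouzet}, \cref{conj:wqo-infinite-path}, \cref{rem:conj-wqo-infinite-path}) --- matches the paper's own sketch, and you correctly single out the two genuine gaps: upgrading a single-label WQO to a universally labelled one, and extracting an \kl(of){infinite path} from a bad sequence of monomials.

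Your secondary objection, however, does not hold up. You invoke \cref{ex:non-wqo-undecidable} (the trivial group on $\{x_1, x_2, \ldots\}$) to argue that under the conjecture's hypothesis both (1) and (2) can hold simultaneously, so the honest target should be ``at least one'' rather than ``exactly one''. But that action is not \kl{effectively oligomorphic}: for the trivial group on an infinite $\Indets$, already $\Indets^1$ has infinitely many singleton orbits, so no finite transversal $A \subseteqfin \Indets^n$ as required by condition~(3) of the definition can exist. The conjecture's hypothesis therefore already excludes this pathology; the earlier caveat that ``it is not possible to obtain such a dichotomy result without further assumptions on the group action'' is answered by \kl{effective oligomorphicity} itself, and your proposed retreat to \kl{homogeneous} relational structures is unnecessary for the ``exactly one'' form. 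Note also that your own two-sided argument, granting the conjectural inputs, would already deliver ``exactly one'': combining ``$\neg(2) \Leftrightarrow$ WQO $\Rightarrow (1)$'' with ``$(2) \Leftrightarrow \neg$WQO $\Rightarrow \neg(1)$'' gives both directions of the biconditional.
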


Let us point out that a similar conjecture was already stated in the context of
Petri nets with data. Indeed, the condition that $(\mon[Y]{\Indets},\gdivleq)$
is a \kl{WQO} for every \kl{WQO} $Y$ also guarantees coverability of Petri nets
with data $\X$ is decidable \cite[Theorem 1]{Lasota16}, and it was actually
conjectured to be a necessary condition \cite[Conjecture 1]{Lasota16}. 

\paragraph{Complexity} In the present paper, we have focused on the
decidability of the \kl{equivariant ideal membership problem} and the
computability of \kl{equivariant Gröbner bases}. However, we have not addressed
the complexity of such problems, and have only adapted the most basic
algorithms for computing \kl{Gröbner bases}. It would be interesting to know,
on the theoretical side, if one can obtain complexity lower bounds for such
problems, but also on the more practical side if advanced algorithms like
Faugère's algorithm \cite{FAUGERE02} can be adapted to the equivariant setting
and yield better performance in practice.

\paragraph*{Duals of orbit-finitely generated vector spaces} An interesting
question posed by S\l{a}womir Lasota is that whether for group actions
$\group\actson\Indets$ for which $(\mon[Y]{\Indets},\gdivleq)$ is a \kl{WQO}
for every \kl{WQO} $Y$, the duals of orbit-finitely generated vector spaces are
also orbit-finitely generated. This is a well-known fact that finitely
generated vector spaces have finitely generated duals, and the extension to the
orbit-finite setting would find applications in solving orbit-finite linear
systems of equations and extend the results of \cite{GHL22}. It is already
known to hold for actions originating from the \kl{homogeneous structures}
considered in \Cref{ex:eq atoms,ex:dlo}, via the work of
\cite{BFKM24,GHL22,Prz23}.

\bibliographystyle{plainurl}
\appendix

\end{document}